\theoremstyle{definition}
\newtheorem{lemma}{Lemma}[section]
\newtheorem{proposition}{Proposition}
\newtheorem{algorithm}{Algorithm}
\author{Toshihiro Hirano}
\affil{Kanto Gakuin University}
\title{\textbf{A multi-resolution approximation via linear projection for large spatial datasets}}
\date{}
\begin{document}
\setlength{\baselineskip}{18.5pt}

\maketitle

\begin{abstract}
Recent technical advances in collecting spatial data have been increasing the demand for
 methods to analyze large spatial datasets. The statistical analysis for these types of datasets can provide useful knowledge in various fields. However, conventional spatial statistical methods, such as maximum likelihood estimation and kriging, are impractically time-consuming for large spatial datasets due to the necessary matrix inversions. To cope with this problem, we propose a multi-resolution approximation via linear projection ($M$-RA-lp). The $M$-RA-lp conducts a linear projection approach on each subregion whenever a spatial domain is subdivided, which leads to an approximated covariance function capturing both the large- and small-scale spatial variations. Moreover, we elicit the algorithms for fast computation of the log-likelihood function and predictive distribution with the approximated covariance function obtained by the $M$-RA-lp. Simulation studies and a real data analysis for air dose rates demonstrate that our proposed $M$-RA-lp works well relative to the related existing methods.

\par
\bigskip
\noindent
\textbf{Keywords:} Covariance tapering; Gaussian process; Geostatistics; Large spatial datasets; Multi-resolution approximation; Stochastic matrix approximation
\footnote[0]{E-mail: 1hirano2@kanto-gakuin.ac.jp}

\end{abstract}

\section{Introduction\label{introduction}}

Advances in Global Navigation Satellite System (GNSS) and compact sensing devices have made it easy to collect a large volume of spatial data with coordinates in various fields such as environmental science, traffic, and urban engineering. The statistical analysis for these types  of spatial datasets would assist in an evidence-based environmental policy and the efficient management of a smart city.

In spatial statistics, this type of statistical analysis, including model fitting and spatial prediction, has been conducted based on Gaussian processes \citep[see, e.g.,][]{Cressie_2011}. However, traditional spatial statistical methods, such as maximum likelihood estimation and kriging, are computationally infeasible for large spatial datasets, requiring $O(n^3)$ operations for a dataset of size $n$. This is because these methods involve the inversion of an $n \times n$ covariance matrix.

This difficulty has encouraged the development of many efficient statistical techniques for large spatial datasets. 
\cite{Heaton_2019} comprehensively reviews recent developments of these techniques. \cite{Liu_2020} is a detailed survey on current state-of-the-art scalable Gaussian processes in the machine learning literature. Efficient statistical techniques are generally categorized into four types: a sparse approach, a low rank approach, a spectral approach, and an algorithmic approach. 
The main idea of the sparse approach is to model either the covariance matrix or its inverse matrix as a sparse matrix. The former method is typically called covariance tapering \citep[][]{Furrer_2006, Kaufman_2008}. 
\cite{Du_2009}, \cite{Chu_2011}, \cite{Wang_2011}, \cite{Hirano_2013}, \cite{Stein_2013}, and \cite{Furrer_2016} discussed further statistical properties of the covariance tapering. However, the covariance tapering ignores the large-scale spatial variation. The latter one includes the approximation of the likelihood function by using products of the lower-dimensional conditional distributions \citep[e.g.,][]{Vecchia_1988, Stein_2004}, an approximation by the Gaussian Markov random field by using a particular type of stochastic partial differential equation \citep[][]{Lindgren_2011}, the representation of a field by using a multiresolution basis \citep[][]{Nychka_2015}, 
and the nearest-neighbor Gaussian process by using a directed acyclic graph \citep[][]{Datta_2016}. 

The low rank approach includes the following two techniques: fixed rank kriging \citep[][]{Cressie_2008} and predictive process \citep[][]{Banerjee_2008}. \cite{Finley_2009} corrected a bias in the predictive process, and \cite{Banerjee_2013} proposed a linear projection approach that is an extension of the predictive process and has the advantage of alleviating the complicated knot selection problem. However, the predictive process and the linear projection are effective for fitting the large-scale spatial variation,
 whereas they are disadvantageous for capturing the small-scale spatial variation. To overcome this problem, \cite{Sang_2012} and \cite{Katzfuss_2017} developed improvements of the predictive process, and \cite{Hirano_2017a} proposed a modification of the linear projection by the covariance tapering based on the idea of \cite{Sang_2012}. 

For the spectral approach, \cite{Fuentes_2007}, \cite{Matsuda_2009}, and \cite{Matsuda_2018} considered the Whittle estimation for either spatial or spatio-temporal data. The Whittle estimation requires no huge matrix inversions. \cite{Fuentes_2007}, \cite{Matsuda_2009}, and \cite{Matsuda_2018} revealed the statistical properties of the estimation by the spectral approach. \cite{Guinness_2019} developed a computationally efficient method for estimating the spectral density from incomplete gridded data based on imputing missing values.

The algorithmic approach focuses more on using schemes than model building and includes \cite{Gramacy_2015}, \cite{Gerber_2018}, and \cite{Guhaniyogi_2018}. 

In this paper, we propose a multi-resolution approximation via linear projection ($M$-RA-lp) of Gaussian processes observed at irregularly spaced locations. The $M$-RA-lp implements the linear projection on each subregion obtained by partitioning the spatial domain recursively, resulting in an approximated covariance function that captures both the large- and small-scale spatial variations unlike the covariance tapering and some low rank approaches. Additionally, we derive algorithms for fast computation of the log-likelihood function and predictive distribution with the approximated covariance function obtained by the $M$-RA-lp. Also, these algorithms can be parallelized. 
Our proposed $M$-RA-lp is regarded as a combination of the two recent low rank approaches: a modified linear projection (MLP) \citep[][]{Hirano_2017a} and a multi-resolution approximation ($M$-RA) \citep[][]{Katzfuss_2017}. The $M$-RA-lp extends the MLP by introducing multiple resolutions based on the idea of \cite{Katzfuss_2017}, leading to better approximation accuracy of the covariance function than that by the MLP. Particularly, when the variation of the spatial correlation around the origin is smooth like the Gaussian covariance function, the approximation accuracy of the covariance function by the MLP often degrades. In contrast, the $M$-RA-lp avoids this problem. 
Additionally, the $M$-RA-lp is regarded as an extension of the $M$-RA and enables not only to alleviate the knot selection problem but also to increase empirically numerical stability in specific steps of fast computation algorithms of the $M$-RA. Simulation studies and a real data analysis for air dose rates generally support the effectiveness of our proposed $M$-RA-lp in terms of  computational time, estimation of model parameters, and prediction at unobserved locations when compared with the MLP and $M$-RA.

The remainder of this paper is organized as follows. We introduce a Gaussian process model for spatial datasets in Section \ref{sec:Gaussian_process_model}. Section \ref{sec:m_ra_lp} describes our proposed $M$-RA-lp. In Section \ref{sec:inference}, we present the algorithms for fast computation of the log-likelihood function and predictive distribution. In Section \ref{sec:illustrations}, we provide the results of the simulation studies and real data analysis. Our conclusions and future studies are discussed in Section \ref{sec:conclusion_future_studies}. 
The appendices contain technical lemmas, the proof of the proposition, and the derivation and distributed computing of the algorithms.

\section{Gaussian process model for spatial datasets\label{sec:Gaussian_process_model}}

For $\bm{s} = (s_1,\ldots,s_d)^{\top} \in D_0 \subset \mathbb{R}^d$ $(d \in \mathbb{N}^{+})$, we consider the following model
\begin{align*}
Z(\bm{s}) = Y_0(\bm{s}) + \varepsilon(\bm{s}),
\end{align*}
where $Z(\bm{s})$ is a response variable observed at location $\bm{s}$. $Y_0(\bm{s}) \sim \mbox{GP}(0, C_0)$ is a zero-mean Gaussian process with a covariance function $C_0(\bm{s},\bm{s}^{*})$ ($\bm{s}$, $\bm{s}^{*} \in D_0$), which is a positive definite function. $C_{0}(\bm{s},\bm{s}^{*})$ is specified as $\sigma^2 \rho_{0} (\bm{s},\bm{s}^{*}; \bm{\theta})$ where $\sigma^2 = \mbox{Var}(Y_0(\bm{s}))$, and $\rho_{0}$ means a correlation function of $Y_0(\bm{s})$ with a parameter vector $\bm{\theta}$. For example, $\bm{\theta}$ may include a range parameter. $\varepsilon(\bm{s})$ is a zero-mean independent process following a normal distribution with a variance $\tau^2$ and expresses a measurement error that is often referred to as a nugget effect \citep[see, e.g.,][]{Cressie_1993}. It is assumed that $\{ Y_0(\bm{s}) \}$ and $\{ \varepsilon(\bm{s}) \}$ are independent. 

In what follows, for a generic Gaussian process $X(\bm{s}) \sim \mbox{GP}(0, C)$
 and sets of the vectors, that is, $A = \{ \bm{a}_1, \ldots,\bm{a}_N \}$ and $B = \{ \bm{b}_1, \ldots,\bm{b}_M \}$ ($\bm{a}_i, \bm{b}_j \in \mathbb{R}^{d'}$, $i= 1,\ldots,N,\; j=1,\ldots,M$),  we write $\bm{X}(A) = \left( X(\bm{a}_1), \ldots, X(\bm{a}_N) \right)^{\top}$ and $(C(A,B))_{ij} = C(\bm{a}_i,\bm{b}_j)$ $(i= 1,\ldots,N,\; j=1,\ldots,M)$. 

Suppose that we observe the response variable $Z(\bm{s})$ at a set of $n$ spatial locations $S_0 = \{ \bm{s}_1,\ldots,\bm{s}_n \}$.  
The observation vector is denoted by $\bm{Z}(S_0)=(Z(\bm{s}_1),\ldots,\\Z(\bm{s}_n))^{\top}$. The major goal in the spatial statistical analysis is to estimate the parameters $\Omega = (\sigma^2, \bm{\theta}, \tau^2)$ and to predict $\bm{Y}_0 \left(S_0^P \right)=(Y_0(\bm{s}^{P}_1),\ldots, Y_0(\bm{s}^{P}_{n'}))^{\top}$ at a set of $n'$ unobserved locations $S_0^P = \{ \bm{s}^{P}_1,\ldots,\bm{s}^{P}_{n'} \}$.

We adopt the maximum likelihood method to estimate the unknown parameters $\Omega$. The log-likelihood function is
\begin{align}
\label{eq:log-lik}
l(\Omega) =& - \frac{n}{2} \log(2 \pi) - \frac{1}{2} \log \left[ \mbox{det} \left\{ C_0(S_0,S_0) + \tau^2 \bm{\mbox{I}}_n \right\} \right] \notag\\
& - \frac{1}{2} \bm{Z}(S_0)^{\top}
 \left\{ C_0(S_0,S_0) + \tau^2 \bm{\mbox{I}}_n \right\}^{-1} \bm{Z}(S_0),
\end{align}
where $(C_0(S_0,S_0))_{ij} = C_{0}(\bm{s}_i,\bm{s}_j)$ $(i,j=1,\ldots,n)$ and $\bm{\mbox{I}}_n$ is an $n \times n$ identity matrix. After the parameter inference is completed, the spatial prediction is conducted by using the resulting maximum likelihood estimates. For the spatial prediction, we aim to obtain the predictive distribution
\begin{align}
\label{eq:pred_dist}
\bm{Y}_0 \left(S_0^P \right) | \bm{Z}(S_0) \sim & \mathcal{N} \left( C_0 \left(S_0^P, S_0 \right)  \left\{ C_0(S_0,S_0) + \tau^2 \bm{\mbox{I}}_n \right\}^{-1} \bm{Z}(S_0),  C_0 \left(S_0^P, S_0^P \right) \right. \notag\\
& \left. - C_0 \left(S_0^P, S_0 \right) \left\{ C_0(S_0,S_0) + \tau^2 \bm{\mbox{I}}_n \right\}^{-1} C_0 \left(S_0^P, S_0 \right)^{\top} \right).
\end{align}

\eqref{eq:log-lik} and \eqref{eq:pred_dist} involve the determinant and/or inverse matrix of the $n \times n$ matrix $C_0(S_0,S_0) + \tau^2 \bm{\mbox{I}}_n$. The inverse matrix calculation requires $O(n^3)$ operations, which causes a formidable computation when evaluating the log-likelihood function \eqref{eq:log-lik} and calculating both the mean vector and the covariance matrix in \eqref{eq:pred_dist} for large spatial datasets. 
Furthermore, \eqref{eq:log-lik} and \eqref{eq:pred_dist} require $O(n^2)$ memory, which often causes a lack of memory for large spatial datasets.

\section{Multi-resolution approximation via linear projection\label{sec:m_ra_lp}}

To address the computational burden, we propose the $M$-RA-lp. First, some notations are  defined based on \cite{Katzfuss_2017} in order to describe the $M$-RA-lp concisely. Let $m$ ($m = 0,\ldots,M$) denote a resolution. 
For $m = 0,\ldots,M$, $D_{j_1,\ldots,j_m}$ ($1 \le j_i \le J_i$, $2 \le J_i$, $i = 1,\ldots, M$) is obtained by partitioning the entire spatial domain $D_0$ and denotes a numbered subregion at the $m$th resolution. Throughout this paper, the index ($j_1 , \ldots, j_m$) and the index ($j_1 , \ldots, j_m,a$) for $m=0$ correspond to the index $0$ and the index $a$, respectively. For example, $D_{j_1,\ldots,j_m}$ for $m=0$ is $D_0$. The domain partitioning must satisfy the following assumption
\begin{align*}
D_{j_1,\ldots,j_m} = \bigcup_{j_{m+1} = 1, \ldots, J_{m+1}} D_{j_1,\ldots,j_m,j_{m+1}}, \quad 
D_{j_1,\ldots,j_m,k} \cap D_{j_1,\ldots,j_m,l} = \emptyset,&\\
 \quad 1 \le k \neq l \le J_{m+1},&
\end{align*}
for $m=0,\ldots,M-1$. This assumption implies that each subregion is recursively divided into smaller disjoint subregions while increasing the resolution. We need to prespecify $M$ and how to partition each $D_{j_1,\ldots,j_m}$ ($m=0,\ldots,M-1$). Let $S_{j_1,\ldots, j_m}$ be a subset of observed locations on $D_{j_1,\ldots,j_m}$ ($m=0,\ldots,M$). 

Hereafter, for a generic notation $X_{j_1, \ldots, j_m}$ of a set, vector, or matrix, we assume that the stacked one of $X_{j_1, \ldots, j_m}$ is arranged in ascending order by the index ($j_1, \ldots, j_m$) ($m=1,\ldots,M$). 
When comparing the number in order from the left of the index ($j_1, \ldots, j_m$), the first determined magnitude relationship is adopted as that of the index ($j_1, \ldots, j_m$). 
For example, if $\left| S_{j_1,\ldots,j_M} \right| \ge 1$ where $| \cdot |$ denotes the size of the set, we can have the recursive expression $\bm{Z}(S_{j_1,\ldots, j_m}) = ( \bm{Z}(S_{j_1,\ldots, j_m,1})^{\top}, \bm{Z}(S_{j_1,\ldots, j_m,2})^{\top}, \ldots, \bm{Z}(S_{j_1,\ldots, j_m,J_{m+1}})^{\top}  )^{\top}$ for $m=0,\ldots,M-1$.

We also need to select a set of knots on each subregion $D_{j_1,\ldots,j_m}$ which is denoted by $Q_{j_1,\ldots, j_m}$ ($m=0,\ldots,M$). 
Based on \cite{Katzfuss_2017}, it is assumed that $Q_{j_1,\ldots, j_M} = S_{j_1,\ldots, j_M}$. The set of knots at the $M$th resolution is restrictive, but we can select $Q_{j_1,\ldots, j_m}$, for example, as lattice points on $D_{j_1,\ldots,j_m}$ and a subset of $S_{j_1,\ldots, j_m}$ ($m=0,\ldots,M-1$). 
In simulation studies and the real data analysis of this paper, we select $Q_{j_1,\ldots, j_m}$ randomly from $S_{j_1,\ldots, j_m}$ ($m=0,\ldots, M-1$). Moreover, we define $Q^{(m)}=\left\{ Q_{j_1,\ldots, j_m} \middle| 1 \le j_1 \le J_1, \ldots, 1 \le j_m \le J_m \right\}$ ($m=0,\ldots,M$).

Finally, we introduce an $r_{j_1,\ldots, j_m} \times \left| Q_{j_1,\ldots, j_m} \right|$ matrix  $\Phi_{j_1,\ldots, j_m}$ ($m=0,\ldots,M-1$, $1 \le r_{j_1,\ldots, j_m}$) where $\text{rank} \left( \Phi_{j_1,\ldots, j_m} \right) = r_{j_1,\ldots, j_m}$ and its row-norm is equal to 1. $r_{j_1,\ldots, j_m}$ is much smaller than the sample size $n$ to avoid the computational burden. For $m=0,\ldots,M-1$, we define
\begin{align*}
\Phi^{(m)} = 
\begin{pmatrix}
\Phi_{1,\ldots,1} & & O \\
 & \ddots & \\
 O & & \Phi_{J_1,\ldots,J_m}
\end{pmatrix}
.
\end{align*}
The selection of $\Phi_{j_1,\ldots, j_m}$ will be discussed in Section \ref{subsec:select_phi}.

$\Phi_{j_1,\ldots, j_m}$ plays a critical role in the linear projection \citep[see][]{Banerjee_2013,Hirano_2017a}. In the linear projection, for $\bm{s} \in D_0$, we define 
\begin{align*}
\tau_0(\bm{s}) &= E[Y_0(\bm{s}) | \Phi_0 \bm{Y}_0(Q_0)] \\
&=C_0(\bm{s},Q_0) \Phi_0^{\top} \left\{ \Phi_0  C_0(Q_0,Q_0) \Phi_0^{\top} \right\}^{-1} \Phi_0 \bm{Y}_0(Q_0).
\end{align*}
Then, it follows that
\begin{align*}
C_{\tau_0}(\bm{s}_1,\bm{s}_2) &= \mbox{Cov}(\tau_0(\bm{s}_1),\tau_0(\bm{s}_2))\\
&=C_0(\bm{s}_1,Q_0) \Phi_0^{\top} \widehat{K}_{0}^0 \Phi_0 C_0(\bm{s}_2,Q_0)^{\top},
\end{align*}
where $\widehat{K}_{0}^0 = (\Phi_0  K_0^0 \Phi_0^{\top})^{-1}$ and $K_0^0 = C_0(Q_0,Q_0)$. The linear projection uses $C_{\tau_0}$ as the main approximation of $C_0$ and is identical with the predictive process in the case of $\Phi_0 =\bm{\mbox{I}}_{\left| Q_0 \right|}$. 
The simulation studies and real data analyses in \cite{Banerjee_2013} demonstrated that it achieved better performance efficiently than that of the predictive process.

\subsection{Algorithm for approximating the covariance function\label{subsec:m_ra_lp_alg}}

In our proposed $M$-RA-lp, the calculation of $C_{\tau_0}$ is regarded as the linear projection at resolution 0, and the linear projection is applied repeatedly to its approximation error at resolutions $m=1,\ldots,M-1$. 
We will state the details of our proposed algorithm for approximating the covariance function.

\begin{algorithm}[Approximation of the covariance function $C_0(\bm{s}_1^{*},\bm{s}_2^{*})$]
\label{alg:approx_cov}
Given $\bm{s}_1^{*},\bm{s}_2^{*} \in D_0$, $M \ge 0$, $D_{j_1,\ldots,j_m}$ ($m = 1,\ldots, M$, $1 \le j_1 \le J_1, \ldots, 1 \le j_M \le J_M$), $Q_{j_1,\ldots,j_m}$ ($m = 0,\ldots, M-1$, $1 \le j_1 \le J_1, \ldots, 1 \le j_{M-1} \le J_{M-1}$), and $\gamma > 0$, find the approximated covariance function $C_{M \text{-RA-lp}}(\bm{s}_1^{*},\bm{s}_2^{*})$. If $M=0$, output $C_{M \text{-RA-lp}}(\bm{s}_1^{*},\bm{s}_2^{*})=C_0(\bm{s}_1^{*},\bm{s}_2^{*})$.  Otherwise, set $m=0$ and $C_{M \text{-RA-lp}}(\bm{s}_1^{*},\bm{s}_2^{*})=0$ initially.

\bigskip
\noindent
\textit{Step} 1. 
Set $C_{M \text{-RA-lp}}(\bm{s}_1^{*},\bm{s}_2^{*}) = C_{M \text{-RA-lp}}(\bm{s}_1^{*},\bm{s}_2^{*}) + C_{\tau_0}(\bm{s}_1^{*},\bm{s}_2^{*})$.

\noindent
\textit{Step} 2. When $m+1 < M$, if $\bm{s}_1^{*}$ and $\bm{s}_2^{*}$ are in the same subregion $D_{j^{*}_1,\ldots,j^{*}_{m+1}}$, 
set $m = m+1$ and go to Step 3. When $m+1 = M$, if $\bm{s}_1^{*}$ and $\bm{s}_2^{*}$ are in the same subregion $D_{j^{*}_1,\ldots,j^{*}_{M}}$, go to Step 4. Otherwise, go to Step  5.

\noindent
\textit{Step} 3. Define
\begin{align}
\label{eq:cov_resolution_m}
C_m(\bm{s}_1,\bm{s}_2) =
\begin{cases}
C_{m-1}(\bm{s}_1,\bm{s}_2) - C_{\tau_{m-1}}(\bm{s}_1,\bm{s}_2), \quad \bm{s}_1,\bm{s}_2 \in D_{j_1,\ldots,j_m}\\
\hspace{3.5cm} (1 \le j_i \le J_i,\; i=1,\ldots,m),\\
0, \quad \mbox{otherwise}.
\end{cases}
\end{align}
Next, for $\bm{s} \in D_0$, let $\delta_m(\bm{s}) \sim \mbox{GP}(0,C_m)$ be a zero-mean Gaussian process with the degenerate covariance function $C_m$. By conducting the linear projection for $\delta_m(\bm{s})$ at the $m$th resolution, we obtain
\begin{align*}
\tau_m(\bm{s}) &= E[\delta_m(\bm{s}) | \Phi^{(m)} \bm{\delta}_m(Q^{(m)})]\\
&=C_m(\bm{s},Q^{(m)}) \Phi^{{(m)}^{\top}} \left\{ \Phi^{(m)} C_m(Q^{(m)},Q^{(m)}) \Phi^{{(m)}^{\top}} \right\}^{-1} \Phi^{(m)}\bm{\delta}_m(Q^{(m)})
\end{align*}
and
\begin{align}
\label{eq:lp_resolution_m}
C_{\tau_m}(\bm{s}_1,\bm{s}_2) &= \mbox{Cov}(\tau_m(\bm{s}_1),\tau_m(\bm{s}_2)) \notag \\
&=
\begin{cases}
C_m(\bm{s}_1,Q_{j_1,\ldots,j_m}) \Phi_{j_1,\ldots,j_m}^{\top} \widehat{K}_{j_1,\ldots,j_m}^m \Phi_{j_1,\ldots,j_m} C_m(\bm{s}_2,Q_{j_1,\ldots,j_m})^{\top}, \\
\hspace{3cm}  \bm{s}_1,\bm{s}_2 \in D_{j_1,\ldots,j_m} \; (1 \le j_i \le J_i,\; i=1,\ldots,m),\\
0, \quad \mbox{otherwise},
\end{cases}
\end{align}
where $\widehat{K}_{j_1,\ldots,j_m}^m = \left( \Phi_{j_1,\ldots,j_m} K_{j_1,\ldots,j_m}^m  \Phi_{j_1,\ldots,j_m}^{\top} \right)^{-1}$ and $K_{j_1,\ldots,j_m}^m = C_m(Q_{j_1,\ldots,j_m},\\
Q_{j_1,\ldots,j_m})$. Set $C_{M \text{-RA-lp}}(\bm{s}_1^{*},\bm{s}_2^{*}) = C_{M \text{-RA-lp}}(\bm{s}_1^{*},\bm{s}_2^{*}) + C_{\tau_m}(\bm{s}_1^{*},\bm{s}_2^{*})$ and go to Step 2.

\noindent
\textit{Step} 4. Define 
\begin{align}
\label{eq:cov_resolution_M}
C_M(\bm{s}_1,\bm{s}_2) =
\begin{cases}
C_{M-1}(\bm{s}_1,\bm{s}_2) - C_{\tau_{M-1}}(\bm{s}_1,\bm{s}_2), \quad \bm{s}_1,\bm{s}_2 \in D_{j_1,\ldots,j_M} \\ 
\hspace{3.5cm} (1 \le j_i \le J_i,\; i=1,\ldots,M),\\
0, \quad \mbox{otherwise}.
\end{cases}
\end{align}
Set $C_{M \text{-RA-lp}}(\bm{s}_1^{*},\bm{s}_2^{*}) = C_{M \text{-RA-lp}}(\bm{s}_1^{*},\bm{s}_2^{*}) + C_{M}(\bm{s}_1^{*},\bm{s}_2^{*}) T_{\gamma}(\bm{s}_1^{*},\bm{s}_2^{*})$ where $T_{\gamma}(\bm{s}_1,\bm{s}_2)$ ($\gamma > 0$, $\bm{s}_1,\bm{s}_2 \in D_0$) is a compactly supported correlation function with $T_{\gamma}(\bm{s}_1,\bm{s}_2) = 0$ for $\| \bm{s}_1 - \bm{s}_2 \| \ge \gamma$, and $\| \cdot \|$ denotes the Euclidean norm.

\noindent
\textit{Step} 5. Output $C_{M \text{-RA-lp}}(\bm{s}_1^{*},\bm{s}_2^{*})$. 
\end{algorithm}

Step 3 represents the linear projection at the $m$th resolution. In order to derive the fast computation algorithms in Section \ref{sec:inference}, $C_m(\bm{s}_1,\bm{s}_2)$ is defined as 0 if $\bm{s}_1$ and $\bm{s}_2$ do not belong to the same subregion at the $m$th resolution for $m=1,\ldots,M$. For the same reason, we introduce $T_{\gamma}$ in Step 4. Some compactly supported correlation functions have been developed \citep[see, e.g.,][]{Wendland_1995,Gneiting_2002,Bevilacqua_2019}. Examples of these types of functions include the spherical covariance function
\begin{align*}
T_{\gamma}(\bm{s}_1,\bm{s}_2) = \left(1-\frac{ \| \bm{s}_1-\bm{s}_2 \| }{\gamma} \right)^2_{+} \left( 1 + \frac{\| \bm{s}_1-\bm{s}_2 \|}{2\gamma} \right)
\end{align*}
and the Wendland$_2$ taper function \citep[see][]{Wendland_1995,Furrer_2006}:
\begin{align*}
T_{\gamma}(\bm{s}_1,\bm{s}_2) = \left(1-\frac{\| \bm{s}_1-\bm{s}_2 \|}{\gamma} \right)^6_{+} \left(1+\frac{6 \| \bm{s}_1-\bm{s}_2 \|}{\gamma}+\frac{35 \| \bm{s}_1-\bm{s}_2 \|^2}{3\gamma^2} \right).
\end{align*}
For simplicity, we use the spherical covariance function as $T_{\gamma}$ in this paper.

If $\left| S_{j_1,\ldots,j_M} \right| \ge 1$, the covariance matrix by using $C_{M \text{-RA-lp}}$ defined by Algorithm \ref{alg:approx_cov} is
\begin{align}
\label{eq:cov_mat_m-ra-lp}
C_{M \text{-RA-lp}}&(S_0,S_0) = \sum_{m=0}^{M-1} C_{\tau_{m}}(S_0,S_0) + C_M(S_0,S_0) \circ T_{\gamma}(S_0,S_0) \notag \\
=& B_0^0 \Phi_0^{\top} \widehat{K}_{0}^0 \Phi_0 B_0^{{0}^{\top}} \notag \\ 
&+
\scalebox{1.0}{$
\begin{pmatrix}
B_1^1 & & O \\
 & \ddots & \\
 O & & B_{J_1}^1
\end{pmatrix}
\begin{pmatrix}
\Phi_1^{\top} & & O \\
 & \ddots & \\
 O & & \Phi_{J_1}^{\top}
\end{pmatrix}
\begin{pmatrix}
\widehat{K}_{1}^1 & & O \\
 & \ddots & \\
 O & & \widehat{K}_{J_1}^1
\end{pmatrix}
\begin{pmatrix}
\Phi_1 & & O \\
 & \ddots & \\
 O & & \Phi_{J_1}
\end{pmatrix}
$}\notag
\\
&\times
\scalebox{1.0}{$
\begin{pmatrix}
B_1^{1^{\top}} & & O \\
 & \ddots & \\
 O & & B_{J_1}^{1^{\top}}
\end{pmatrix}
$} \notag \\
&\;\; \vdots \notag \\
&+
\scalebox{0.85}{$
\begin{pmatrix}
B_{1,\ldots,1}^{M-1} & & O \\
 & \ddots & \\
 O & & B_{J_1,\ldots,J_{M-1}}^{M-1}
\end{pmatrix}
\begin{pmatrix}
\Phi_{1,\ldots,1}^{\top} & & O \\
 & \ddots & \\
 O & & \Phi_{J_1,\ldots,J_{M-1}}^{\top}
\end{pmatrix}
\begin{pmatrix}
\widehat{K}_{1,\ldots,1}^{M-1} & & O \\
 & \ddots & \\
 O & & \widehat{K}_{J_1,\ldots,J_{M-1}}^{M-1}
\end{pmatrix}
$}\notag
\\
&\times
\scalebox{0.85}{$
\begin{pmatrix}
\Phi_{1,\ldots,1} & & O \\
 & \ddots & \\
 O & & \Phi_{J_1,\ldots,J_{M-1}}
\end{pmatrix}
\begin{pmatrix}
B_{1,\ldots,1}^{{M-1}^{\top}} & & O \\
 & \ddots & \\
 O & & B_{J_1,\ldots,J_{M-1}}^{{M-1}^{\top}}
\end{pmatrix}
$} \notag \\
&+
\scalebox{1.0}{$
\begin{pmatrix}
C_M (S_{1,\ldots,1},S_{1,\ldots,1}) & & O\\
 & \ddots & \\
 O & & C_M (S_{J_1,\ldots,J_M},S_{J_1,\ldots,J_M})
\end{pmatrix}
\circ
T_{\gamma}(S_0,S_0),
$}
\end{align}
where $B_{j_1,\ldots,j_m}^m = C_m(S_{j_1,\ldots,j_m},Q_{j_1,\ldots,j_m})$ ($m=0,\ldots,M-1$), 
and the symbol ``$\circ$'' refers to the Hadamard product. For $m=0,\ldots,M-1$, the $(m+1)$st term in \eqref{eq:cov_mat_m-ra-lp} corresponds to the linear projection at the $m$th resolution. We observed that the linear projection at the higher resolution improved the approximation of the original covariance function on smaller and smaller scales.  Consequently, the overlap between the covariance tapering at the highest resolution and the effect of iterative approximation in the $M$-RA-lp can occur. 
By selecting low $M$, we may be able to bypass the redundant overlap. 
Moreover, for large $M$, the approximated covariance functions up to resolution $m$ ($m=0,\ldots,M$) in Algorithm \ref{alg:approx_cov}, that is, the summation up to the $(m+1)$st term in \eqref{eq:cov_mat_m-ra-lp}, are often almost unchanged at high resolutions. This fact provides us with suggestions on selecting an appropriate $M$. 


The following proposition proves the theoretical properties associated with Algorithm \ref{alg:approx_cov}. Note that the case of $M=0$ in the following proposition is excluded because the validity of Algorithm \ref{alg:approx_cov} is clear from $C_{M \text{-RA-lp}}=C_0$ when $M=0$.

\begin{proposition}
\label{prop:m-ra-lp}
Given $M \ge 1$, $D_{j_1,\ldots,j_m}$ ($m = 1,\ldots, M$, $1 \le j_1 \le J_1, \ldots, 1 \le j_M \le J_M$), $Q_{j_1,\ldots,j_m}$ ($m = 0,\ldots, M-1$, $1 \le j_1 \le J_1, \ldots, 1 \le j_{M-1} \le J_{M-1}$), and $\gamma > 0$, suppose that $\Phi_{j_1,\ldots,j_m}$, which satisfies $R(\Phi_{j_1,\ldots,j_m}^{\top}) \cap R(K_{j_1,\ldots,j_m}^m)^{\perp} = \{ \bm{0} \}$ ($m=1,\ldots,M-1$) if $M \ge 2$, is selected where $R(\cdot)$ means the column space of $\cdot$, the symbol ``$\perp$'' refers to the orthogonal complement, and $\bm{0}$ is the zero vector.
\par
\noindent
(a) For $m=1,\ldots,M$, $C_m$ is a positive semidefinite function.
\par
\noindent
(b) For $m=0,\ldots,M-1$, $\Phi_{j_1,\ldots,j_m} K_{j_1,\ldots,j_m}^m \Phi_{j_1,\ldots,j_m}^{\top}$ is positive definite.
\par
\noindent
(c) $C_{M \text{-RA-lp}}$ is a positive semidefinite function.
\par
\noindent
(d) If $\bm{s}_1 = \bm{s}_2$, then $C_{M \text{-RA-lp}}(\bm{s}_1,\bm{s}_2)=C_0(\bm{s}_1,\bm{s}_2)$.
\end{proposition}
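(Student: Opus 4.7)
The plan is to establish (a) and (b) simultaneously by induction on the resolution $m$, working upward from $m=0$, and then to deduce (c) and (d) as essentially immediate consequences. The base case is (b) at $m=0$: since $C_0$ is positive definite, $K_0^0 = C_0(Q_0,Q_0)$ is positive definite, so any $\bm{v}$ satisfying $\bm{v}^\top \Phi_0 K_0^0 \Phi_0^\top \bm{v} = 0$ must have $\Phi_0^\top \bm{v} = \bm{0}$; the full row rank of $\Phi_0$ then forces $\bm{v} = \bm{0}$.

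For the inductive step I would use the residual-covariance interpretation of $C_{m+1}$. Assuming (b) holds at resolution $m$, the matrix $\widehat{K}^m_{j_1,\ldots,j_m}$ is well defined, so $\tau_m$ is a bona fide linear predictor. Crucially, because $C_m$ vanishes across distinct level-$m$ subregions, both $\Phi^{(m)}$ and $C_m(Q^{(m)},Q^{(m)})$ are block diagonal, which forces $\tau_m(\bm{s})$ for $\bm{s} \in D_{j_1,\ldots,j_m}$ to depend only on $\bm{\delta}_m(Q_{j_1,\ldots,j_m})$. On each $D_{j_1,\ldots,j_{m+1}} \subset D_{j_1,\ldots,j_m}$ the expression $C_m - C_{\tau_m} = C_{m+1}$ is therefore the covariance of the least-squares residual process $\delta_m - \tau_m$, and is positive semidefinite. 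Combined with the vanishing of $C_{m+1}$ across distinct level-$(m+1)$ subregions, this yields (a) at $m+1$. For (b) at $m+1$ (whenever $m+1 \le M-1$), if $\bm{v}^\top \Phi_{j_1,\ldots,j_{m+1}} K^{m+1}_{j_1,\ldots,j_{m+1}} \Phi_{j_1,\ldots,j_{m+1}}^\top \bm{v} = 0$, then $\Phi_{j_1,\ldots,j_{m+1}}^\top \bm{v}$ lies in $\ker(K^{m+1}_{j_1,\ldots,j_{m+1}}) = R(K^{m+1}_{j_1,\ldots,j_{m+1}})^\perp$, and the assumed range-intersection condition forces $\Phi_{j_1,\ldots,j_{m+1}}^\top \bm{v} = \bm{0}$; the full row rank of $\Phi_{j_1,\ldots,j_{m+1}}$ then gives $\bm{v} = \bm{0}$.

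Part (c) follows by summing: each $C_{\tau_m}$ is a covariance function and hence positive semidefinite, $C_M$ is positive semidefinite by (a), $T_\gamma$ is a valid correlation function, and the Schur product theorem then yields positive semidefiniteness of $C_M \cdot T_\gamma$. For (d), evaluating Algorithm \ref{alg:approx_cov} at $\bm{s}_1 = \bm{s}_2 = \bm{s}$ puts the two inputs in the same subregion at every resolution, so the output equals $\sum_{m=0}^{M-1} C_{\tau_m}(\bm{s},\bm{s}) + C_M(\bm{s},\bm{s}) T_\gamma(\bm{s},\bm{s})$. Since $T_\gamma(\bm{s},\bm{s}) = 1$ and the recursive definition yields $C_m(\bm{s},\bm{s}) = C_{\tau_m}(\bm{s},\bm{s}) + C_{m+1}(\bm{s},\bm{s})$, the sum telescopes to $C_0(\bm{s},\bm{s})$.

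The main technical hurdle I anticipate is making the subregion-wise residual argument in (a) rigorous. The key issue is that the block-diagonal form of $\Phi^{(m)}$ and $C_m(Q^{(m)},Q^{(m)})$ must be shown to decouple the projection across level-$m$ subregions, so that $\delta_m - \tau_m$ behaves as a local residual process on each subregion. Establishing this decoupling carefully, and using it to identify $C_{m+1}$ with a genuine residual covariance on each subregion, is where most of the work lies; once that is in place, (c) and (d) reduce to standard covariance-function bookkeeping.
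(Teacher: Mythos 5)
Your proposal is correct and follows essentially the same route as the paper: an induction across resolutions in which positive semidefiniteness of $C_{m+1}$ comes from identifying $C_m - C_{\tau_m}$ with a Gaussian conditional (residual) covariance and splitting the quadratic form over subregions, positive definiteness of $\Phi_{j_1,\ldots,j_m} K^m_{j_1,\ldots,j_m}\Phi_{j_1,\ldots,j_m}^{\top}$ from the range-intersection condition plus full row rank (the paper invokes a rank lemma of Puntanen--Styan where you argue it directly), the Schur product theorem for (c), and telescoping for (d). The decoupling of the global projection into per-subregion projections that you flag as the main hurdle is exactly the block-diagonal identity already built into Step 3 of Algorithm \ref{alg:approx_cov} (equation \eqref{eq:lp_resolution_m}), so the paper's proof uses it in the same implicit way and no additional work is needed.
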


For example, if the normalized vectors selected from $R(K_{j_1,\ldots,j_m}^m)$ can be used as the linearly independent column vectors of $\Phi_{j_1,\ldots,j_m}^{\top}$ 
in Step 3 of Algorithm \ref{alg:approx_cov}, the assumption of Proposition \ref{prop:m-ra-lp} is satisfied. For $m=1,\ldots,M-1$, if $K_{j_1,\ldots,j_m}^m = O$, this assumption does not hold  because of the definition of $\Phi_{j_1,\ldots,j_m}$. 
Proposition \ref{prop:m-ra-lp} (b) guarantees the existence of inverse matrices in Step 3 of Algorithm \ref{alg:approx_cov}. In Section \ref{sec:inference}, we propose the two fast computation algorithms of the log-likelihood function and predictive distribution defined by replacing $C_0$ in \eqref{eq:log-lik} and \eqref{eq:pred_dist} with $C_{M \text{-RA-lp}}$. 
Proposition \ref{prop:m-ra-lp} (c) guarantees the existence of the inverse matrix of $C_{M \text{-RA-lp}}(S_0,S_0) + \tau^2 \bm{\mbox{I}}_n$ appearing in this replacement. 
Propositions \ref{prop:m-ra-lp} (a) and (c) are the linear projection versions of the results in the proof of Proposition 1 of \cite{Katzfuss_2017}. 
Proposition \ref{prop:m-ra-lp} (d) states that 
Algorithm \ref{alg:approx_cov} completely recovers the variance of the original Gaussian process.

\subsection{Selection of $\Phi_{j_1,\ldots, j_m}$\label{subsec:select_phi}}

We will discuss how to select $\Phi_{j_1,\ldots, j_m}$ in the linear projection at each resolution based on the argument of Section 3.1 of \cite{Banerjee_2013}. 
Now, we consider the case of $\Phi_{j_1,\ldots, j_m}^{\top} = U_{j_1,\ldots, j_m}^{(r_{j_1,\ldots, j_m})}$ where $U_{j_1,\ldots, j_m}^{(r_{j_1,\ldots, j_m})}$ is a $\left| Q_{j_1,\ldots, j_m} \right| \times r_{j_1,\ldots, j_m}$ matrix whose $i$th column vector is the eigenvector corresponding to the $i$th eigenvalue of the positive semidefinite matrix $K_{j_1,\ldots,j_m}^m$ in descending order of magnitude ($i = 1, \ldots, r_{j_1,\ldots, j_m}$). Suppose that $r_{j_1,\ldots, j_m} < \mbox{rank}(K_{j_1,\ldots,j_m}^m)$ is satisfied. Since $R(\Phi_{j_1,\ldots, j_m}^{\top}) = R \left( U_{j_1,\ldots, j_m}^{(r_{j_1,\ldots, j_m})} \right) \subset R(K_{j_1,\ldots,j_m}^m)$ and $R(K_{j_1,\ldots,j_m}^m) \cap R(K_{j_1,\ldots,j_m}^m)^{\perp} = \{ \bm{0} \}$, it follows that $R(\Phi_{j_1,\ldots,j_m}^{\top}) \cap R(K_{j_1,\ldots,j_m}^m)^{\perp} = \{ \bm{0} \}$. In addition, from Schmidt's approximation theorem \citep[see][]{Stewart_1993, Puntanen_2011}, 
$C_{\tau_m}(Q_{j_1,\ldots, j_m},Q_{j_1,\ldots, j_m}) = K_{j_1,\ldots,j_m}^m \Phi_{j_1,\ldots, j_m}^{\top} \widehat{K}_{j_1,\ldots,j_m}^m \Phi_{j_1,\ldots, j_m} K_{j_1,\ldots,j_m}^m$ is the best rank-$r_{j_1,\ldots, j_m}$ approximation of $C_m(Q_{j_1,\ldots, j_m},Q_{j_1,\ldots, j_m})=K_{j_1,\ldots,j_m}^m$ in the sense of the Frobenius norm for matrices. Therefore, one reasonable selection is $\Phi_{j_1,\ldots, j_m}^{\top} = U_{j_1,\ldots, j_m}^{(r_{j_1,\ldots, j_m})}$, but the derivation of eigenvalues and eigenvectors of $K_{j_1,\ldots,j_m}^m$ involves  $O\left(\left| Q_{j_1,\ldots, j_m} \right|^3 \right)$ computations \citep[][]{Golub_2012}.

To address this problem, \cite{Banerjee_2013} used a stochastic matrix approximation technique to find $\Phi_{0}$ in the linear projection at resolution 0 on the basis of Algorithm 4.2 of \cite{Halko_2011}. \cite{Banerjee_2013} and \cite{Hirano_2017a} demonstrated its effectiveness in practice through simulation studies and real data analyses. Thus, in this paper, we implement this technique at each resolution, which enables us to obtain $\Phi_{j_1,\ldots, j_m}$ efficiently. However, whether the selected $\Phi_{j_1,\ldots, j_m}$ satisfies the assumption of Proposition \ref{prop:m-ra-lp} rigorously is a future study.

The following algorithm corresponds to Algorithm 2 of \cite{Banerjee_2013} at each resolution.

\begin{algorithm}[Selection of $\Phi_{j_1,\ldots, j_m}$ \citep{Halko_2011, Banerjee_2013}]
\label{alg:stochastic_matrix_approx}
Given $K_{j_1,\ldots,j_m}^m$, a target error $\varepsilon > 0$, and $c \in \mathbb{N}^+$, find the $r_{j_1,\ldots, j_m} \times \left| Q_{j_1,\ldots, j_m} \right|$ matrix $\Phi_{j_1,\ldots, j_m}$ for $m=0,\ldots,M-1$. The selected $\Phi_{j_1,\ldots, j_m}$ satisfies $\| K_{j_1,\ldots,j_m}^m - \Phi_{j_1,\ldots, j_m}^{\top} \Phi_{j_1,\ldots, j_m} K_{j_1,\ldots,j_m}^m \|_F < \varepsilon$ with probability $1- \left| Q_{j_1,\ldots, j_m} \right|/10^c$ where $\| \cdot \|_F$ denotes the Frobenius norm for matrices.

\bigskip
\noindent
\textit{Step} 1. Initially, set $j = 0$ and $\Phi_{j_1,\ldots, j_m}^{(0)}=[ \; ]$, which is the $0 \times \left| Q_{j_1,\ldots, j_m} \right|$ empty matrix.

\noindent
\textit{Step} 2. Draw $c$ length-$\left| Q_{j_1,\ldots, j_m} \right|$ random vectors $\bm{\omega}^{(1)},\ldots,\bm{\omega}^{(c)}$ with independent entries from ${\cal N}(0,1)$.

\noindent
\textit{Step} 3. Calculate $\bm{\kappa}^{(i)} = K_{j_1,\ldots,j_m}^m \bm{\omega}^{(i)}$ for $i = 1,\ldots,c$.

\noindent
\textit{Step} 4. Check whether $\max_{i = 1,\ldots,c}(\| \bm{\kappa}^{(i+j)} \|) < \{ (\pi/2)^{1/2} \varepsilon \}/10$. If it holds, go to Step 11. Otherwise, go to Step 5.

\noindent
\textit{Step} 5. Set $ j = j + 1$. Recalculate $\bm{\kappa}^{(j)} = \left( \bm{\mbox{I}}_{\left| Q_{j_1,\ldots, j_m} \right|} -  \Phi_{j_1,\ldots, j_m}^{(j-1)^{\top}}   \Phi_{j_1,\ldots, j_m}^{(j-1)} \right) \bm{\kappa}^{(j)}$ and $\bm{\phi}^{(j)} = \bm{\kappa}^{(j)}/\| \bm{\kappa}^{(j)} \|$.

\noindent
\textit{Step} 6. Set $\Phi_{j_1,\ldots, j_m}^{(j)} = \left[  \Phi_{j_1,\ldots, j_m}^{(j-1)^{\top}} \;\;\; \bm{\phi}^{(j)} \right]^{\top}$, which stands for the concatenation of the matrix and row vector. 

\noindent
\textit{Step} 7. Draw a length-$\left| Q_{j_1,\ldots, j_m} \right|$ random vector $\bm{\omega}^{(j + c)}$ with independent entries from ${\cal N}(0,1)$.

\noindent
\textit{Step} 8. Calculate $\bm{\kappa}^{(j+c)} = \left( \bm{\mbox{I}}_{\left| Q_{j_1,\ldots, j_m} \right|} - \Phi_{j_1,\ldots, j_m}^{(j)^{\top}} \Phi_{j_1,\ldots, j_m}^{(j)} \right) K_{j_1,\ldots,j_m}^m \bm{\omega}^{(j + c)}$.

\noindent
\textit{Step} 9. Recalculate $\bm{\kappa}^{(i)} = \bm{\kappa}^{(i)} - \bm{\phi}^{(j)} 
\left(\bm{\phi}^{(j)^{\top}} \bm{\kappa}^{(i)} \right)$ for $i = j+1, \ldots, j+c-1$.

\noindent
\textit{Step} 10. Go back to the target error check in Step 4.

\noindent
\textit{Step} 11. If $j = 0$, output $\Phi_{j_1,\ldots, j_m} = \left( \bm{\kappa}^{(1)}/\| \bm{\kappa}^{(1)} \| \right)^{\top}$. Otherwise, output $\Phi_{j_1,\ldots, j_m} = \Phi_{j_1,\ldots, j_m}^{(j)}$.
\end{algorithm}

From $U_{j_1,\ldots, j_m}^{(r_{j_1,\ldots, j_m})} U_{j_1,\ldots, j_m}^{{(r_{j_1,\ldots, j_m})}^{\top}} K_{j_1,\ldots,j_m}^m = K_{j_1,\ldots,j_m}^m U_{j_1,\ldots, j_m}^{(r_{j_1,\ldots, j_m})} \left( U_{j_1,\ldots, j_m}^{{(r_{j_1,\ldots, j_m})}^{\top}} K_{j_1,\ldots,j_m}^m U_{j_1,\ldots, j_m}^{(r_{j_1,\ldots, j_m})} \right)^{-1} \\
\times U_{j_1,\ldots, j_m}^{{(r_{j_1,\ldots, j_m})}^{\top}} K_{j_1,\ldots,j_m}^m$, Algorithm \ref{alg:stochastic_matrix_approx} aims to select the appropriate $\Phi_{j_1,\ldots, j_m}$ by diminishing $\| K_{j_1,\ldots,j_m}^m - \Phi_{j_1,\ldots, j_m}^{\top} \Phi_{j_1,\ldots, j_m} K_{j_1,\ldots,j_m}^m \|_F$ for any target error level. 
The projection of Step 5 is introduced in order to ensure better numerical stability 
\citep[see][]{Halko_2011}. In our implementation of Algorithm \ref{alg:stochastic_matrix_approx}, we used $c$ such that $\left| Q_{j_1,\ldots, j_m} \right|/10^c \approx 0.1$.

\section{Inference\label{sec:inference}}

In this section, we propose the two algorithms to conduct fast computation of \eqref{eq:log-lik} and \eqref{eq:pred_dist} where $C_0$ is replaced with $C_{M \text{-RA-lp}}$ defined by Algorithm \ref{alg:approx_cov}.  Consequently, just by using the subsequent two algorithms, we can conduct the likelihood-based inference on the parameters $\Omega$ and obtain the spatial predictive distribution. 
In what follows, it is assumed that $\left| S_{j_1,\ldots,j_M} \right| \ge 1$ for simplicity.

\subsection{Parameter estimation\label{subsec:parameter_estimation}}

The log-likelihood function replaced by the approximated covariance function $C_{M \text{-RA-lp}}$ instead of $C_0$ is given by
\begin{align}
\label{eq:log-lik_m-ra-lp}
- \frac{n}{2} \log(2 \pi) - \frac{1}{2} \log \left[ \mbox{det} \left\{ C_{M \text{-RA-lp}}(S_0,S_0) + \tau^2 \bm{\mbox{I}}_n \right\} \right] \notag \\
- \frac{1}{2} \bm{Z}(S_0)^{\top}
 \left\{ C_{M \text{-RA-lp}}(S_0,S_0) + \tau^2 \bm{\mbox{I}}_n \right\}^{-1} \bm{Z}(S_0).
\end{align}
We will elicit the algorithm to calculate \eqref{eq:log-lik_m-ra-lp} efficiently 
in accordance with the arguments in Sections 3.1--3.3 of \cite{Katzfuss_2017}. For $m = 0,\ldots,M-1$, we define
\begin{align}
\Sigma_{j_1,\ldots,j_m} &= B_{j_1,\ldots,j_m}^m \Phi_{j_1,\ldots,j_m}^{\top} \widehat{K}_{j_1,\ldots,j_m}^m \Phi_{j_1,\ldots,j_m} B_{j_1,\ldots,j_m}^{m^{\top}} + V_{j_1,\ldots,j_m}, \label{eq:def_Sigma_m}\\
V_{j_1,\ldots,j_m} &= 
\begin{pmatrix}
\Sigma_{j_1,\ldots,j_m,1} & & O \\
 & \ddots & \\
 O & & \Sigma_{j_1,\ldots,j_m,J_{m+1}}
\end{pmatrix}
,\label{eq:def_V_m}
\end{align}
where $\Sigma_{j_1,\ldots,j_M}=K_{j_1,\ldots,j_M}^M \circ T_{\gamma}(S_{j_1,\ldots,j_M},S_{j_1,\ldots,j_M}) + \tau^2 \bm{\mbox{I}}_{\left| S_{j_1,\ldots, j_M}\right|}$ and $K_{j_1,\ldots,j_M}^M = C_M(S_{j_1,\ldots,j_M},S_{j_1,\ldots,j_M})\\ = C_M(Q_{j_1,\ldots,j_M},Q_{j_1,\ldots,j_M})$ because $Q_{j_1,\ldots,j_M} = S_{j_1,\ldots,j_M}$. 
From \eqref{eq:cov_mat_m-ra-lp}, \eqref{eq:def_Sigma_m}, and \eqref{eq:def_V_m}, it follows that $\Sigma_0 = C_{M \text{-RA-lp}}(S_0,S_0) + \tau^2 \bm{\mbox{I}}_n$. Moreover, we introduce some comprehensive definitions $K_{j_1,\ldots,j_m}^k = C_k(Q_{j_1,\ldots,j_k},Q_{j_1,\ldots,j_m})$ ($0 \le k \le m,\; m=0,\ldots,M$) and $B_{j_1,\ldots,j_m}^k = C_k(S_{j_1,\ldots,j_m},Q_{j_1,\ldots,j_k})$ ($0 \le k \le m$, $m=0,\ldots,M$).

Next, we describe the algorithm to calculate efficiently the approximated log-likelihood function \eqref{eq:log-lik_m-ra-lp} (see Appendix \ref{subappend:algorithm_log-lik} for the derivation of the algorithm).

\begin{algorithm}[Efficient computation of the approximated log-likelihood function \eqref{eq:log-lik_m-ra-lp}]
\label{alg:log-lik_m-ra-lp}
Given $M > 0$, $D_{j_1,\ldots,j_m}$ ($m = 1,\ldots, M$, $1 \le j_1 \le J_1, \ldots, 1 \le j_M \le J_M$), $Q_{j_1,\ldots,j_m}$ ($m = 0,\ldots, M-1$, $1 \le j_1 \le J_1, \ldots, 1 \le j_{M-1} \le J_{M-1}$), and $\gamma > 0$, find $d_0 = \log \left[ \mbox{det} \left\{ C_{M \text{-RA-lp}}(S_0,S_0) + \tau^2 \bm{\mbox{I}}_n \right\} \right]$ and $u_0 = \bm{Z}(S_0)^{\top} \left\{ C_{M \text{-RA-lp}}(S_0,S_0) + \tau^2 \bm{\mbox{I}}_n \right\}^{-1} \bm{Z}(S_0)$.

\bigskip
\noindent
\textit{Step} 1. For $0 \le k \le m$, $m = 0,\ldots,M$, it follows that
\begin{align}
\label{eq:K_expansion}
K_{j_1,\ldots,j_m}^k =
\begin{cases}
C_0(Q_0,Q_{j_1,\ldots,j_m}), \quad k=0,\\
C_0(Q_{j_1,\ldots,j_k},Q_{j_1,\ldots,j_m}) \\
-\sum_{l=0}^{k-1} K_{j_1,\ldots,j_k}^{l^{\top}} \Phi_{j_1,\ldots,j_l}^{\top}
\widehat{K}_{j_1,\ldots,j_l}^l \Phi_{j_1,\ldots,j_l} K_{j_1,\ldots,j_m}^l, \quad
1 \le k \le m.
\end{cases}
\end{align}
Calculate $K_{j_1,\ldots,j_m}^m$ ($m=0,\ldots,M$), $B_{j_1,\ldots,j_M}^k = K_{j_1,\ldots,j_M}^{k^{\top}}$ ($k=0,\ldots,M-1$), and $\Phi_{j_1,\ldots,j_m}$ ($m=0,\ldots,M-1$) by starting with $K_{j_1,\ldots,j_m}^0$ ($m = 0,\ldots,M$) as the initial matrix.
 $\Phi_{j_1,\ldots,j_m}$ is obtained by applying Algorithm \ref{alg:stochastic_matrix_approx} to $K_{j_1,\ldots,j_m}^m$.

\noindent
\textit{Step} 2. Calculate $\widetilde{A}_{j_1,\ldots,j_M}^{k,l} = B_{j_1,\ldots,j_M}^{k^{\top}} \Sigma_{j_1,\ldots,j_M}^{-1} B_{j_1,\ldots,j_M}^l$ ($0 \le k \le l < M$).

\noindent
\textit{Step} 3. For $0 \le k \le l \le m$, $m = 0,\ldots,M-1$, we have
\begin{align}
\label{eq:A_recursive}
A_{j_1,\ldots,j_m}^{k,l} = \sum_{j_{m+1}=1}^{J_{m+1}} \widetilde{A}_{j_1,\ldots,j_m,j_{m+1}}^{k,l}.
\end{align}
Now, we define $\widetilde{K}_{j_1,\ldots,j_m}^m = \left( \widehat{K}_{j_1,\ldots,j_m}^{m^{-1}} + \Phi_{j_1,\ldots,j_m} A_{j_1,\ldots,j_m}^{m,m} \Phi_{j_1,\ldots,j_m}^{\top} \right)^{-1}$ for $m=0,\ldots,M-1$. For $0 \le k \le l < m$, $m = 1,\ldots,M-1$, it follows that
\begin{align}
\label{eq:tilA_recursive}
\widetilde{A}_{j_1,\ldots,j_m}^{k,l} = A_{j_1,\ldots,j_m}^{k,l} - A_{j_1,\ldots,j_m}^{k,m} \Phi_{j_1,\ldots,j_m}^{\top} \widetilde{K}_{j_1,\ldots,j_m}^m \Phi_{j_1,\ldots,j_m} A_{j_1,\ldots,j_m}^{m,l},
\end{align}
where $A_{j_1,\ldots,j_m}^{m,l} = A_{j_1,\ldots,j_m}^{{l,m}^{\top}}$. 
Calculate $A_{j_1,\ldots,j_m}^{k,l}$ ($0 \le k \le l \le m$, $m = M-1,\ldots,0$) by using \eqref{eq:A_recursive} and \eqref{eq:tilA_recursive} alternately from $\widetilde{A}_{j_1,\ldots,j_M}^{k,l}$ of Step 2 as the initial matrix.

\noindent
\textit{Step} 4. Calculate $\widetilde{\omega}_{j_1,\ldots,j_M}^{k} = B_{j_1,\ldots,j_M}^{k^{\top}} \Sigma_{j_1,\ldots,j_M}^{-1} \bm{Z}(S_{j_1,\ldots,j_M})$ ($0 \le k < M$).

\noindent
\textit{Step} 5. For $0 \le k \le m$, $m = 0,\ldots,M-1$, we have
\begin{align}
\label{eq:omega_recursive}
\omega_{j_1,\ldots,j_m}^k = \sum_{j_{m+1}=1}^{J_{m+1}} \widetilde{\omega}_{j_1,\ldots,j_m,j_{m+1}}^k.
\end{align}
For $0 \le k < m$, $m = 1,\ldots,M-1$, it follows that
\begin{align}
\label{eq:tilomega_recursive}
\widetilde{\omega}_{j_1,\ldots,j_m}^k = \omega_{j_1,\ldots,j_m}^k - A_{j_1,\ldots,j_m}^{k,m} \Phi_{j_1,\ldots,j_m}^{\top} \widetilde{K}_{j_1,\ldots,j_m}^m \Phi_{j_1,\ldots,j_m} \omega_{j_1,\ldots,j_m}^m.
\end{align}
Calculate $\omega_{j_1,\ldots,j_m}^m$ ($m = M-1,\ldots,0$) 
by using \eqref{eq:omega_recursive} and \eqref{eq:tilomega_recursive} alternately from 
$\widetilde{\omega}_{j_1,\ldots,j_M}^{k}$ of Step 4 as the initial vector.

\noindent
\textit{Step} 6. Calculate $d_{j_1,\ldots,j_M} = \log \left\{ \mbox{det} \left( \Sigma_{j_1,\ldots,j_M} \right) \right\}$ and $u_{j_1,\ldots,j_M} = \bm{Z}(S_{j_1,\ldots,j_M})^{\top} \Sigma_{j_1,\ldots,j_M}^{-1} \bm{Z}(S_{j_1,\ldots,j_M})$.

\noindent
\textit{Step} 7. For $m = 0,\ldots,M-1$, it follows that
\begin{align}
\label{eq:d_recursive}
d_{j_1,\ldots,j_m} = &-\log \left\{ \mbox{det} \left( \widetilde{K}_{j_1,\ldots,j_m}^m \right) \right\} + \log \left\{ \mbox{det} \left( \widehat{K}_{j_1,\ldots,j_m}^m \right) \right\} + \sum_{j_{m+1}=1}^{J_{m+1}} d_{j_1,\ldots,j_m,j_{m+1}},\\
\label{eq:u_recursive}
u_{j_1,\ldots,j_m} = &- \omega_{j_1,\ldots,j_m}^{m^{\top}} \Phi_{j_1,\ldots,j_m}^{\top} \widetilde{K}_{j_1,\ldots,j_m}^m \Phi_{j_1,\ldots,j_m} \omega_{j_1,\ldots,j_m}^m + \sum_{j_{m+1}=1}^{J_{m+1}} u_{j_1,\ldots,j_m,j_{m+1}}.
\end{align}
Calculate $d_0$ and $u_0$ by using \eqref{eq:d_recursive} and \eqref{eq:u_recursive} recursively from $d_{j_1,\ldots,j_M}$ and $u_{j_1,\ldots,j_M}$ of Step 6 as the initial values, respectively.

\noindent
\textit{Step} 8. Output $d_0$ and $u_0$.
\end{algorithm}

Indeed, \eqref{eq:log-lik_m-ra-lp} is evaluated by using only Algorithm \ref{alg:log-lik_m-ra-lp} without Algorithm \ref{alg:approx_cov}. In Steps 1--6 of Algorithm \ref{alg:log-lik_m-ra-lp}, we calculate the matrices required to obtain $d_0$ and $u_0$ from $d_{j_1,\ldots,j_M}$ and $u_{j_1,\ldots,j_M}$. Note that if $M=1$, then \eqref{eq:tilA_recursive} and \eqref{eq:tilomega_recursive} are not calculated. Also, all of $A_{j_1,\ldots,j_m}^{k,l}$ ($0 \le k \le l \le m$, $m = M-1,\ldots,0$) calculated in Step 3 are not necessarily used in the subsequent steps.

Algorithm \ref{alg:log-lik_m-ra-lp} does not include the inverse of an $n \times n$ matrix. There are the inverse and determinant of the $\left| S_{j_1,\ldots,j_M} \right| \times \left| S_{j_1,\ldots,j_M} \right|$ sparse matrices $\Sigma_{j_1,\ldots,j_M}$ and $r_{j_1,\ldots, j_m} \times r_{j_1,\ldots, j_m}$ matrices ($r_{j_1,\ldots, j_m} \ll n$), and we can calculate them efficiently. 

In order to discuss the operation count and storage of Algorithm \ref{alg:log-lik_m-ra-lp}, we assume for simplicity that $J_i = J$, $\left| Q_{j_1,\ldots,j_m} \right| = r$ ($m=0,\ldots,M-1$), $r_{j_1,\ldots, j_m}=O(r)$ ($m=0,\ldots,M-1$), and $\left| S_{j_1,\ldots,j_M} \right| = n/(J^M) > r$ only in the discussion on the time and memory complexity. When $J$, $M$, and $r$ are large, the main computational efforts are the calculations of $\widetilde{A}_{j_1,\ldots,j_m}^{k,l}$ ($0 \le k \le l < m$, $1 \le j_m \le J$, $m = 1,\ldots,M-1$) and $K^k_{j_1,\ldots,j_m}$ ($0 \le k \le m$, $m=0,\ldots,M-1$, $1 \le j_i \le J$, $i=1,\ldots,M-1$). From $\sum_{m=0}^{M-1} J^m < J^M$, they are $O(J^M M^2 r^3)$ for the operation count. Similarly, the computational burden of Algorithm \ref{alg:stochastic_matrix_approx} also increases because we need to implement Algorithm \ref{alg:stochastic_matrix_approx} $O(J^M)$ times. Algorithm \ref{alg:stochastic_matrix_approx} uses the $r \times r$ matrix $K^m_{j_1,\ldots,j_m}$ ($m=0,\ldots,M-1$) as the input matrix, and its operation count does not depend on $n$. 
The fourth simulation of Section \ref{subsec:simulation_study} indicates that we can rapidly implement Algorithm \ref{alg:stochastic_matrix_approx} $O(J^M)$ times by selecting small $r$ and $r_{j_1,\ldots, j_m}$ even for large $J$ and $M$. When $n$ is large, the computational bottleneck of Algorithm \ref{alg:log-lik_m-ra-lp} is in obtaining $K^k_{j_1,\ldots,j_M}$ ($0 \le k \le M$, $1 \le j_i \le J$, $i=1,\ldots,M$) which is $O \left(n^3 M^2 / \left(J^{2M} \right) \right)$ for the operation count. This means that large $J$ and $M$ make the computation of the matrices $K^k_{j_1,\ldots,j_M}$ fast. In addition, calculations related to the inverse of the sparse matrix $\Sigma_{j_1,\ldots,j_M}$ could also be the computational bottleneck if the sparsity of $\Sigma_{j_1,\ldots,j_M}$ is insufficient. It is difficult to evaluate the exact  computational cost of the Cholesky decomposition of the sparse matrix because it depends on the number of non-zero elements and on the ordering of locations. 
However, its resulting time complexity can be less than $O(n^3)$ \citep[see Section 3.3 of][for details]{Furrer_2006}. 
Large $J$ and $M$ lead to the small size of $\Sigma_{j_1,\ldots,j_M}$ and $J^M$ Cholesky decompositions of the sparse matrix $\Sigma_{j_1,\ldots,j_M}$. Through some simulations, we observed that large $J$ and $M$ usually reduced  the total computation time related to the inverse of sparse matrix. 
The unignorable bottlenecks of the memory consumption of Algorithm \ref{alg:log-lik_m-ra-lp} are $B^k_{j_1,\ldots,j_M}$ ($0 \le k \le M-1$, $1 \le j_i \le J$, $i=1,\ldots,M$) and $K^M_{j_1,\ldots,j_M}$ ($1 \le j_i \le J$, $i=1,\ldots,M$) which are $O(nMr)$ and $O \left(n^2/\left(J^{M} \right) \right)$ for the storage, respectively. Furthermore, the memory complexities including $J^M$ as the product do not depend on $n$. 
Also, the memory consumption in Algorithm \ref{alg:stochastic_matrix_approx} is independent of $n$, and the sparse matrix $\Sigma_{j_1,\ldots,j_M}$ requires at most $O(n^2/\left(J^{2M} \right))$ memory. 
Thus, Algorithm \ref{alg:log-lik_m-ra-lp} can avoid  $O(n^3)$ operations and $O(n^2)$ memory.

Finally, Algorithm \ref{alg:log-lik_m-ra-lp} can be parallelized except for Step 1. We introduce a parallel version of Algorithm \ref{alg:log-lik_m-ra-lp} (see Algorithm \ref{alg:log-lik_m-ra-lp_parallel} in Appendix  \ref{subappend:parallel_version_alg_log-lik}).

\subsection{Spatial prediction\label{subsec:spatial_prediction}}

Similar to Section \ref{subsec:parameter_estimation}, we will propose the algorithm for fast computation of the predictive distribution replaced by the approximated covariance function $C_{M \text{-RA-lp}}$ instead of $C_0$. Let $S_{j_1,\ldots,j_M}^P$ denote the set of the unobserved locations on $D_{j_1,\ldots,j_M}$. Additionally, we define $B_{j_1,\ldots,j_M}^{l,P} = C_l (S_{j_1,\ldots,j_M}^P, Q_{j_1,\ldots,j_l})$ ($0 \le l \le M$). The approximated predictive distribution for $\bm{Y}_0(S_{j_1,\ldots,j_M}^P)$ given $\bm{Z}(S_0)$ is the normal distribution with the mean vector
\begin{align}
\label{eq:pred_dist_m-ra-lp_mean}
C_{M \text{-RA-lp}} \left(S_{j_1,\ldots,j_M}^P, S_0 \right)  \left\{ C_{M \text{-RA-lp}}(S_0,S_0) + \tau^2 \bm{\mbox{I}}_n \right\}^{-1} \bm{Z}(S_0)
\end{align}
and covariance matrix
\begin{align}
\label{eq:pred_dist_m-ra-lp_matrix}
C_{M \text{-RA-lp}} \left(S_{j_1,\ldots,j_M}^P, S_{j_1,\ldots,j_M}^P \right) - C_{M \text{-RA-lp}} \left(S_{j_1,\ldots,j_M}^P, S_0 \right) \notag \\
\times \left\{ C_{M \text{-RA-lp}}(S_0,S_0) + \tau^2 \bm{\mbox{I}}_n \right\}^{-1} C_{M \text{-RA-lp}} \left(S_{j_1,\ldots,j_M}^P, S_0 \right)^{\top}.
\end{align}

The following algorithm allows us to calculate efficiently \eqref{eq:pred_dist_m-ra-lp_mean} and \eqref{eq:pred_dist_m-ra-lp_matrix} (see Appendix \ref{subappend:algorithm_pred_dist} for the derivation of the algorithm). Note that the index $(j_1,\ldots,j_M)$ of $S_{j_1,\ldots,j_M}^P$ is fixed.

\begin{algorithm}[Efficient computation of \eqref{eq:pred_dist_m-ra-lp_mean} and \eqref{eq:pred_dist_m-ra-lp_matrix}]
\label{alg:pred_dist_m-ra-lp}
Given $M > 0$, $D_{j^{\prime}_1,\ldots,j^{\prime}_m}$ ($m = 1,\ldots, M$, $1 \le j^{\prime}_1 \le J_1, \ldots, 1 \le j^{\prime}_M \le J_M$), $Q_{j^{\prime}_1,\ldots,j^{\prime}_m}$ ($m = 0,\ldots, M-1$, $1 \le j^{\prime}_1 \le J_1, \ldots, 1 \le j^{\prime}_{M-1} \le J_{M-1}$), $\gamma > 0$, and $S_{j_1,\ldots,j_M}^P$, find $\bm{\mu}_{j_1,\ldots,j_M}^0 = C_{M \text{-RA-lp}}(S_{j_1,\ldots,j_M}^P,S_0) \\
\times \left\{ C_{M \text{-RA-lp}}(S_0,S_0) + \tau^2 \bm{\mbox{I}}_n \right\}^{-1} \bm{Z}(S_0)$  and $\Psi_{j_1,\ldots,j_M}^0 =C_{M \text{-RA-lp}}(S_{j_1,\ldots,j_M}^P,S_{j_1,\ldots,j_M}^P)\\
-C_{M \text{-RA-lp}}(S_{j_1,\ldots,j_M}^P,S_0) \left\{ C_{M \text{-RA-lp}}(S_0,S_0) + \tau^2 \bm{\mbox{I}}_n \right\}^{-1} C_{M \text{-RA-lp}}(S_{j_1,\ldots,j_M}^P,S_0)^{\top}$.

\bigskip
\noindent
\textit{Step} 1. Conduct Steps 1--3 in Algorithm \ref{alg:log-lik_m-ra-lp}. Moreover, if $M \ge 2$, calculate also $K_{j_1,\ldots,j_l}^k$ ($0 \le k \le l-1$, $1 \le l \le M-1$) for the fixed $(j_1,\ldots,j_M)$.

\par
\noindent
\textit{Step} 2. For $0 \le l \le M$, it follows that
\begin{align*}
B_{j_1,\ldots,j_M}^{l,P} =
\begin{cases}
C_0(S_{j_1,\ldots,j_M}^P, Q_0), \quad l=0, \\
C_0(S_{j_1,\ldots,j_M}^P,Q_{j_1,\ldots,j_l})\\
-\sum_{k=0}^{l-1} B_{j_1,\ldots,j_M}^{k, P} \Phi_{j_1,\ldots,j_k}^{\top}
\widehat{K}_{j_1,\ldots,j_k}^k \Phi_{j_1,\ldots,j_k} K_{j_1,\ldots,j_l}^k, \quad 1 \le l \le M.
\end{cases}
\end{align*}
Calculate $B_{j_1,\ldots,j_M}^{l,P}$ ($l=0,\ldots,M$) by starting with $B_{j_1,\ldots,j_M}^{0,P}$ as the initial matrix. Furthermore, calculate 
\begin{align*}
L_{j_1,\ldots,j_M}^M =& B_{j_1,\ldots,j_M}^{M,P} \circ T_{\gamma}(S_{j_1,\ldots,j_M}^P,S_{j_1,\ldots,j_M}),\\
C_M(S_{j_1,\ldots,j_M}^P, S_{j_1,\ldots,j_M}^P) =& C_0(S_{j_1,\ldots,j_M}^P,S_{j_1,\ldots,j_M}^P)\\
&-\sum_{k=0}^{M-1} B_{j_1,\ldots,j_M}^{k, P} \Phi_{j_1,\ldots,j_k}^{\top} \widehat{K}_{j_1,\ldots,j_k}^k \Phi_{j_1,\ldots,j_k} B_{j_1,\ldots,j_M}^{{k, P}^{\top}},\\
V_{j_1,\ldots,j_M}^{M, P} =& C_M(S_{j_1,\ldots,j_M}^P, S_{j_1,\ldots,j_M}^P) \circ T_{\gamma}(S_{j_1,\ldots,j_M}^P,S_{j_1,\ldots,j_M}^P).
\end{align*}

\noindent
\textit{Step} 3. Calculate $\widetilde{B}_{j_1,\ldots,j_M}^{M, k} = B_{j_1,\ldots,j_M}^{k, P} - L_{j_1,\ldots,j_M}^M \Sigma_{j_1,\ldots,j_M}^{-1} B_{j_1,\ldots,j_M}^{k}$ ($0 \le k < M$).

\noindent
\textit{Step} 4. For $0 \le k<l < M$, it follows that
\begin{align}
\label{eq:tilB_recursive}
\widetilde{B}_{j_1,\ldots,j_M}^{l, k} = \widetilde{B}_{j_1,\ldots,j_M}^{l+1, k} - \widetilde{B}_{j_1,\ldots,j_M}^{l+1, l} \Phi_{j_1,\ldots,j_l}^{\top} \widetilde{K}_{j_1,\ldots,j_l}^l \Phi_{j_1,\ldots,j_l} A_{j_1,\ldots,j_l}^{l,k},
\end{align}
where $A_{j_1,\ldots,j_l}^{l,k} = A_{j_1,\ldots,j_l}^{{k,l}^{\top}}$. Calculate $\widetilde{B}_{j_1,\ldots,j_M}^{l, k}$ ($0 \le k<l < M$) by using \eqref{eq:tilB_recursive} recursively from $\widetilde{B}_{j_1,\ldots,j_M}^{M, k}$ in Step 3 as the initial matrix.

\noindent
\textit{Step} 5. Calculate
\begin{align*}
\Psi_{j_1,\ldots,j_M}^0 =& V_{j_1,\ldots,j_M}^{M, P} 
- L_{j_1,\ldots,j_M}^M \Sigma_{j_1,\ldots,j_M}^{-1} L_{j_1,\ldots,j_M}^{M^{\top}}\\ 
&+ \sum_{k=0}^{M-1} \widetilde{B}_{j_1,\ldots,j_M}^{k+1, k} \Phi_{j_1,\ldots,j_k}^{\top} \widetilde{K}_{j_1,\ldots,j_k}^k \Phi_{j_1,\ldots,j_k} \widetilde{B}_{j_1,\ldots,j_M}^{{k+1, k}^{\top}}.
\end{align*}

\noindent
\textit{Step} 6. Conduct the procedure of Steps 4 and 5 in Algorithm \ref{alg:log-lik_m-ra-lp} and calculate $\omega_{j_1,\ldots,j_m}^m$ ($m = M-1,\ldots,0$) for the fixed $(j_1,\ldots,j_M)$.

\noindent
\textit{Step} 7. Calculate
\begin{align*}
\bm{\mu}_{j_1,\ldots,j_M}^0 =& L_{j_1,\ldots,j_M}^M \Sigma_{j_1,\ldots,j_M}^{-1} \bm{Z}(S_{j_1,\ldots,j_M})\\
&+ \sum_{k=0}^{M-1} \widetilde{B}_{j_1,\ldots,j_M}^{k+1, k} \Phi_{j_1,\ldots,j_k}^{\top} \widetilde{K}_{j_1,\ldots,j_k}^k \Phi_{j_1,\ldots,j_k} \omega_{j_1,\ldots,j_k}^k.
\end{align*}

\noindent
\textit{Step} 8. Output $\bm{\mu}_{j_1,\ldots,j_M}^0$ and $\Psi_{j_1,\ldots,j_M}^0$.
\end{algorithm}

Therefore, we can obtain \eqref{eq:pred_dist_m-ra-lp_mean} and \eqref{eq:pred_dist_m-ra-lp_matrix} from only Algorithm \ref{alg:pred_dist_m-ra-lp} without Algorithm \ref{alg:approx_cov}.
In steps except for Steps 5, 7, and 8 of Algorithm \ref{alg:pred_dist_m-ra-lp}, we calculate the matrices required to obtain $\bm{\mu}_{j_1,\ldots,j_M}^0$ and $\Psi_{j_1,\ldots,j_M}^0$. Note that if $M=1$, then \eqref{eq:tilB_recursive} and calculations corresponding to \eqref{eq:tilA_recursive} and \eqref{eq:tilomega_recursive} of Algorithm \ref{alg:log-lik_m-ra-lp} are unnecessary. Similar to Algorithm \ref{alg:log-lik_m-ra-lp}, all of $A_{j_1,\ldots,j_m}^{k,l}$ ($0 \le k \le l \le m$, $m = M-1,\ldots,0$) calculated in Step 1 of Algorithm \ref{alg:pred_dist_m-ra-lp} are not necessarily used in the subsequent steps.

We can efficiently implement Algorithm \ref{alg:pred_dist_m-ra-lp} because this algorithm includes not the inverse of an $n \times n$ matrix but the inverse of the $\left| S_{j_1,\ldots,j_M} \right| \times \left| S_{j_1,\ldots,j_M} \right|$ sparse matrices $\Sigma_{j_1,\ldots,j_M}$ and $r_{j_1,\ldots, j_m} \times r_{j_1,\ldots, j_m}$ matrices ($r_{j_1,\ldots, j_m} \ll n$), similar to Algorithm \ref{alg:log-lik_m-ra-lp}. 

For the operation count and storage of Algorithm \ref{alg:pred_dist_m-ra-lp}, we assume $\left| S_{j_1,\ldots,j_M}^P \right|=O(r)$ as well as the assumptions required in the derivation of the time and memory complexity of Algorithm \ref{alg:log-lik_m-ra-lp}. Note that these assumptions are available only in the discussion on the complexities. By an argument similar to the case of Algorithm \ref{alg:log-lik_m-ra-lp}, for the fixed index $(j_1,\ldots,j_M)$ of $S_{j_1,\ldots,j_M}^P$, we find that Algorithm \ref{alg:pred_dist_m-ra-lp} does not require $O(n^3)$ operations and $O(n^2)$ memory. 
Consequently, the $M$-RA-lp can handle massive spatial datasets such that the original model is computationally infeasible as shown in the fourth simulation of Section \ref{subsec:simulation_study}.

Similar to Algorithm \ref{alg:log-lik_m-ra-lp}, Algorithm \ref{alg:pred_dist_m-ra-lp} can also be parallelized except for a part of Steps 1 and 2. In particular, Algorithm \ref{alg:pred_dist_m-ra-lp} provides $\bm{\mu}_{j_1,\ldots,j_M}^0$ and $\Psi_{j_1,\ldots,j_M}^0$ only for the fixed index $(j_1,\ldots,j_M)$, while its parallel version can calculate those for any $(j_1,\ldots,j_M)$ in parallel (see Algorithm \ref{alg:pred_dist_m-ra-lp_parallel} in Appendix \ref{subappend:parallel_version_alg_pred_dist}).

\subsection{Relationship with the MLP and $M$-RA\label{subsec:relation}}

First, we compare the $M$-RA-lp to the MLP proposed by \cite{Hirano_2017a}. From the viewpoint of the $M$-RA-lp, the MLP carries out the $M$-RA-lp with $M=1$, $Q_0 = S_0$, and $J_1=1$ ($D_0 =D_1$) in Algorithm \ref{alg:approx_cov}. 
In this sense, we regard the $M$-RA-lp as an extension of the MLP. The approximated covariance matrix by the MLP is given by
\begin{align}
\label{eq:cov_mlp}
C_{\text{MLP}}(S_0,S_0) &=  C_{\tau_0}(S_0,S_0) + \left\{ C_0(S_0,S_0) - C_{\tau_0}(S_0,S_0)  \right\} \circ T_{\gamma}(S_0,S_0) \notag \\
&=B_0^0 \Phi_0^{\top} \widehat{K}_{0}^0 \Phi_0 B_0^{{0}^{\top}} + C_{\text{sparse}}(S_0,S_0),
\end{align}
where $C_{\text{sparse}}(S_0,S_0) = \left\{ C_0(S_0,S_0) - C_{\tau_0}(S_0,S_0) \right\} \circ T_{\gamma}(S_0,S_0)$ is the sparse matrix, and $\Phi_0$ is obtained by using Algorithm \ref{alg:stochastic_matrix_approx}. Since $C_0(S_0,S_0)$ in \eqref{eq:log-lik} and \eqref{eq:pred_dist} is replaced with $C_{\text{MLP}}(S_0,S_0)$ in the estimation and prediction by the MLP, the inverse matrix and determinant of $C_{\text{MLP}}(S_0,S_0) + \tau^2 \bm{\mbox{I}}_n$ 
need to be calculated efficiently. As described in \cite{Hirano_2017a}, their calculations are achieved by using
\begin{align*}
&\left\{ C_{\text{MLP}}(S_0,S_0) + \tau^2 \bm{\mbox{I}}_n \right\}^{-1} = \left\{ C_{\text{sparse}}(S_0,S_0)  + \tau^2 \bm{\mbox{I}}_n \right\}^{-1} \\ 
&- \left\{ C_{\text{sparse}}(S_0,S_0)  + \tau^2 \bm{\mbox{I}}_n \right\}^{-1}  B_0^0 \Phi_0^{\top} \left[ \widehat{K}_{0}^{0^{-1}} + \Phi_0 B_0^{{0}^{\top}} \left\{ C_{\text{sparse}}(S_0,S_0)  + \tau^2 \bm{\mbox{I}}_n \right\}^{-1} \right. \\ 
&\times \left. B_0^0 \Phi_0^{\top} \right]^{-1} \Phi_0 B_0^{{0}^{\top}} \left\{ C_{\text{sparse}}(S_0,S_0)  + \tau^2 \bm{\mbox{I}}_n \right\}^{-1} 
\end{align*}
and
\begin{align*}
&\mbox{det} \left\{ C_{\text{MLP}}(S_0,S_0) + \tau^2 \bm{\mbox{I}}_n \right\} = \mbox{det} \left\{ C_{\text{sparse}}(S_0,S_0) + \tau^2 \bm{\mbox{I}}_n \right\} \mbox{det} \left( \widehat{K}_{0}^0 \right) \notag \\
 &\times \mbox{det} \left[ \widehat{K}_{0}^{0^{-1}} + \Phi_0 B_0^{{0}^{\top}} \left\{ C_{\text{sparse}}(S_0,S_0)  + \tau^2 \bm{\mbox{I}}_n \right\}^{-1} B_0^0 \Phi_0^{\top}  \right].
\end{align*}
These expansions are derived from Theorems 18.1.1 and 18.2.8 of \cite{Harville_1997}, and we can treat them rapidly because they contain only the inverse matrix and determinant of the $n \times n$ sparse matrix $C_{\text{sparse}}(S_0,S_0)  + \tau^2 \bm{\mbox{I}}_n$ and $r_0 \times r_0$ matrices. 

In comparison with \eqref{eq:cov_mat_m-ra-lp}, \eqref{eq:cov_mlp} includes only the linear projection term at resolution 0, whereas the $M$-RA-lp has the additional linear projection terms at higher resolutions. Since the linear projection at resolution 0 focuses on fitting the large-scale dependence structure, a modification is required to capture the small-scale spatial variations \citep[see][]{Hirano_2017a}. 
The second term in \eqref{eq:cov_mlp} shows the modification of the linear projection through  the covariance tapering. Although the modification in the MLP is conducted on the whole region $D_0$, this type of the modification in the $M$-RA-lp corresponds to the last term in \eqref{eq:cov_mat_m-ra-lp} and is conducted on each subregion $D_{j_1,\ldots,j_M}$ to elicit Algorithms \ref{alg:log-lik_m-ra-lp} and \ref{alg:pred_dist_m-ra-lp}. However, the overall modification of the $M$-RA-lp by adding the linear projection terms at higher resolutions can  more accurately approximate the small-scale dependence structure than that of the MLP (see Figure \ref{fig:cov_comparison} for details).

Second, we explain the relationship between the $M$-RA-lp and the $M$-RA. In Algorithms \ref{alg:approx_cov}, \ref{alg:log-lik_m-ra-lp}, and \ref{alg:pred_dist_m-ra-lp}, if we set $\Phi_{j_1,\ldots, j_m} = \bm{\mbox{I}}_{\left| Q_{j_1,\ldots, j_m} \right|}$ and $T_{\gamma}(\bm{s}_1,\bm{s}_2) = 1$, the $M$-RA-lp is identical with the $M$-RA. In this sense, the $M$-RA-lp is regarded as an extension of the $M$-RA. Unlike Proposition \ref{prop:m-ra-lp} (d), the approximated covariance function by the $M$-RA equals the original covariance function if the two locations belong to $D_{j_1,\ldots,j_M}$ because $T_{\gamma}(\bm{s}_1,\bm{s}_2) = 1$ \citep[see Section 2.4.5 of][]{Katzfuss_2017}. However, from $\Phi_{j_1,\ldots, j_m} = \bm{\mbox{I}}_{\left| Q_{j_1,\ldots, j_m} \right|}$, it might be necessary to pay attention to the knot selection. 

Furthermore, the introduction of $\Phi_{j_1,\ldots, j_m}$ can yield the stable numerical calculation. In several steps of Algorithms \ref{alg:log-lik_m-ra-lp} and \ref{alg:pred_dist_m-ra-lp}, we conduct the calculation related to the inverse matrix of $\widehat{K}_{j_1,\ldots,j_m}^{{m}^{-1}} = \Phi_{j_1,\ldots,j_m} K_{j_1,\ldots,j_m}^m  \Phi_{j_1,\ldots,j_m}^{\top}$ ($m=0,\ldots,M-1$). If a positive definite matrix is ill-conditioned, the calculation of the inverse matrix may be unstable with the propagation of round-off errors due to the finite precision arithmetic. How well the positive definite  matrix is conditioned can be evaluated by the condition number $\sigma_l/\sigma_s$ which means the ratio of the largest $\sigma_l$ and the smallest $\sigma_s$ eigenvalues of the positive definite matrix \citep[see][]{Dixon_1983}. The condition number closer to 1 indicates better numerical stability. 
The following simulation is similar to the one in Section 3.2 of \cite{Banerjee_2013} and empirically shows the smaller condition number of $\widehat{K}_{j_1,\ldots,j_m}^{{m}^{-1}}$ in the $M$-RA-lp over the $M$-RA.

We consider $M=5$ and the two covariance functions, that is, the exponential covariance function $C_0(\bm{s}_1,\bm{s}_2)=\exp(-5 \| \bm{s}_1 - \bm{s}_2 \|)$ and the Gaussian covariance function $C_0(\bm{s}_1,\bm{s}_2)=\exp(-2.5 (\| \bm{s}_1 - \bm{s}_2 \|/10^2)^2)$, generate locations in $[0,100]^2$ uniformly, and evaluate the average value of the logarithmic transformation of the condition numbers of $\widehat{K}_{1,\ldots,1}^{{m}^{-1}}$ ($m=0,\ldots,4$) for the ten datasets. Each domain is divided into two equal subregions, that is, $J_1=\cdots=J_5=2$. In the $M$-RA-lp, the sizes of $Q_{j_1,\ldots,j_m}$ were 300, 100, 50, 30, and  20 for $m=0,\ldots,4$, respectively, and we selected $\Phi_{j_1,\ldots,j_m}$ such that Algorithm \ref{alg:stochastic_matrix_approx} almost achieved some target values of the rank.

\begin{table}[t]
\caption{Comparative performance of the $M$-RA and $M$-RA-lp with respect to the logarithmic transformation of the condition number in $C_0(\bm{s}_1,\bm{s}_2)=\exp(-5 \| \bm{s}_1 - \bm{s}_2 \|)$.}
\label{tab:comparison_cn_exp}
\centering
\vspace{4pt}
\scalebox{0.77}{
\begin{tabular}{lllccccc} \toprule
$n$ & Approximation & Rank & $\widehat{K}_{0}^{{0}^{-1}}$ & $\widehat{K}_{1}^{{1}^{-1}}$ & $\widehat{K}_{1,1}^{{2}^{-1}}$ & $\widehat{K}_{1,1,1}^{{3}^{-1}}$ & $\widehat{K}_{1,1,1,1}^{{4}^{-1}}$ \\\midrule
5,000 & $M$-RA & 5 & 2.7457$\times 10^{-12}$ & 4.1968$\times 10^{-8}$ & 2.3786$\times 10^{-4}$ & 4.7041$\times 10^{-4}$ & 0.8624 \\
 &  & 10 & 0.0035 & 1.7480$\times 10^{-4}$ & 0.0040 & 1.8454 & 3.4786 \\
 &  & 15 & 2.3175$\times 10^{-4}$ & 0.0048 & 4.5554$\times 10^{-4}$ & 2.8131 & 4.4041 \\
 & $M$-RA-lp & 5 & 4.2507$\times 10^{-4}$ & 0.0031 & 0.0168 & 0.0154 & 0.0691 \\
 &  & 10 & 0.0248 & 0.0760 & 0.0391 & 0.0604 & 0.2241 \\
 &  & 15 & 0.0592 & 0.1244 & 0.1417 & 0.2650 & 0.5392 \\
10,000 & $M$-RA & 5 & 1.0819$\times 10^{-10}$ & 4.0632$\times 10^{-11}$ & 7.8879$\times 10^{-7}$ & 0.0440 & 8.0198$\times 10^{-5}$ \\
 &  & 10 & 0.0033 & 6.6363$\times 10^{-7}$ & 9.3172$\times 10^{-4}$ & 1.8893 & 3.5504 \\
 &  & 15 & 4.5924$\times 10^{-6}$ & 3.2506$\times 10^{-9}$ & 0.0038 & 0.0999 & 7.1757 \\
 & $M$-RA-lp & 5 & 0.0305 & 0.0069 & 0.0380 & 0.0261 & 0.0097 \\
 &  & 10 & 0.0545 & 0.0772 & 0.0366 & 0.0595 & 0.1512 \\
 &  & 15 & 0.1070 & 0.0879 & 0.1072 & 0.1564 & 0.2817 \\\bottomrule
\end{tabular}
}
\end{table}

\begin{table}[t]
\caption{Comparative performance of the $M$-RA and $M$-RA-lp with respect to the logarithmic transformation of the condition number in $C_0(\bm{s}_1,\bm{s}_2)=\exp(-2.5 (\| \bm{s}_1 - \bm{s}_2 \|/10^2)^2)$.}
\label{tab:comparison_cn_gauss}
\centering
\vspace{4pt}
\begin{tabular}{lllccccc} \toprule
$n$ & Approximation & Rank & $\widehat{K}_{0}^{{0}^{-1}}$ & $\widehat{K}_{1}^{{1}^{-1}}$ & $\widehat{K}_{1,1}^{{2}^{-1}}$ & $\widehat{K}_{1,1,1}^{{3}^{-1}}$ & $\widehat{K}_{1,1,1,1}^{{4}^{-1}}$ \\\midrule
5,000 & $M$-RA & 5 & 4.7199 & 7.6432 & 8.2238 & 9.2433 & 9.4813 \\
 &  & 10 & 9.8629 & 12.0785 & 15.9587 & 16.4546 & 14.3746 \\
 &  & 15 & 12.6046 & 16.9184 & 18.7201 & 15.4269 & 16.7445 \\
 & $M$-RA-lp & 5 & 3.2443 & 2.6734 & 2.7241 & 3.4194 & 4.2966 \\
 &  & 10 & 5.8569 & 4.8034 & 6.8916 & 7.2595 & 8.0247 \\
 &  & 15 & 8.2005 & 6.9981 & 7.6940 & 10.2207 & 12.2585 \\
10,000 & $M$-RA & 5 & 5.3286 & 7.3821 & 7.8050 & 9.3527 & 10.6376 \\
 &  & 10 & 10.6290 & 12.4498 & 14.0487 & 14.2829 & 13.4687 \\
 &  & 15 & 14.0862 & 17.2333 & 18.7836 & 15.6605 & 17.5230 \\
 & $M$-RA-lp & 5 & 4.3762 & 3.4216 & 3.5816 & 4.8419 & 4.8598 \\
 &  & 10 & 6.6606 & 5.9712 & 4.0011 & 7.5025 & 7.9609 \\
 &  & 15 & 7.5037 & 7.9213 & 6.6500 & 9.9056 & 11.8865 \\\bottomrule
\end{tabular}
\end{table}

Tables \ref{tab:comparison_cn_exp} and \ref{tab:comparison_cn_gauss} illustrate the comparison of the condition numbers between the $M$-RA and the $M$-RA-lp. Rank in Tables \ref{tab:comparison_cn_exp} and \ref{tab:comparison_cn_gauss} means the target value of the rank of $\Phi_{j_1,\ldots,j_m}$ in the $M$-RA-lp and the size of $Q_{j_1,\ldots,j_m}$ in the $M$-RA. 
From Tables \ref{tab:comparison_cn_exp} and \ref{tab:comparison_cn_gauss}, as the resolution and/or rank increased, the condition number tended to increase. Moreover, the smoothness of $C_0(\bm{s}_1,\bm{s}_2)=\exp(-2.5 (\| \bm{s}_1 - \bm{s}_2 \|/10^2)^2)$ caused the larger condition numbers as a whole. The condition numbers of the $M$-RA-lp were holistically smaller than those of the $M$-RA in similar situations. This may be because the $M$-RA-lp replaces the predictive process in the $M$-RA with the linear projection. Section 3.2 of \cite{Banerjee_2013} empirically showed the smaller condition number of the covariance matrix approximated by the linear projection than that by the predictive process. We also obtained similar results for different types of partitions and covariance functions, but these are not reported here.

Unlike the $M$-RA, the $M$-RA-lp needs to implement Algorithm  \ref{alg:stochastic_matrix_approx} for each subregion except for the one at the highest resolution. 
Although Algorithm \ref{alg:stochastic_matrix_approx} is implemented quickly, large $M$ and $J_i$ ($i=1,\ldots,M-1$) cause the unignorable computational cost due to a large number of implementations of Algorithm \ref{alg:stochastic_matrix_approx}. 
Therefore, we typically select low $M$ and $J_i$ ($i=1,\ldots,M-1$) in the $M$-RA-lp  compared to the $M$-RA, but the size of $\Sigma_{j_1,\ldots,j_M}$ is likely to become large and make it difficult computationally to conduct the calculation related to the inverse of $\Sigma_{j_1,\ldots,j_M}$ in Algorithms \ref{alg:log-lik_m-ra-lp} and \ref{alg:pred_dist_m-ra-lp}. To avoid this problem, we introduce $T_{\gamma}$ in Step 4 of Algorithm \ref{alg:approx_cov} and make $\Sigma_{j_1,\ldots,j_M}$ sparse unlike the $M$-RA. If we need large $M$ and $J_i$ in order to bypass the lack of memory, the total computational time of Algorithm \ref{alg:stochastic_matrix_approx} can be shortened by selecting small $\left| Q_{j_1,\ldots,j_m} \right|$ and $r_{j_1,\ldots, j_m}$.


\begin{figure}[t]
\centering
 \subfigure[MLP (10.49)]{
  \includegraphics[width = 5.1cm,pagebox=artbox,clip]{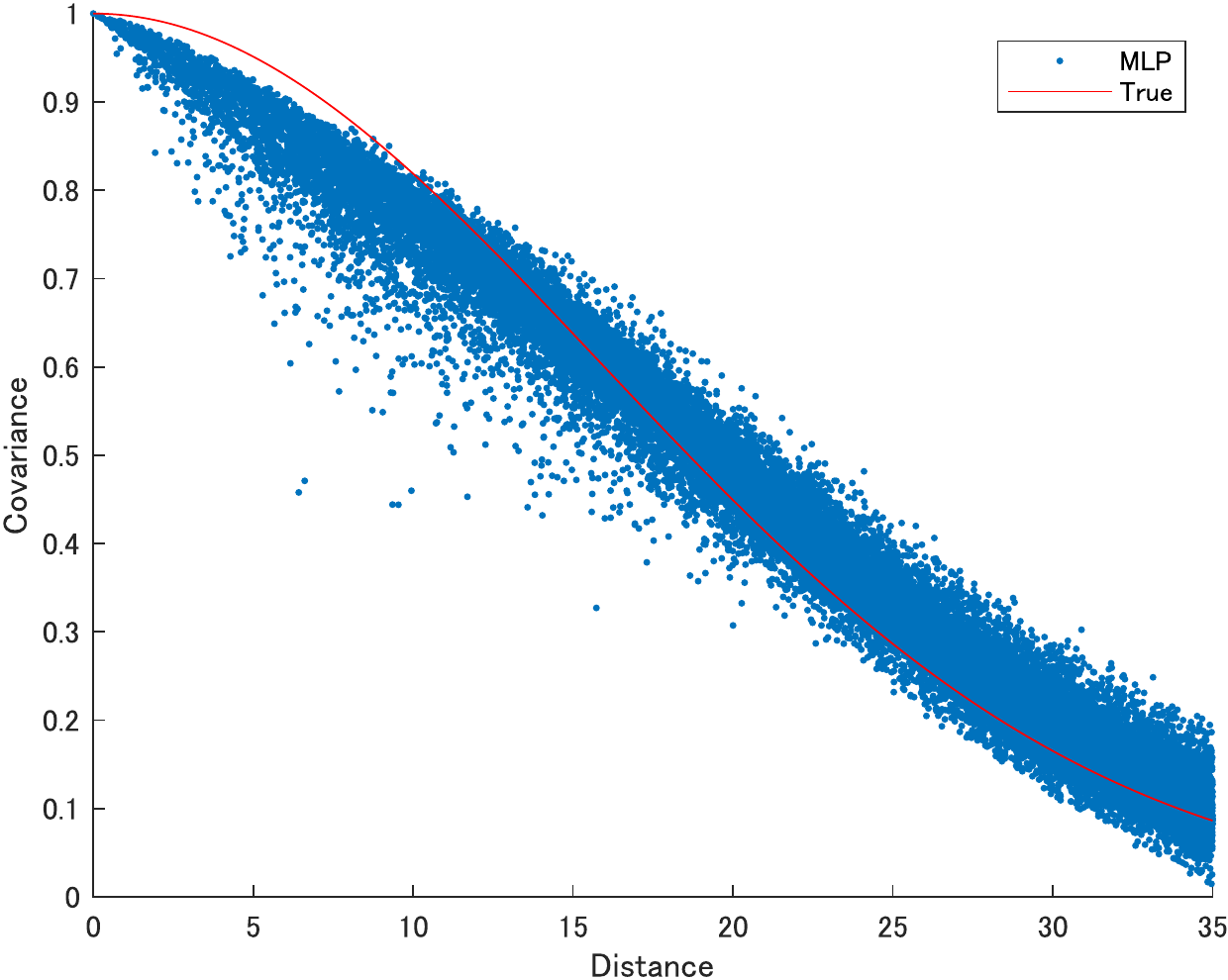}\label{fig:a}}
 \subfigure[$M$-RA (6.65)]{ 
  \includegraphics[width = 5.1cm,pagebox=artbox,clip]{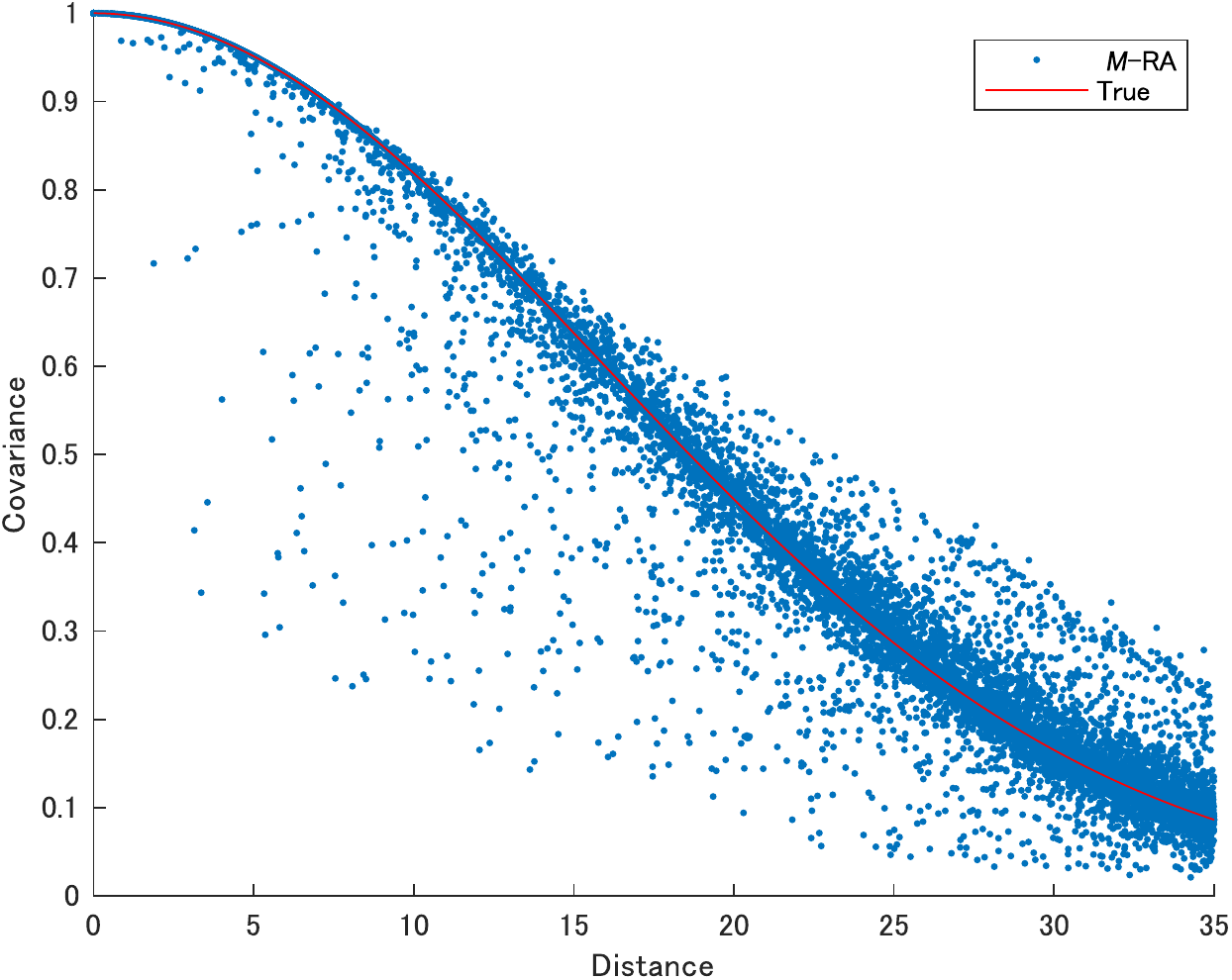}\label{fig:b}}
 \subfigure[$M$-RA-lp (5.37)]{
  \includegraphics[width = 5.1cm,pagebox=artbox,clip]{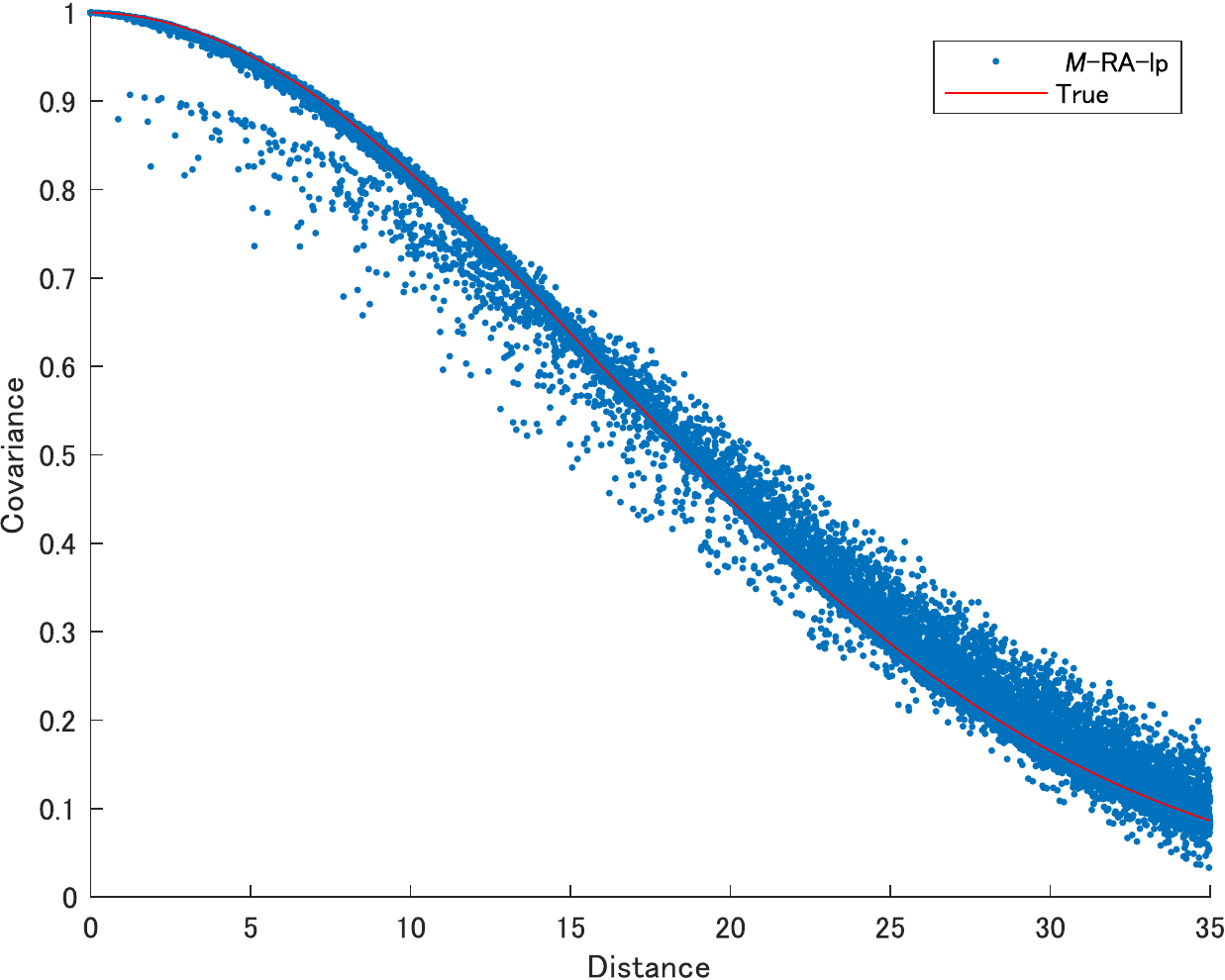}\label{fig:c}}
 \vspace{-4pt}
  \caption{$C_0(\bm{s}_1,\bm{s}_2)=\exp(-0.002 \| \bm{s}_1 - \bm{s}_2 \|^2)$ (solid line) and its three approximations (points). 500 random locations are uniformly generated in $[0,100]^2$. We set $\gamma=10$, $M=2$, and $J_1=J_2=2$, and each domain is divided into equal subregions. The Frobenius norms of the difference between the true covariance matrix and the approximated one are given within parenthesis. (a) MLP. The selected rank of $\Phi_0$ in Algorithm \ref{alg:stochastic_matrix_approx} was 15. (b) $M$-RA with $\left| Q_{j} \right|=16$ ($j=0,1,2$). 
(c) $M$-RA-lp with $Q_{j} = S_{j}$ ($j=0,1,2$). The selected ranks of $\Phi_j$ ($j=0,1,2$) in Algorithm \ref{alg:stochastic_matrix_approx} were 16, 14, and 15. 
  }
\label{fig:cov_comparison} 
\end{figure}

Figure \ref{fig:cov_comparison} describes the typical characteristics of the three approximation methods for the original covariance function. 
The fitting of the MLP to the original covariance function was worse around the origin than that of the $M$-RA-lp. This suggests that the modification by the covariance tapering on the whole spatial region can be insufficient unlike the linear projection at higher resolutions plus the modification by the covariance tapering on each subregion when the small-scale spatial correlation has smoothness such as the Gaussian covariance function. The $M$-RA-lp showed the best approximation accuracy with regard to the Frobenius norm. However, Figure \ref{fig:cov_comparison} (c) 
represents the partly mismatched fitting of the $M$-RA-lp around the origin because the approximation procedure of Algorithm \ref{alg:approx_cov} stops in the early stages if the two locations are not in the same subregion at the low resolution. This problem can also occur in the $M$-RA. A taper version of the $M$-RA-lp based on \cite{Katzfuss_2019} may resolve this artificiality.

\section{Illustrations\label{sec:illustrations}}

In this section, we compare our proposed $M$-RA-lp with the MLP and $M$-RA by using the simulated and real data. All computations were carried out by using MATLAB on a single core machine (4.20 GHz) with 64 GB RAM. For sparse matrix calculations and the optimization of the log-likelihood function, we used the MATLAB functions {\tt sparse} and {\tt fmincon}, respectively.

\subsection{Simulation study\label{subsec:simulation_study}}

We evaluated the performance of Algorithms \ref{alg:log-lik_m-ra-lp} and \ref{alg:pred_dist_m-ra-lp} in our proposed $M$-RA-lp through simulation studies. Let $D_0 = [0, 100]^2$ be the sampling domain, and observed locations were sampled from a uniform distribution over $D_0$. We considered the zero-mean Gaussian processes having the same covariance functions as those in simulations of Tables \ref{tab:comparison_cn_exp} and \ref{tab:comparison_cn_gauss} of Section \ref{subsec:relation}. When pairs of observations were more than 0.6 unit distant from each other in the case of the exponential covariance function, they had negligible ($< 0.05$) correlation. This distance is called the effective range, and 0.6 unit represents the Gaussian process with the weak spatial correlation. In contrast, the effective range in the case of the Gaussian covariance function was 110 unit, and it shows the strong spatial correlation. 
The measurement error variance $\tau^2$ was 0.5. In this subsection, we set $\left| Q_{j_1,\ldots,j_m} \right|=r$ ($m=0,\ldots,M-1$) in the $M$-RA and $J_{i}=J$ ($i=1,\ldots,M$) except for the fourth simulation, and the equal-area partitions were chosen when the resolution increased. 
The average value of total computational times required for one calculation of the evaluation measure in each iteration was recorded and scaled relative to that of the original model  except for the fourth simulation.

First, we compared the approximation accuracy of the original log-likelihood function in Algorithm \ref{alg:log-lik_m-ra-lp} of the $M$-RA-lp with those of the MLP and $M$-RA. All comparisons were conducted based on the log-score which is defined by the log-likelihood function at the true parameter values. The log-score indicates how well the original covariance function is approximated. Since this measure is maximized in the sense of the expectation by the original model \citep[see, e.g.,][]{Gneiting_2014}, the log-score by the original covariance function is expected to have the highest value on average. Thus, the higher log-score is better.

For a fixed configuration of 10,000 sampling locations, we calculated the sample mean of the log-scores of 50 simulations. This procedure was iterated 10 times, and we recorded their average value. For the $M$-RA-lp with $\gamma=1$ and $M=2$, we set $|Q_0|=300$, $|Q_1|=|Q_2|=100$, and $J=2$. In Algorithm \ref{alg:stochastic_matrix_approx}, we selected each $\varepsilon$ such that all of $r_{j}$ ($j=0,1,2$) over the iterations were nearly equal to the target values of 5, 10, and 20. For the MLP with $\gamma=1$, the target values of $r_0$ were 10, 20, and 40. For the $M$-RA with $r=10$ and $M=2, 4, 5$, we selected $J$ that almost satisfies  $r=n \big/ \left(J^M \right)$. This selection guideline $r=n \big/ \left(J^M \right)$ was often used in simulation studies of \cite{Katzfuss_2017}.

\begin{table}[t]
\caption{Comparison of the log-score.}
\label{tab:log-score}
\centering
\vspace{4pt}
\scalebox{0.86}{
\begin{tabular}{llcccccc} \toprule
 & Covariance &  & Original model &  &  & $M$-RA-lp &  \\ \cmidrule(l{5pt}r{5pt}){6-8}
 &  &  &  &  & 5 & 10 & 20 \\ \midrule
log-score ($\times 10^4$) & Exponential &  & -1.6148 &  & -1.6159 & -1.6155 & -1.6153 \\
 & Gaussian &  & -1.0763  &  & -1.0777 & -1.0769  & -1.0764 \\
Relative time &  &  & 1 &  & 0.2609  & 0.2640  & 0.2689  \\ \cmidrule[1pt]{1-8}
 & Covariance &  & MLP &  &  & $M$-RA &  \\ \cmidrule(l{5pt}r{5pt}){3-5} \cmidrule(l{5pt}r{5pt}){6-8}
 &  & 10 & 20 & 40 & M=2 & M=4 & M=5 \\ \midrule
log-score ($\times 10^4$) & Exponential & -1.6156 & -1.6154 & -1.6151 & -1.6162 & -1.6157 & -1.6155 \\
 & Gaussian & -1.0773  & -1.0770  & -1.0765  & -1.0799  & -1.0774  & -1.0770  \\
Relative time &  & 0.5142  & 0.6813  & 0.9658  & 0.2318  & 0.2620  & 0.2824  \\ \bottomrule
\end{tabular}
}

\vspace{4pt}
\hspace{-2pt}
Three cases in the $M$-RA-lp and MLP represent target values of $r_j$'s and $r_0$, respectively.
\end{table}

The results are summarized in Table \ref{tab:log-score}. We compared the three approximation methods on the basis of the $M$-RA-lp with $r_{j}$ ($j=0,1,2$) nearly equal to 10. The comparison of these methods showed common characteristics in both covariance functions. The MLP eventually indicated a similar log-score and larger computational time in $r_0 \approx 20$. 
Unlike the case of the Gaussian covariance function, the log-score of the MLP in the case of the exponential covariance function was better slightly in the sense of the magnitude relationship with those of the $M$-RA-lp and $M$-RA. This is because the exponential covariance function in this simulation has the weak spatial correlation and the modification by the covariance tapering works well. 
For the $M$-RA with $M=2$, although the computational time was lower, the log-score was not good. In the case of $M=5$, the log-score was similar to that of the $M$-RA-lp, but the computational time was somewhat large. Also, in other cases, the log-scores and computational times of the $M$-RA-lp were not improved simultaneously compared with those of the MLP and $M$-RA. These results support the effectiveness of the $M$-RA-lp for efficiently approximating the log-likelihood function. 

Second, we assessed the prediction performance of Algorithm \ref{alg:pred_dist_m-ra-lp} with regard to the mean squared prediction error (MSPE) and the continuous ranked probability score (CRPS). The CRPS evaluates the fitting of the predictive distribution to the data \citep[see][]{Gneiting_2007,Gneiting_2014}. The lower MSPE and CRPS are better. The prediction point was $(1,1)^{\top}$. The tuning parameter settings in the three approximation methods and the iteration procedure were the same as those in the first simulation except that the MSPE and averaged CRPS were calculated from 100 simulations.

\begin{table}[t]
\caption{Comparison of the MSPE and CRPS.}
\label{tab:mspe_crps}
\centering
\vspace{4pt}
\scalebox{0.9}{
\begin{tabular}{llcccccc} \toprule
 & Covariance &  & Original model &  &  & $M$-RA-lp &  \\ \cmidrule(l{5pt}r{5pt}){6-8}
 &  &  &  &  & 5 & 10 & 20 \\ \midrule
MSPE & Exponential &  & 0.9300  &  & 0.9931 & 0.9771  & 0.9660 \\
 & Gaussian &  & 0.0056  &  & 0.0094 & 0.0062  & 0.0058 \\
CRPS & Exponential &  & 0.5351 &  & 0.5609 & 0.5529 & 0.5478  \\
 & Gaussian &  & 0.0423 &  & 0.0542 & 0.0447 & 0.0429  \\
Relative time &  &  & 1 &  & 0.4830  & 0.4879  & 0.4952  \\ \cmidrule[1pt]{1-8}
 & Covariance &  & MLP &  &  & $M$-RA &  \\ \cmidrule(l{5pt}r{5pt}){3-5} \cmidrule(l{5pt}r{5pt}){6-8}
 &  & 10 & 20 & 40 & M=2 & M=4 & M=5 \\ \midrule
MSPE & Exponential & 0.9694  & 0.9501 & 0.9385  & 1.0572 & 0.9969  & 0.9728 \\
 & Gaussian & 0.0139  & 0.0064 & 0.0059  & 0.0162 & 0.0074  & 0.0067 \\
CRPS & Exponential & 0.5570 & 0.5479  & 0.5417  & 0.5814  & 0.5642 & 0.5570 \\
 & Gaussian & 0.0656 & 0.0452  & 0.0433  & 0.0716  & 0.0486 & 0.0464 \\
Relative time &  & 0.7656  & 0.8936  & 1.2989  & 0.4403  & 0.5035  & 0.5339  \\ \bottomrule
\end{tabular}
}

\vspace{4pt}
\hspace{-2pt}
Three cases in the $M$-RA-lp and MLP represent target values of $r_j$'s and $r_0$, respectively.
\end{table}

The characteristics of the results in Table \ref{tab:mspe_crps} were similar to those of the first simulation. 
The weak spatial correlation in the case of the exponential covariance function gave rise to the better prediction accuracy of the MLP than that of the $M$-RA-lp in many cases 
because of the effective modification of the covariance tapering. However, the computational time of the MLP is larger than that of the $M$-RA-lp, and the prediction accuracy of the MLP in the case of the Gaussian covariance function degrades due to the strong spatial correlation unlike the $M$-RA-lp. 
Moreover, for the Gaussian covariance function, the $M$-RA-lp with $r_{j}$ ($j = 0,1,2$) nearly equal to 20 showed almost the same MSPE and CRPS as those of the original model despite half the computational time. These results demonstrate that the $M$-RA-lp can achieve better prediction accuracy rapidly than the MLP and $M$-RA. 

Through the first and second simulations in Section \ref{subsec:simulation_study}, we examined the effect of the covariance tapering in $\Sigma_{j_1,\ldots,j_M}$. The averaged percentage of non-zero entries in $\Sigma_{j_1,\ldots,j_M}$, that is, the averaged sparsity of $\Sigma_{j_1,\ldots,j_M}$, was 0.16\%. In the case where the covariance tapering was not used in $\Sigma_{j_1,\ldots,j_M}$ of the $M$-RA-lp, the relative computational times in the first simulation were 0.3123, 0.3154, and 0.3193 for $r_{j}$ ($j=0,1,2$) nearly equal to 5, 10, and 20, respectively. Similarly, the ones in the second simulation were 0.5783, 0.5832, and 0.5891 for $r_{j}$ ($j=0,1,2$) nearly equal to 5, 10, and 20, respectively. Thus, the covariance tapering reduced the computational time by approximately 19\% in the first and second simulations. On the other hand, in the case of the exponential covariance function, the relative Frobenius norms between the original covariance matrix and the approximated covariance matrix by the $M$-RA-lp without the covariance tapering, which were scaled relative to $\| C_{0}(S_0,S_0)-C_{M \text{-RA-lp}}(S_0,S_0) \|_F$, were 0.7748, 0.7989, and 0.8100 for $r_{j}$ ($j=0,1,2$) nearly equal to 5, 10, and 20, respectively. As $r_{j}$'s increase, the $M$-RA-lp improves the approximation of the small-scale spatial variations of the original covariance function. Consequently, since the effect of the covariance tapering decreases, the relative Frobenius norm is closer to 1. 
Considering that the averaged sparsity was 0.16\%, the reduction of the approximation accuracy for the original covariance matrix by using the covariance tapering was small. This is because 
the spatial correlation in this case was weak 
and the covariance tapering worked well. In the case of the Gaussian covariance function, the relative Frobenius norms were 0.9989, 0.9998, and 0.9998 for $r_{j}$ ($j=0,1,2$) nearly equal to 5, 10, and 20, respectively. 
In this case, since the small-scale spatial variations of the original covariance function are well approximated by the $M$-RA-lp up to resolution $m=1$ in Algorithm 1 with $M=2$, the reduction of the approximation accuracy by the covariance tapering was very small. 
As a consequence, it is suggested that the covariance tapering in the $M$-RA-lp reduced the computational time efficiently in the first and second simulations. 

Third, we investigated scalability of the $M$-RA-lp. 
The sample size $n$ was selected from 5,000 to 20,000, and the count of iterations for calculating the averaged total computational time of one log-score, MSPE, and CRPS by Algorithms  \ref{alg:log-lik_m-ra-lp} and \ref{alg:pred_dist_m-ra-lp} was 3. However, for $n=10,000$, we used the summation of the computational times in Tables \ref{tab:log-score} and \ref{tab:mspe_crps}. We employed tuning parameter settings under which the three approximation methods showed almost the same prediction accuracy for $n=10,000$ in the second simulation. Specifically, all of $r_{j}$ ($j = 0,1,2$) of the $M$-RA-lp were nearly equal to 10, and $r_0$ of the MLP was almost 20. For the $M$-RA, we set $r=10$, $M=5$, and $J=4$ for $n=10,000$ and selected $M$ to almost satisfy $r=n \big/ \left(J^M \right)$ as $r=10$ and $J=4$ for different values of $n$. 

\begin{table}[t]
\caption{Comparison of the computational time.}
\label{tab:comp_time}
\vspace{4pt}
\centering
\begin{tabular}{lcccc} \toprule
 & $n=5,000$ & $n=10,000$ & $n=15,000$ & $n=20,000$ \\ \midrule
Original model & 1 & 1 & 1 & 1 \\
$M$-RA-lp & 0.5558 & 0.3669 & 0.3628 & 0.3236 \\
MLP & 1.1347 & 0.7788 & 0.6870 & 0.6722 \\
$M$-RA & 0.6275 & 0.3979 & 0.3512 & 0.3003 \\ \bottomrule
\end{tabular}
\end{table}

Table \ref{tab:comp_time} displays better scalability of the $M$-RA-lp and $M$-RA than that of the MLP, and the $M$-RA showed a shorter computational time than that of the $M$-RA-lp for very large $n$. This is because the $M$-RA-lp requires the additional computational time by Algorithm \ref{alg:stochastic_matrix_approx} and matrix multiplication related to $\Phi_{j}$ ($j=0,1,2$). The third simulation indicates that the $M$-RA has competitive scalability. However, from the results in Tables \ref{tab:comparison_cn_exp}--\ref{tab:mspe_crps}, we believe that the $M$-RA-lp can attain a better and stable inference by the somewhat additional computational time.

Fourth, we examined computational feasibility of the $M$-RA-lp when $n=10^5$ or more. These kinds of massive spatial datasets are often obtained by sensing devices on satellites. 
For $n=100,000$, $120,000$, we recorded the averaged total computational time of one log-score, MSPE, and CRPS by Algorithms \ref{alg:log-lik_m-ra-lp} and \ref{alg:pred_dist_m-ra-lp} from the three iterations. In this case, it was necessary to pay attention to the memory burden as well as the expensive computational cost. Since the original model and MLP require the $n \times n$ covariance matrix, we experienced the lack of memory. 
Similarly, the $M$-RA-lp with 
$M=2$, $(J_1, J_2) = (2, 2)$, $(|Q_0|,|Q_{j_1}|) = (300,100)$ ($1 \le j_1 \le 2$), $\gamma=1$ 
used in the third simulation
 also caused the lack of memory due to large $\left| S_{j_1,\ldots,j_M} \right|$. Hence, we needed to increase $M$ and/or $J_i$ in order to reduce the size of $S_{j_1,\ldots,j_M}$. Specifically, for the $M$-RA-lp, we considered two cases: $M=2$, $(J_1, J_2) = (2, 16)$, $(|Q_0|,|Q_{j_1}|) = (300,100)$ ($1 \le j_1 \le 2$) and $M=4$, $(J_1, J_2, J_3, J_4)= (2, 2, 8, 16)$, $(|Q_0|,|Q_{j_1}|,|Q_{j_1,j_2}|,|Q_{j_1,j_2,j_3}|) = (300,100,50,30)$ ($1 \le j_1 \le 2, \; 1 \le j_2 \le 2,\; 1 \le j_3 \le 8$). For Cases 1 and 2, the target values of $r_{j_1,\ldots, j_m}$'s were 10. 
Furthermore, we set the computational time of the $M$-RA with $M=5$ and $J_i=J=4$ ($i=1,\ldots,5$) used in the third simulation
 as the baseline for calculating the relative time. For the $M$-RA, $r$ was selected from $r=n \big/ \left(J^M \right)$.

\begin{table}[t]
\caption{Computational time for massive spatial datasets.}
\label{tab:comp_time_massive}
\centering
\begin{tabular}{lccc} \toprule
\multicolumn{1}{c}{} & $M$-RA & $M$-RA-lp (Case 1) & $M$-RA-lp (Case 2) \\ \midrule
$n=100,000$ & 1 & 1.0598 & 0.9327 \\
$n=120,000$ & 1 & 1.1047 & 0.9150 \\ \bottomrule
\end{tabular}

\vspace{4pt}
\raggedright
\hspace{40pt} Case 1: $M=2$, $(J_1, J_2) = (2, 16)$, $(|Q_0|,|Q_{j_1}|) = (300,100)$; Case 2:\\
\hspace{40pt} $M=4$, $(J_1, J_2, J_3, J_4)= (2, 2, 8, 16)$, $(|Q_0|,|Q_{j_1}|,|Q_{j_1,j_2}|,|Q_{j_1,j_2,j_3}|)$\\
\hspace{40pt} $= (300,100,50,30)$.
\end{table}

Table \ref{tab:comp_time_massive} shows the computational time for each method and suggests that it may be desirable 
to make both $M$ and $J_i$ large 
for the $M$-RA-lp in terms of the computational cost. Since the $M$-RA-lp with large $M$ and/or $J_i$ requires a large number of implementations of Algorithm \ref{alg:stochastic_matrix_approx}, we selected relatively low $\left| Q_{j_1,\ldots, j_m} \right|$ and $r_{j_1,\ldots, j_m}$ in spite of massive spatial datasets. This might lead to insufficient approximation of the small-scale spatial variation when the variation of the spatial correlation around the origin is smooth. Since the $M$-RA does not conduct Algorithm \ref{alg:stochastic_matrix_approx}, we can select large $M$, $J$, and $r$. As a consequence, the $M$-RA can likely avoid this problem, but Tables \ref{tab:comparison_cn_exp} and \ref{tab:comparison_cn_gauss} indicate that numerical instability might occur unlike the $M$-RA-lp.

\subsection{Real data analysis\label{subsec:real_data}}

In this subsection, we applied our proposed $M$-RA-lp to the air dose rates, that is, the amount of radiation per unit time in the air. The data were created by using the results of the vehicle-borne survey conducted by the Nuclear Regulation Authority (NRA) from November 2 to December 18, 2015, and are available at \url{https://emdb.jaea.go.jp/emdb/en/portals/b1010202/}. In particular, we focused on the air dose rates in Chiba prefecture, and this dataset includes the air dose rates (microsievert per hour), longitudes, and  latitudes at 39,553 sampling locations. Since they were observed on irregularly spaced locations at discrete time points, they were rigorously spatio-temporal data. However, we regarded the dataset as spatial data by assuming that the trend of the air dose rates does not fluctuate largely over a short period. Moreover, to satisfy the assumption of Gaussianity over the whole region, we selected 7,801 locations inside the rectangular region $[140.00, 140.15] \times [35.65, 35.87]$ and applied the logarithmic transformation to these air dose rates. 
Figure \ref{fig:log_original_data} shows an overview of the transformed data. 
After subtracting the sample mean from the transformed data, some exploratory analyses led us to use the zero-mean Gaussian process with an exponential covariance function $C_0(\bm{s}_1,\bm{s}_2)=\sigma^2 \exp(- \theta \| \bm{s}_1 - \bm{s}_2 \|)$.  Also, to check the predictive performance, we considered a test set of size 129 from 7,801 data points. The test set belonged to $[140.07, 140.08] \times [35.71, 35.74]$, and we designed the domain partitioning such that $[140.07, 140.08] \times [35.71, 35.74]$ was inside the subregion at the highest resolution.

\begin{figure}[t]
\centering
  \includegraphics[width = 9.2cm,pagebox=artbox,clip]{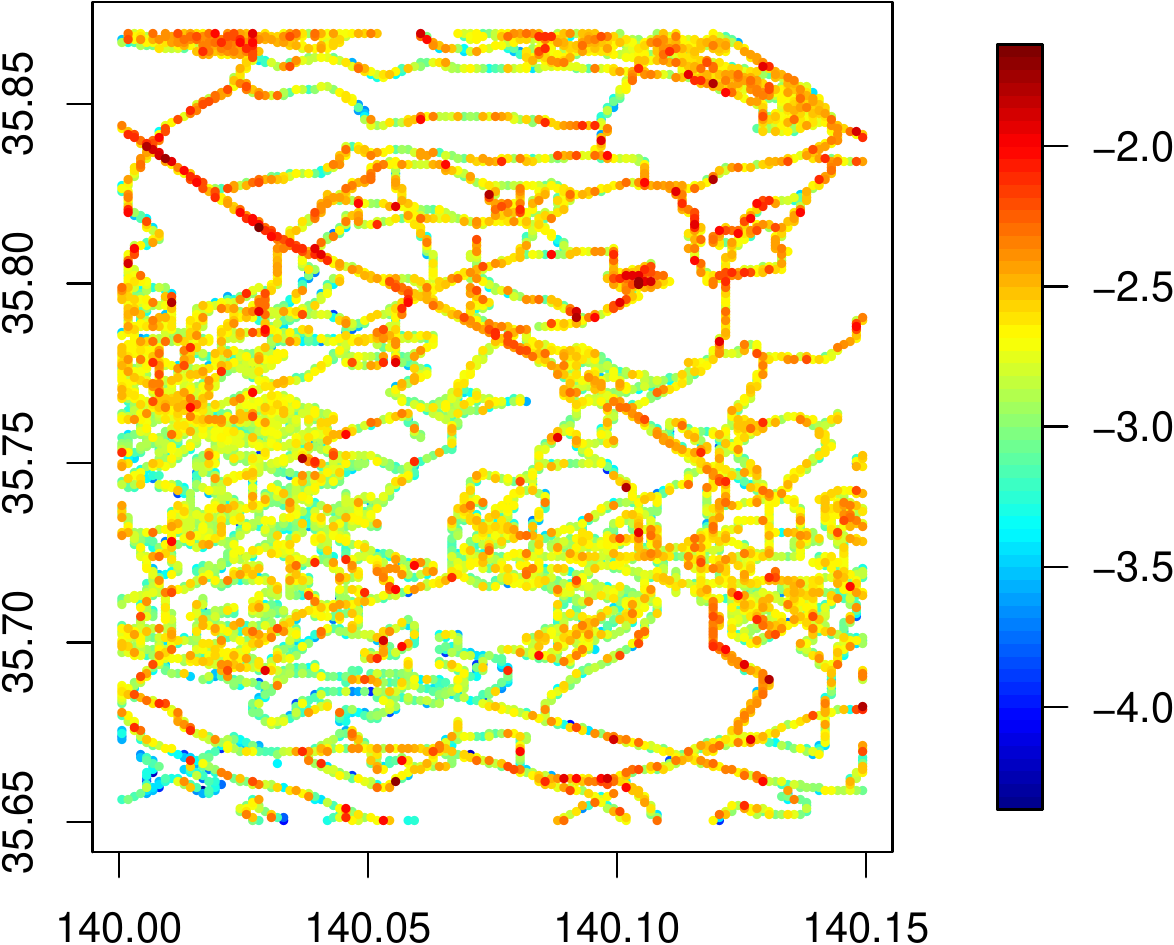}
  \caption{The logarithmic transformation of the air dose rates at 7,801 sampling locations in Chiba prefecture.}
\label{fig:log_original_data} 
\end{figure}

First, we estimated the unknown parameters $\sigma^2$, $\theta$, and $\tau^2$ by maximizing the approximated log-likelihood functions of the $M$-RA-lp, MLP, and $M$-RA. Then, we calculated the predictive distribution of the test set to compare the three methods by assessing the MSPE and averaged CRPS. In order to calculate the two prediction measures, we adopted the predictive distribution $\bm{Z} \left(S_0^P \right) |\bm{Z}(S_0)$ instead of \eqref{eq:pred_dist} because $\bm{Y}_0 \left(S_0^P \right)$ is not observed. $\bm{Z} \left(S_0^P \right) |\bm{Z}(S_0)$ is given just by adding $\tau^2 \bm{\mbox{I}}_{n'}$ to the covariance matrix in \eqref{eq:pred_dist}.

We compared the $M$-RA-lp with $M=2$ to the MLP and $M$-RA with $M=4$. $\gamma$ in $T_{\gamma}$ was 1. For the $M$-RA-lp, we set $(J_1, J_2)=(2, 2)$, $|Q_0|=140$, and $|Q_1|=|Q_2|=60$. We implemented Algorithm \ref{alg:stochastic_matrix_approx} such that $r_0$ and $r_{j_1}$ ($j_1=1,2$) were nearly equal to the target values 20 and 10, respectively. In the same way, the target value of $r_0$ in the MLP was set as 20. The $M$-RA had $(J_1, J_2, J_3, J_4) = (2, 2, 8, 16)$ because a few partitions at low resolution often improve the approximation of the original covariance function by avoiding the early stop of Algorithm \ref{alg:approx_cov}. For the number of knots, we considered two cases: $(|Q_0|,|Q_{j_1}|,|Q_{j_1,j_2}|,|Q_{j_1,j_2,j_3}|) = (20,10,5,5), (20,10,800,70)$ ($1 \le j_1 \le 2, \; 1 \le j_2 \le 2,\; 1 \le j_3 \le 8$). 

\begin{table}[t]
\caption{Results of the real data analysis.}
\label{tab:real_data_analysis}
\vspace{4pt}
\scalebox{0.95}{
\begin{tabular}{lccccccc} \toprule
 & $\hat{\sigma}^2$ & $\hat{\theta}$ & $\hat{\tau}^2$ & \begin{tabular}{c} Relative time \end{tabular} & loglik. & MSPE & CRPS \\ \midrule
Original model & 0.0416 & 1.6109 & 0.0581 & 1.0000  & -1051.7 & 0.0690  & 0.1464 \\
$M$-RA-lp & 0.0429 & 1.6073 & 0.0609 & 0.6952  & -1067.5 & 0.0691 & 0.1471 \\
MLP & 0.0445 & 1.6268 & 0.0598 & 1.6422  & -1101.5 & 0.0701 & 0.1477 \\
$M$-RA (Case 1) 
& 0.0409 & 0.7126 & 0.0635 & 0.2740  & -1119.1 & 0.0700  & 0.1499 \\
$M$-RA (Case 2) 
 & 0.0424 & 1.6402 & 0.0574 & 0.8952  & -1061.8 & 0.0691 & 0.1480  \\ \bottomrule
\end{tabular}
}

\vspace{4pt}
Case 1: $(|Q_0|,|Q_{j_1}|,|Q_{j_1,j_2}|,|Q_{j_1,j_2,j_3}|) = (20,10,5,5)$;~ Case~ 2:~ $(|Q_0|,|Q_{j_1}|,|Q_{j_1,j_2}|,|Q_{j_1,j_2,j_3}|)\\= (20,10,800,70)$; Relative time: relative time per likelihood function evaluation; loglik.: maximum log-likelihood value.
\end{table}

The results of the real data analysis are shown in Table \ref{tab:real_data_analysis}. 
The MLP and the $M$-RA (Case 1) showed the discrepancy from the results of the original model in terms of the maximum log-likelihood values and prediction measures. In particular, the computational time of the MLP was larger than that of the original model due to Algorithm \ref{alg:stochastic_matrix_approx}. Additionally, the estimated value of $\theta$ in the $M$-RA (Case 1) was very small because the approximation of the small-scale spatial variation is insufficient due to the small $|Q_{j_1,j_2}|$ and $|Q_{j_1,j_2,j_3}|$. Although the $M$-RA-lp had almost the same $r_0$ and $r_{j_1}$ as the corresponding rank and number of knots in the MLP and $M$-RA (Case 1), the $M$-RA-lp achieved results similar to the original model. Furthermore, the computational time of the $M$-RA-lp was smaller than that of the original model. By increasing the number of knots at higher resolutions in order to capture the small-scale spatial variation, the $M$-RA (Case 2) showed results close to the original model, while the $M$-RA-lp attained similar results in the shorter computational time. 
Therefore, it is suggested that the $M$-RA-lp can more rapidly realize results close to the  original model compared with the $M$-RA. 

Finally, we produced the prediction surfaces in the rectangular region $[140.07,$ $140.08] \times [35.71, 35.74]$ so as to examine how well the $M$-RA-lp, MLP, and $M$-RA perform in the prediction of a region. The prediction surfaces were generated by calculating the mean vector of the predictive distribution at $31 \times 31$ lattice points in $[140.07, 140.08] \times [35.71, 35.74]$ by using 7,801 observations with the test set of 129 observations. For $\sigma^2$, $\theta$, and $\tau^2$, we used the estimated values of each method in the results of Table \ref{tab:real_data_analysis}, and tuning parameter settings were also identical to those in the real data analysis of Table \ref{tab:real_data_analysis}. 

\begin{figure}[p]
 \subfigure[Original model]{
  \includegraphics[width = 7.3cm,pagebox=artbox,clip]{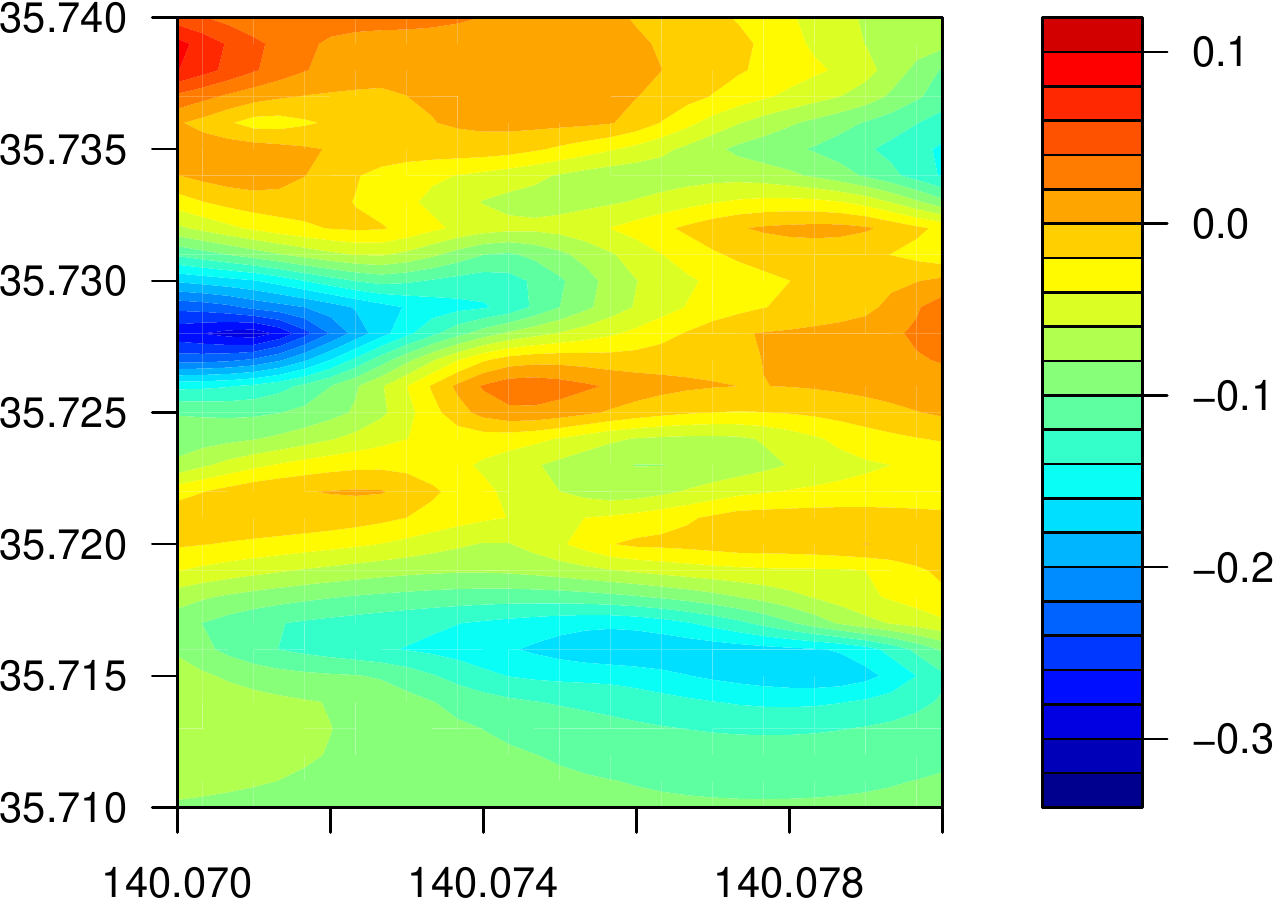}\label{fig:a}}
 \subfigure[$M$-RA-lp]{ 
  \includegraphics[width = 7.3cm,pagebox=artbox,clip]{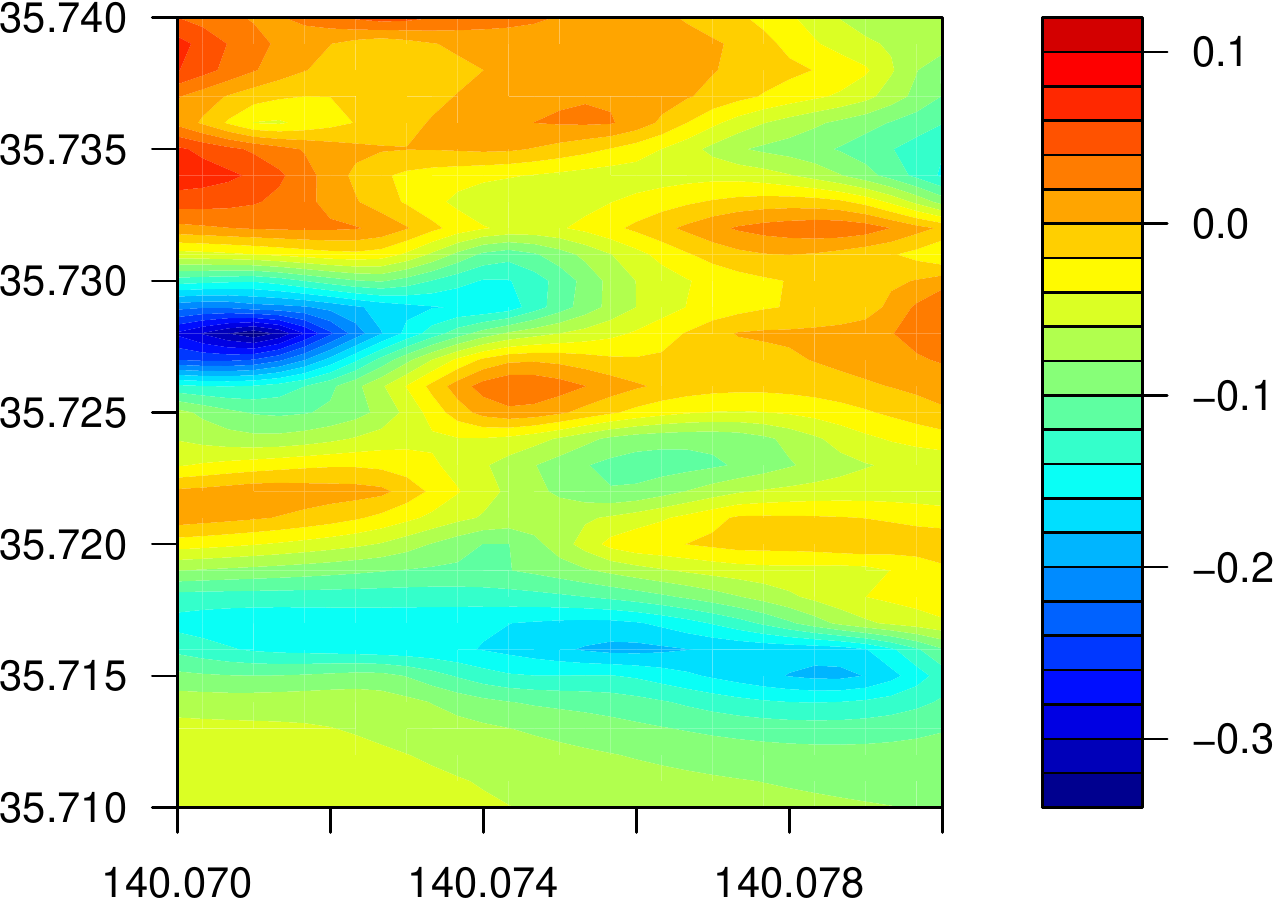}\label{fig:b}}
 \subfigure[MLP]{
  \includegraphics[width = 7.3cm,pagebox=artbox,clip]{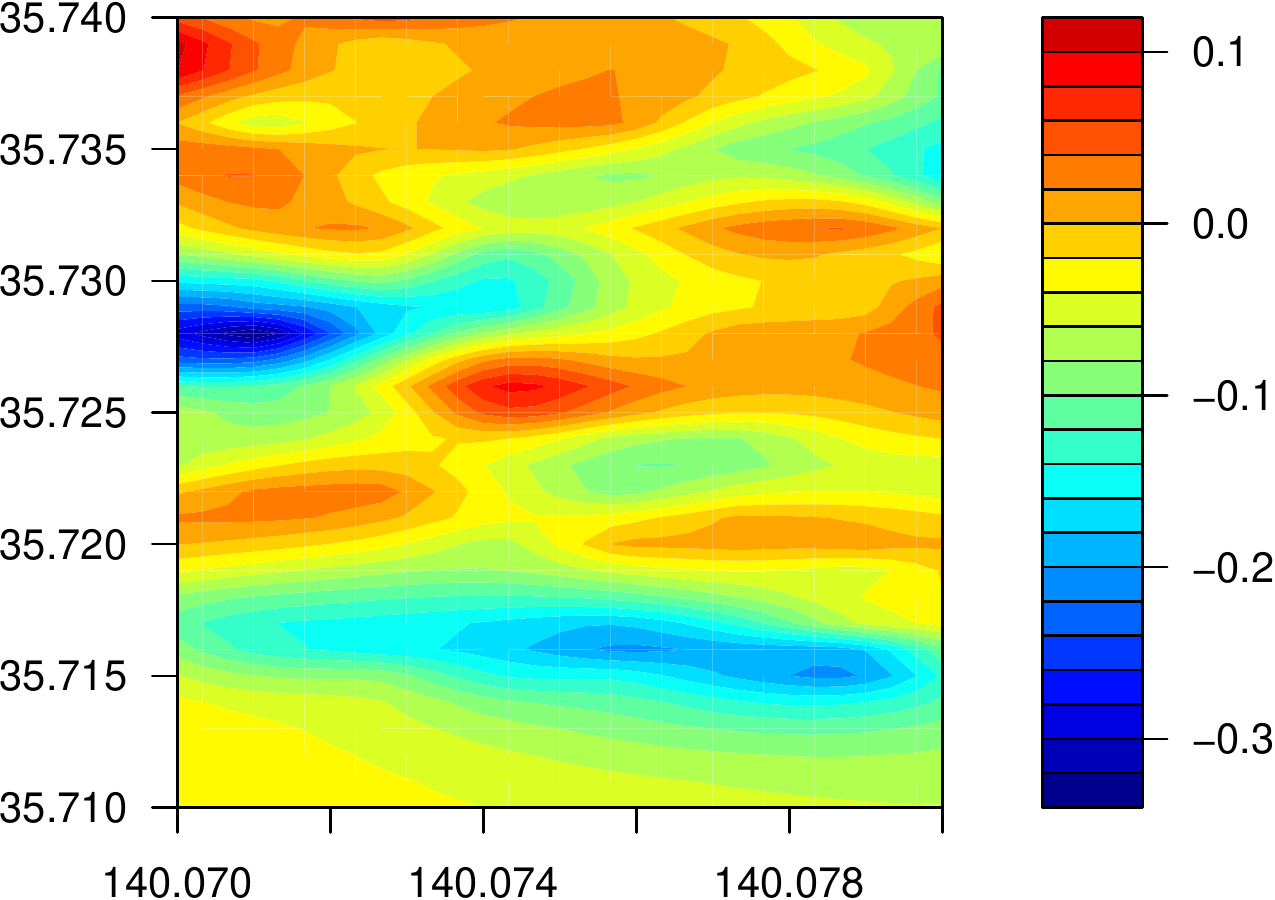}\label{fig:c}}
 \subfigure[$M$-RA (Case 1)]{
  \includegraphics[width = 7.3cm,pagebox=artbox,clip]{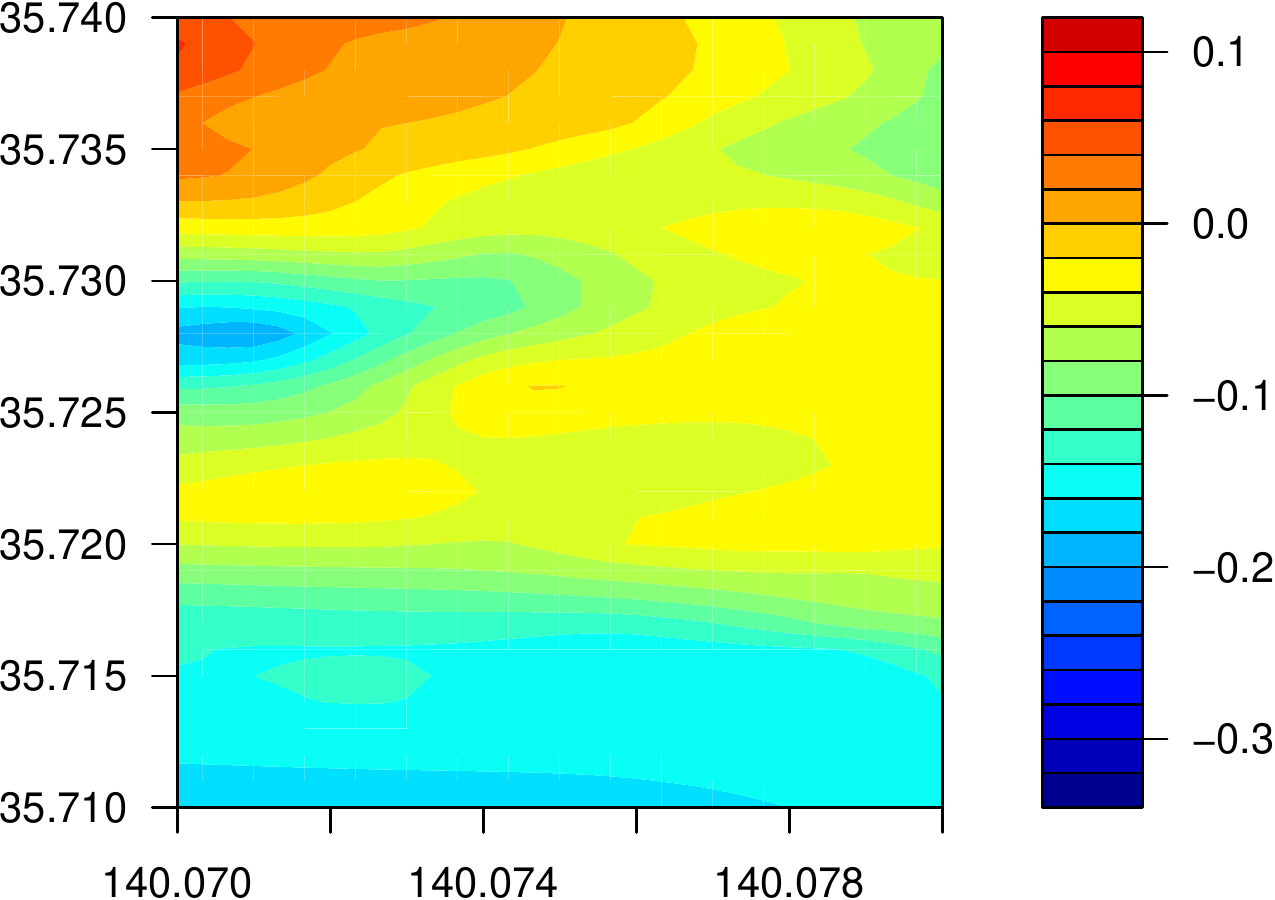}\label{fig:d}}
   \subfigure[$M$-RA (Case 2)]{
  \includegraphics[width = 7.3cm,pagebox=artbox,clip]{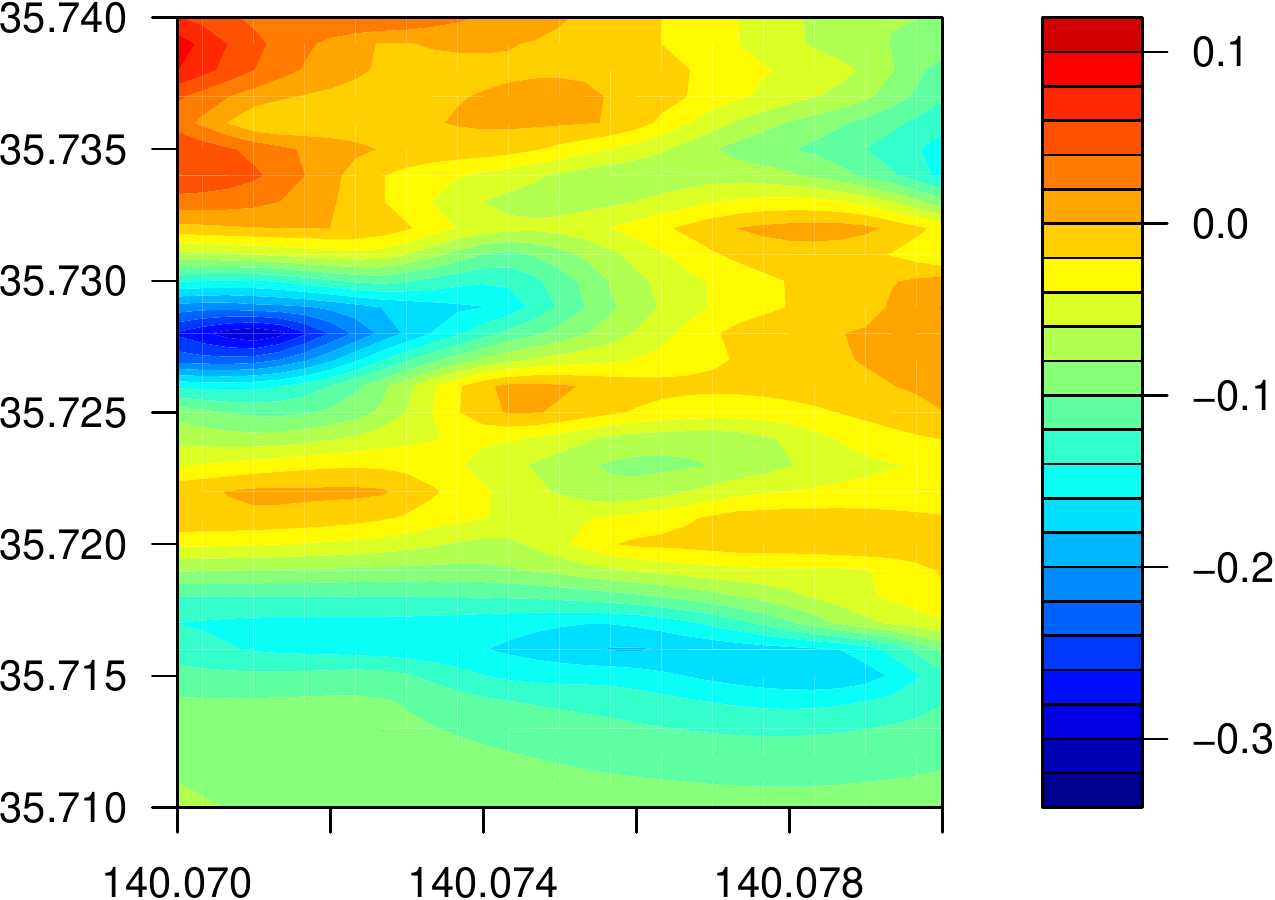}\label{fig:e}}
  \caption{
  The prediction surfaces generated by the original model, $M$-RA-lp, MLP, and $M$-RA.}
\label{fig:surface_comparison} 
\end{figure}

Figure \ref{fig:surface_comparison} shows the prediction surfaces of the original model and three approximation methods. The prediction surfaces of the $M$-RA-lp, MLP, and $M$-RA (Case 2) were similar to that of the original model. For $\hat{\theta} = 1.6109$ in the original model of Table \ref{tab:real_data_analysis}, the effective range was 1.86 km, which shows the weak spatial correlation because the sides of the sampling region are 13.58 km and 24.41km. Consequently, the MLP depicted the good prediction surface because of the effectiveness of the covariance tapering. Also, some partial shapes in the prediction surface of the $M$-RA (Case 2) resembled those of the original model very well because the $M$-RA completely recovers the spatial correlation between observations in $[140.07, 140.08] \times [35.71, 35.74]$ as explained in Section \ref{subsec:relation}. However, the relative computational times of producing the prediction surfaces were 0.2063, 1.0299, 0.0715, and 0.3287 for the $M$-RA-lp, MLP, $M$-RA (Case 1), and $M$-RA (Case 2), respectively. The $M$-RA-lp generated the prediction surface in a shorter computation  time than the $M$-RA (Case 2), and this demonstrates that the $M$-RA-lp is the reasonable fast computation approach.

\section{Conclusion and future studies\label{sec:conclusion_future_studies}}

In this paper, we have described the multi-resolution approximation via linear projection ($M$-RA-lp). The proposed method refined the MLP of \cite{Hirano_2017a} by introducing the multiple resolutions and the recursive partitioning of the entire spatial domain based on the idea of \cite{Katzfuss_2017}. Also, the $M$-RA-lp can be regarded as an extension of the $M$-RA of \cite{Katzfuss_2017} by replacing the predictive process in the $M$-RA with the linear projection. Some simulations suggested that this replacement gave rise to better numerical stability by reducing the condition number, which is consistent with the results of \cite{Banerjee_2013}. In simulation studies and the real data analysis, the $M$-RA-lp was generally efficient compared with the MLP and $M$-RA in terms of the approximation of the log-likelihood function and predictive distribution at unobserved locations. 

Some issues are to be solved in the future. First, \cite{Katzfuss_2019} pointed out discontinuities of the $M$-RA  and proposed a taper version of the $M$-RA. In order to bypass the artificiality presented in Section \ref{subsec:relation}, we plan to derive a taper version of the $M$-RA-lp. 
Second, since the $M$-RA-lp has many tuning parameters, a comprehensive study on their selection is left for a future study. The faster selection method of $\Phi_{j_1,\ldots,j_m}$ should also be investigated. 
Third, \cite{Jurek_2019} developed a multi-resolution filter for massive spatio-temporal data. Similarly, our proposed method might be extended to a spatio-temporal process. Finally, \cite{Katzfuss_2020a} proposed Vecchia approximations which contain many existing fast computation methods as well as the $M$-RA as special cases. 
This general Vecchia framework was applied to a variety of settings such as the prediction, non-Gaussian case, and computer experiments \citep[][]{Katzfuss_2020b, Zilber_2020, Katzfuss_2020c}. It is also interesting to investigate the relationship between the Vecchia approximations and the $M$-RA-lp. 

%
%

%
%

\appendix

\def\thesection{Appendix \Alph{section}}
\section{Technical lemmas}
\def\thesection{\Alph{section}}
\label{append:lemmas}

This appendix collects some relevant results on matrix algebra.

\begin{lemma}[A part of Proposition 5.4 of \cite{Puntanen_2011}]
\label{lem:rank_mat}
Let A be a positive semidefinite  $n \times n$ matrix and B be an $n \times p$ matrix. If $\mbox{rank}(B)=p$ and $R(B) \cap R(A)^{\perp} = \{ \bm{0} \}$, 
then $\mbox{rank}(B^{\top} A B)=p$.
\end{lemma}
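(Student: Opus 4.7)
The plan is to show $B^{\top}AB$ (which is $p \times p$) has trivial null space, hence full rank $p$. So I would begin with an arbitrary $x \in \mathbb{R}^p$ satisfying $B^{\top}ABx = \bm{0}$ and aim to deduce $x = \bm{0}$.

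First, I would left-multiply by $x^{\top}$ to obtain the scalar identity $x^{\top}B^{\top}ABx = (Bx)^{\top}A(Bx) = 0$. Using that $A$ is positive semidefinite, write $A = L^{\top}L$ (either via a symmetric square root or a Cholesky-type factorization). Then $(Bx)^{\top}A(Bx) = \|L(Bx)\|^2 = 0$ forces $LBx = \bm{0}$, and consequently $ABx = L^{\top}LBx = \bm{0}$. This is the key structural step turning a quadratic-form equation into a linear one, and it is the part where positive semidefiniteness is genuinely used; I expect this to be the main (though mild) conceptual hurdle, since one must justify using a factorization of $A$.

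Next I would translate $ABx = \bm{0}$ into the subspace language of the hypothesis. We have $Bx \in \mathrm{null}(A)$, and since $A$ is symmetric, $\mathrm{null}(A) = R(A^{\top})^{\perp} = R(A)^{\perp}$. Obviously also $Bx \in R(B)$. Therefore $Bx \in R(B) \cap R(A)^{\perp} = \{\bm{0}\}$ by the second assumption, so $Bx = \bm{0}$. Finally, because $\mathrm{rank}(B) = p$, the columns of $B$ are linearly independent and $B$ has trivial null space, giving $x = \bm{0}$.

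Combining, $B^{\top}AB$ is a $p \times p$ matrix whose null space is trivial, so $\mathrm{rank}(B^{\top}AB) = p$, as claimed. The proof is short and the only nontrivial ingredient is the square-root factorization of the positive semidefinite matrix $A$; every other step is a routine rank/null-space manipulation.
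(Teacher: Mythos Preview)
Your argument is correct. The paper does not actually prove this lemma; it simply states it and cites Proposition~5.4 of \cite{Puntanen_2011} as the source, so there is no ``paper's own proof'' to compare against. Your null-space argument via the factorization $A=L^{\top}L$ is the standard way to establish the result and would be an acceptable self-contained justification.
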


\begin{lemma}[Lemma 2 of \cite{Welling_2010}]
\label{lem:inv_formula_variant}
Let A be a positive definite $n \times n$ matrix, B be a positive definite $m \times m$ matrix, and C be a $m \times n$ matrix. Then, 
\begin{align*}
\left( A^{-1} + C^{\top} B^{-1} C \right)^{-1} C^{\top} B^{-1} = A C^{\top} \left( C A C^{\top} + B \right)^{-1}.
\end{align*}
\end{lemma}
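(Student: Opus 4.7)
The plan is to prove the identity by verifying it after multiplying both sides by the invertible matrix $A^{-1}+C^{\top}B^{-1}C$ on the left. Since $A^{-1}$ is positive definite and $C^{\top}B^{-1}C$ is positive semidefinite, the sum $A^{-1}+C^{\top}B^{-1}C$ is positive definite and hence invertible; similarly, $CAC^{\top}+B$ is positive definite as the sum of a positive semidefinite and a positive definite matrix, so its inverse on the right-hand side is well defined.

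First, I would observe that multiplying the left-hand side on the left by $A^{-1}+C^{\top}B^{-1}C$ immediately yields $C^{\top}B^{-1}$, by cancellation of the inverse. Next, I would compute the same product applied to the right-hand side:
\begin{align*}
\left(A^{-1}+C^{\top}B^{-1}C\right) A C^{\top}\left(CAC^{\top}+B\right)^{-1}
&= \left(C^{\top}+C^{\top}B^{-1}CAC^{\top}\right)\left(CAC^{\top}+B\right)^{-1}\\
&= C^{\top}B^{-1}\left(B+CAC^{\top}\right)\left(CAC^{\top}+B\right)^{-1}\\
&= C^{\top}B^{-1}.
\end{align*}
Thus both sides, after left-multiplication by $A^{-1}+C^{\top}B^{-1}C$, equal $C^{\top}B^{-1}$.

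Finally, since $A^{-1}+C^{\top}B^{-1}C$ is invertible, I would cancel it from the left to conclude that the left-hand side equals the right-hand side, completing the proof.

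Since the argument is a short sequence of routine algebraic manipulations relying only on the invertibility of the two matrices involved, there is no genuine obstacle; the only thing worth flagging is the brief justification that both $A^{-1}+C^{\top}B^{-1}C$ and $CAC^{\top}+B$ are invertible, which follows directly from the positive definiteness hypotheses on $A$ and $B$.
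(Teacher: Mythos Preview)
Your proof is correct. The paper does not give a detailed argument either; it simply remarks that the identity follows from the Sherman--Morrison--Woodbury formula (Theorem~18.2.8 of Harville,~1997). Your approach is a direct algebraic verification---multiplying both sides on the left by the invertible matrix $A^{-1}+C^{\top}B^{-1}C$ and checking that both reduce to $C^{\top}B^{-1}$---rather than first expanding $(A^{-1}+C^{\top}B^{-1}C)^{-1}$ via Woodbury and then simplifying. Your route is arguably more self-contained since it does not rely on an external matrix identity, whereas the paper's route is shorter to state once one has the Woodbury formula at hand; both are equally elementary and valid.
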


We can prove Lemma \ref{lem:inv_formula_variant} by using Theorem 18.2.8 of \cite{Harville_1997} known as the Sherman--Morrison--Woodbury formula.

\def\thesection{Appendix \Alph{section}}
\section{Proof of Proposition \ref{prop:m-ra-lp}}
\def\thesection{\Alph{section}}
\label{append:prop_proof}

\begin{proof}[Proof of Proposition \ref{prop:m-ra-lp}]
(a) Consider $\bm{a}^{\top} C_m(S_0,S_0) \bm{a}$ for any $\bm{a} \in \mathbb{R}^n$ and any set of locations $S_0 \subset D_0$. We will show the assertion by mathematical induction in the same way as the proof of Proposition 1 of \cite{Katzfuss_2017}. For $m=1$, we have
\begin{align*}
C_0(S_0,S_0) - C_{\tau_0}(S_0,S_0)
&=\mbox{Var}(\bm{Y}_0(S_0)) - \mbox{Var}(\bm{\tau}_0 (S_0))\\
&=\mbox{Var}(\bm{Y}_0(S_0)) - \mbox{Var}(E[\bm{Y}_0(S_0) | \Phi_0 \bm{Y}_0(Q_0)]) \\
&=\mbox{Var}(\bm{Y}_0(S_0) | \Phi_0 \bm{Y}_0(Q_0)),
\end{align*}
where the third equation holds by using the law of total variance, $Y_0(s) \sim \mbox{GP}(0,C_0)$, and the fact that 
$\mbox{Var}(\Phi_0 \bm{Y}_0(Q_0)) = \Phi_0 K^0_0 \Phi_0^{\top}$ is positive definite from the proof of Proposition 4.1 (a) of \cite{Hirano_2017a}. Thus, $C_0(S_0,S_0) - C_{\tau_0}(S_0,S_0)$ is positive semidefinite. For any $\bm{a} = (\bm{a}_1^{\top}, \ldots, \bm{a}_{J_1}^{\top})^{\top} \in \mathbb{R}^n$ such that $|\bm{a}_{j_1}| = |S_{j_1}|$ and $\bm{a}_{j_1} = \emptyset$ if $|S_{j_1}|=0$, 
\begin{align*}
\bm{a}^{\top} C_1(S_0,S_0) \bm{a} = \sum_{\substack{1 \le j_1 \le J_1, j_1 \notin A_1, \\A_1 = \left\{ j'_1 \middle| \left|S_{j'_1} \right|=0 \right\} }} \bm{a}_{j_1}^{\top} \left\{ C_0(S_{j_1}, S_{j_1}) - C_{\tau_0}(S_{j_1}, S_{j_1}) \right\} \bm{a}_{j_1} \ge 0.
\end{align*}
Therefore, the result holds for $m=1$. 

Next, assume that the result holds for $m=l$. Since $\mbox{rank} \left( \Phi_{j_1,\ldots, j_l} K_{j_1,\ldots,j_l}^l \Phi_{j_1,\ldots, j_l}^{\top} \right) = r_{j_1,\ldots, j_l}$ by the assumption and Lemma \ref{lem:rank_mat}, $\Phi_{j_1,\ldots, j_l} K_{j_1,\ldots,j_l}^l \Phi_{j_1,\ldots, j_l}^{\top}$ is positive definite, 
so that $\Phi^{(l)} C_l (Q^{(l)},Q^{(l)}) \Phi^{{(l)}^{\top}}$ is also positive definite. Therefore, for $\delta_l(s) \sim \mbox{GP}(0,C_l)$, we can define
\begin{align*}
\tau_l(\bm{s}) &= E[\delta_l(\bm{s}) | \Phi^{(l)} \bm{\delta}_l (Q^{(l)})]\\
&=C_l(\bm{s},Q^{(l)}) \Phi^{{(l)}^{\top}} \left\{\Phi^{(l)} C_l (Q^{(l)},Q^{(l)}) \Phi^{{(l)}^{\top}} \right\}^{-1} \Phi^{(l)}\bm{\delta}_l (Q^{(l)}).
\end{align*}
Then, in the same way as the argument of $m=1$, 
$C_l(S_0,S_0) - C_{\tau_l}(S_0,S_0)$ is positive semidefinite because $C_l(S_0,S_0) - C_{\tau_l}(S_0,S_0)=\mbox{Var}(\bm{\delta}_l (S_0) | \Phi^{(l)} \bm{\delta}_l (Q^{(l)}))$.

Consider $m=l+1$. For any $\bm{a} = (\bm{a}_{1,\ldots,1}^{\top}, \ldots, \bm{a}_{J_1, \ldots, J_{l+1}}^{\top})^{\top} \in \mathbb{R}^n$ such that $|\bm{a}_{j_1,\ldots,j_{l+1}}| = |S_{j_1,\ldots,j_{l+1}}|$ and $\bm{a}_{j_1,\ldots, j_{l+1}} = \emptyset$ if $|S_{j_1,\ldots,j_{l+1}}|=0$, 
\begin{align*}
\bm{a}^{\top} & C_{l+1}(S_0,S_0) \bm{a} = \sum_{\substack{1 \le j_1 \le J_1, \ldots, 1 \le j_{l+1} \le J_{l+1}, \\ 
(j_1,\ldots,j_{l+1}) \notin A_{l+1},\\ A_{l+1} = \left\{ (j'_1,\ldots,j'_{l+1}) \middle|  \left|S_{j'_1,\ldots,j'_{l+1}} \right|=0 \right\} }} \bm{a}_{j_1,\ldots,j_{l+1}}^{\top} \left\{ C_l(S_{j_1,\ldots,j_{l+1}},S_{j_1,\ldots,j_{l+1}}) \right. \\
& \left. - C_{\tau_l}(S_{j_1,\ldots,j_{l+1}},S_{j_1,\ldots,j_{l+1}}) \right\} \bm{a}_{j_1,\ldots,j_{l+1}} 
\\
\ge& 0.
\end{align*}
The result holds for $m=l+1$. The proof is completed.
\par
\bigskip
\noindent
(b) 
As shown in the proof of Proposition \ref{prop:m-ra-lp} (a), $\Phi_0 K^0_0 \Phi_0^{\top}$ is positive definite. For $m =1,\ldots,M-1$, $K_{j_1,\ldots,j_m}^m$ is positive semidefinite from Proposition \ref{prop:m-ra-lp} (a). Since $\Phi_{j_1,\ldots,j_m} K_{j_1,\ldots,j_m}^m \Phi_{j_1,\ldots,j_m}^{\top}$ is nonsingular by using Lemma \ref{lem:rank_mat}, $\Phi_{j_1,\ldots,j_m} K_{j_1,\ldots,j_m}^m \Phi_{j_1,\ldots,j_m}^{\top}$ is positive definite.
\par
\bigskip
\noindent
(c) For any set of locations $S_0 \subset D_0$, we have
\begin{align*}
C_{M \text{-RA-lp}}(S_0,S_0) =  \sum_{m=0}^{M-1} C_{\tau_{m}}(S_0,S_0) + C_M(S_0,S_0) \circ T_{\gamma}(S_0,S_0),
\end{align*}
where $C_{\tau_{m}}(S_0,S_0)$ is the block diagonal matrix whose diagonal element is $B_{j_1,\ldots,j_m}^m \Phi_{j_1,\ldots,j_m}^{\top} \\
\times \widehat{K}_{j_1,\ldots, j_m}^m \Phi_{j_1,\ldots,j_m} B_{j_1,\ldots,j_m}^{m^{\top}}$ if $|S_{j_1,\ldots,j_m}| \neq 0$. From Proposition \ref{prop:m-ra-lp} (b), $\widehat{K}_{j_1,\ldots, j_m}^m$ is positive definite, so that $C_{\tau_{m}}(S_0,S_0)$ is positive semidefinite. In addition, from Theorem 5.2.1 of \cite{Horn_1991} and  Proposition \ref{prop:m-ra-lp} (a), $C_M(S_0,S_0) \circ T_{\gamma}(S_0,S_0)$ is also positive semidefinite. 
\par
\bigskip
\noindent
(d) Since $\bm{s}_1$ and $\bm{s}_2$ are always in the same subregion $D_{j_1,\ldots,j_m}$ at each $m$th resolution ($m=0, \ldots, M$), 
\begin{align}
\label{eq:cov_s_s}
C_{M \text{-RA-lp}}(\bm{s}_1, \bm{s}_2) &= \sum_{m=0}^{M-1} C_{\tau_{m}}(\bm{s}_1, \bm{s}_2) + C_M(\bm{s}_1, \bm{s}_2)
\end{align}
and
\begin{align}
\label{eq:c_m}
C_{m}(\bm{s}_1, \bm{s}_2) &= C_{m-1}(\bm{s}_1, \bm{s}_2) - C_{\tau_{m-1}}(\bm{s}_1, \bm{s}_2) \quad (m=1,\ldots,M).
\end{align}
By substituting \eqref{eq:c_m} into \eqref{eq:cov_s_s} recursively, the assertion is obtained.
\end{proof}

\def\thesection{Appendix \Alph{section}}
\section{Derivation of Algorithms \ref{alg:log-lik_m-ra-lp} and \ref{alg:pred_dist_m-ra-lp}}
\def\thesection{\Alph{section}}
\label{append:algorithms}

\subsection{Expansion of the inversion and determinant}
\label{subappend:expansion_inversion_determinant}

The following lemma is used when deriving Algorithms \ref{alg:log-lik_m-ra-lp} and \ref{alg:pred_dist_m-ra-lp}.

\begin{lemma}
\label{lem:log-lik_m-ra-lp}
Suppose that the assumption of Proposition \ref{prop:m-ra-lp} holds.
\par
\noindent
(a) $\Sigma_{j_1,\ldots,j_m}$ is positive definite for $m=0, \ldots, M$.
\par
\noindent
(b) For $m=0,\ldots,M-1$, 
\begin{align*}
\Sigma_{j_1,\ldots,j_m}^{-1} = &V_{j_1,\ldots,j_m}^{-1} - V_{j_1,\ldots,j_m}^{-1} B_{j_1,\ldots,j_m}^m \Phi_{j_1,\ldots, j_m}^{\top} \left( \widehat{K}_{j_1,\ldots,j_m}^{m^{-1}} 
+ \Phi_{j_1,\ldots, j_m} B_{j_1,\ldots,j_m}^{m^{\top}} \right. \\ 
&\times \left. V_{j_1,\ldots,j_m}^{-1} B_{j_1,\ldots,j_m}^m \Phi_{j_1,\ldots, j_m}^{\top} \right)^{-1}  \Phi_{j_1,\ldots, j_m} B_{j_1,\ldots,j_m}^{m^{\top}} V_{j_1,\ldots,j_m}^{-1}, \\
\det \left( \Sigma_{j_1,\ldots,j_m} \right) =& \det \left( \widehat{K}_{j_1,\ldots,j_m}^{m^{-1}} + \Phi_{j_1,\ldots, j_m} B_{j_1,\ldots,j_m}^{m^{\top}} V_{j_1,\ldots,j_m}^{-1} B_{j_1,\ldots,j_m}^m \Phi_{j_1,\ldots, j_m}^{\top} \right) \\
&\times \det \left( \widehat{K}_{j_1,\ldots,j_m}^{m}  \right) \det \left( V_{j_1,\ldots,j_m} \right).
\end{align*}
\end{lemma}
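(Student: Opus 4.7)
The plan is to prove (a) by a downward induction on $m$ (from $m = M$ to $m = 0$), then to derive (b) as a direct application of the Sherman--Morrison--Woodbury formula and the matrix determinant lemma (Theorems 18.1.1 and 18.2.8 of \cite{Harville_1997}), using (a) to guarantee invertibility of the pieces.

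For the base case $m = M$ of part (a), note that $K_{j_1,\ldots,j_M}^M$ is positive semidefinite by Proposition \ref{prop:m-ra-lp}(a). Since $T_{\gamma}$ is a correlation function (hence positive semidefinite), the Schur product theorem (Theorem 5.2.1 of \cite{Horn_1991}) implies that $K_{j_1,\ldots,j_M}^M \circ T_{\gamma}(S_{j_1,\ldots,j_M}, S_{j_1,\ldots,j_M})$ is positive semidefinite. Adding the positive definite matrix $\tau^2 \bm{\mbox{I}}_{|S_{j_1,\ldots,j_M}|}$ yields a positive definite matrix, so $\Sigma_{j_1,\ldots,j_M}$ is positive definite.

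For the inductive step, assume $\Sigma_{j_1,\ldots,j_m,j_{m+1}}$ is positive definite for every $j_{m+1} = 1,\ldots,J_{m+1}$. Then $V_{j_1,\ldots,j_m}$, being block diagonal with positive definite diagonal blocks, is positive definite. By Proposition \ref{prop:m-ra-lp}(b), $\widehat{K}_{j_1,\ldots,j_m}^m$ is positive definite, so the first term $B_{j_1,\ldots,j_m}^m \Phi_{j_1,\ldots,j_m}^{\top} \widehat{K}_{j_1,\ldots,j_m}^m \Phi_{j_1,\ldots,j_m} B_{j_1,\ldots,j_m}^{m^{\top}}$ is positive semidefinite as a congruence of a positive definite matrix. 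The sum of a positive semidefinite and a positive definite matrix is positive definite, completing the induction.

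For part (b), having established in (a) that $V_{j_1,\ldots,j_m}$ is positive definite (and therefore invertible), and knowing that $\widehat{K}_{j_1,\ldots,j_m}^m$ is positive definite by Proposition \ref{prop:m-ra-lp}(b), the claimed inversion identity follows by applying the Sherman--Morrison--Woodbury formula to $\Sigma_{j_1,\ldots,j_m} = V_{j_1,\ldots,j_m} + \left(B_{j_1,\ldots,j_m}^m \Phi_{j_1,\ldots,j_m}^{\top}\right) \widehat{K}_{j_1,\ldots,j_m}^m \left(\Phi_{j_1,\ldots,j_m} B_{j_1,\ldots,j_m}^{m^{\top}}\right)$, identifying $A = V_{j_1,\ldots,j_m}$, $U = B_{j_1,\ldots,j_m}^m \Phi_{j_1,\ldots,j_m}^{\top}$, $C = \widehat{K}_{j_1,\ldots,j_m}^m$, and $V = \Phi_{j_1,\ldots,j_m} B_{j_1,\ldots,j_m}^{m^{\top}}$. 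The determinant identity follows analogously from the matrix determinant lemma $\det(A + UCV) = \det(C^{-1} + VA^{-1}U)\det(C)\det(A)$. The only thing to check is that the resulting $r_{j_1,\ldots,j_m} \times r_{j_1,\ldots,j_m}$ bracketed matrix $\widehat{K}_{j_1,\ldots,j_m}^{m^{-1}} + \Phi_{j_1,\ldots,j_m} B_{j_1,\ldots,j_m}^{m^{\top}} V_{j_1,\ldots,j_m}^{-1} B_{j_1,\ldots,j_m}^m \Phi_{j_1,\ldots,j_m}^{\top}$ is invertible; this is immediate because it is the sum of a positive definite matrix and a positive semidefinite matrix.

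The main obstacle is really just the bookkeeping in the downward induction for (a): one must carefully observe that $V_{j_1,\ldots,j_m}$ inherits positive definiteness from the lower-resolution $\Sigma$'s in block-diagonal form, so that the Woodbury machinery in (b) applies unconditionally once (a) is in hand. Everything else is a direct citation of standard matrix identities.
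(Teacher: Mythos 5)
Your proposal is correct and follows essentially the same route as the paper: a downward induction on $m$ for (a), starting with the Schur product theorem plus the nugget term at resolution $M$ and then using the block-diagonal positive definiteness of $V_{j_1,\ldots,j_m}$ together with Proposition \ref{prop:m-ra-lp} (b), and then (b) via the Sherman--Morrison--Woodbury and determinant formulas (Theorems 18.1.1 and 18.2.8 of \cite{Harville_1997}). The only cosmetic difference is that you spell out the invertibility check of the bracketed $r_{j_1,\ldots,j_m}\times r_{j_1,\ldots,j_m}$ matrix explicitly, which the paper states more tersely.
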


\begin{proof}[Proof of Lemma \ref{lem:log-lik_m-ra-lp}]
(a) For $m=M$, $\Sigma_{j_1,\ldots,j_M}=K_{j_1,\ldots,j_M}^M \circ T_{\gamma}(S_{j_1,\ldots,j_M},S_{j_1,\ldots,j_M}) + \tau^2 \bm{\mbox{I}}_{\left| S_{j_1,\ldots, j_M}\right|}$ is positive definite from the proof of Proposition \ref{prop:m-ra-lp} (c). 
For $m=l+1$, assume that $\Sigma_{j_1,\ldots,j_{l+1}}$ is positive definite. For $m=l$, $V_{j_1,\ldots,j_l}$ is positive definite from the definition of $V_{j_1,\ldots,j_l}$. Additionally, from Proposition \ref{prop:m-ra-lp} (b), $\widehat{K}_{j_1,\ldots,j_l}^{l}$ is positive definite. Thus, $\Sigma_{j_1,\ldots,j_l}$ is positive definite. The proof is completed by mathematical induction.
\par
\bigskip
\noindent
(b) From the proof of Lemma \ref{lem:log-lik_m-ra-lp} (a), $V_{j_1,\ldots,j_m}$, $\widehat{K}_{j_1,\ldots,j_m}^{m}$, and 
$\widehat{K}_{j_1,\ldots,j_m}^{m^{-1}} + \Phi_{j_1,\ldots, j_m} B_{j_1,\ldots,j_m}^{m^{\top}} \\ \times  V_{j_1,\ldots,j_m}^{-1} B_{j_1,\ldots,j_m}^m \Phi_{j_1,\ldots, j_m}^{\top}$ 
are positive definite for $m=0,\ldots, M-1$. From Theorems 18.1.1 and 18.2.8 of \cite{Harville_1997}, the assertion is obtained.
\end{proof}

\subsection{Derivation of Algorithm \ref{alg:log-lik_m-ra-lp}}
\label{subappend:algorithm_log-lik} 

From \eqref{eq:cov_resolution_m}, \eqref{eq:lp_resolution_m}, and \eqref{eq:cov_resolution_M}, for $1 \le k \le m$, $m=1,\ldots,M$, we obtain
\begin{align}
K_{j_1,\ldots,j_m}^k &= C_k(Q_{j_1,\ldots,j_k},Q_{j_1,\ldots,j_m}) \notag \\
&= C_{k-1}(Q_{j_1,\ldots,j_k},Q_{j_1,\ldots,j_m}) - K_{j_1,\ldots,j_k}^{{k-1}^{\top}} \Phi_{j_1,\ldots,j_{k-1}}^{\top}
\widehat{K}_{j_1,\ldots,j_{k-1}}^{k-1} \Phi_{j_1,\ldots,j_{k-1}} K_{j_1,\ldots,j_m}^{k-1}. \label{eq:K_m_recursive}
\end{align}
By applying \eqref{eq:cov_resolution_m} and \eqref{eq:lp_resolution_m} to \eqref{eq:K_m_recursive} recursively, \eqref{eq:K_expansion} is obtained.

Next, we define $A_{j_1,\ldots,j_m}^{k,l} = B_{j_1,\ldots,j_m}^{k^{\top}} V_{j_1,\ldots,j_m}^{-1} B_{j_1,\ldots,j_m}^l$ ($0 \le k \le l \le m,\; m=0,\ldots,M-1$), $A_{j_1,\ldots,j_m}^{l,k} = A_{j_1,\ldots,j_m}^{{k,l}^{\top}}$, and $\widetilde{A}_{j_1,\ldots,j_m}^{k,l} = B_{j_1,\ldots,j_m}^{k^{\top}} \Sigma_{j_1,\ldots,j_m}^{-1} B_{j_1,\ldots,j_m}^l$ ($0 \le k \le l < m,\; m=1,\ldots,M$). Since $B_{j_1,\ldots,j_m}^{k} = (B_{j_1,\ldots,j_m,1}^{k^{\top}}, \ldots, B_{j_1,\ldots,j_m,J_{m+1}}^{k^{\top}}  )^{\top}$, we have 
\begin{align*}
A_{j_1,\ldots,j_m}^{k,l} &= \sum _{j_{m+1}=1}^{J_{m+1}} B_{j_1,\ldots,j_m,j_{m+1}}^{k^{\top}} \Sigma_{j_1,\ldots,j_m,j_{m+1}}^{-1} B_{j_1,\ldots,j_m,j_{m+1}}^{l} \\
&=\sum _{j_{m+1}=1}^{J_{m+1}} \widetilde{A}_{j_1,\ldots,j_m,j_{m+1}}^{k,l}.
\end{align*}
For $m=1,\ldots,M-1$, it follows from Lemma \ref{lem:log-lik_m-ra-lp} (b) that
\begin{align*}
\widetilde{A}_{j_1,\ldots,j_m}^{k,l} =& B_{j_1,\ldots,j_m}^{k^{\top}} V_{j_1,\ldots,j_m}^{-1} B_{j_1,\ldots,j_m}^l - B_{j_1,\ldots,j_m}^{k^{\top}} V_{j_1,\ldots,j_m}^{-1} B_{j_1,\ldots,j_m}^m \Phi_{j_1,\ldots, j_m}^{\top} \left( \widehat{K}_{j_1,\ldots,j_m}^{m^{-1}} \right.\\
&\left. +  \Phi_{j_1,\ldots, j_m} A_{j_1,\ldots,j_m}^{m,m} \Phi_{j_1,\ldots, j_m}^{\top} \right)^{-1} \Phi_{j_1,\ldots, j_m} B_{j_1,\ldots,j_m}^{m^{\top}} V_{j_1,\ldots,j_m}^{-1} B_{j_1,\ldots,j_m}^{l}\\
=& A_{j_1,\ldots,j_m}^{k,l} - A_{j_1,\ldots,j_m}^{k,m} \Phi_{j_1,\ldots,j_m}^{\top} \widetilde{K}_{j_1,\ldots,j_m}^m \Phi_{j_1,\ldots,j_m} A_{j_1,\ldots,j_m}^{m,l}.
\end{align*}
Thus, \eqref{eq:A_recursive} and \eqref{eq:tilA_recursive} hold.

Also, we define $\omega_{j_1,\ldots,j_m}^{k} = B_{j_1,\ldots,j_m}^{k^{\top}} V_{j_1,\ldots,j_m}^{-1} \bm{Z}(S_{j_1,\ldots,j_m})$ ($0 \le k \le m,\; m=0,\ldots,M-1$) and $\widetilde{\omega}_{j_1,\ldots,j_m}^{k} = B_{j_1,\ldots,j_m}^{k^{\top}} \Sigma_{j_1,\ldots,j_m}^{-1} \bm{Z}(S_{j_1,\ldots,j_m})$ ($0 \le k < m,\; m=1,\ldots,M$). By the same argument as the case of $A_{j_1,\ldots,j_m}^{k,l}$ and $\widetilde{A}_{j_1,\ldots,j_m}^{k,l}$, we can obtain \eqref{eq:omega_recursive} and \eqref{eq:tilomega_recursive}.

Finally, we define $d_{j_1,\ldots,j_m} = \log \left\{ \mbox{det} \left( \Sigma_{j_1,\ldots,j_m} \right) \right\}$ and $u_{j_1,\ldots,j_m} = \bm{Z}(S_{j_1,\ldots,j_m})^{\top} \Sigma_{j_1,\ldots,j_m}^{-1}\\ \times \bm{Z}(S_{j_1,\ldots,j_m})$ ($m=0,\ldots,M$). From $\Sigma_0 = C_{M \text{-RA-lp}}(S_0,S_0) + \tau^2 \bm{\mbox{I}}_n$, it follows that $d_0 = \log \left[ \mbox{det} \left\{ C_{M \text{-RA-lp}}(S_0,S_0) \right. \right. \\ \left. \left.+ \tau^2 \bm{\mbox{I}}_n \right\} \right]$ and $u_0 = \bm{Z}(S_0)^{\top} \left\{ C_{M \text{-RA-lp}}(S_0,S_0) + \tau^2 \bm{\mbox{I}}_n \right\}^{-1} \bm{Z}(S_0)$. By using Lemma \ref{lem:log-lik_m-ra-lp} (b), for $m=0,\ldots,M-1$, we have
\begin{align*}
d_{j_1,\ldots,j_m} =& \log \left\{ \mbox{det} \left( \widehat{K}_{j_1,\ldots,j_m}^{m^{-1}} + \Phi_{j_1,\ldots, j_m} B_{j_1,\ldots,j_m}^{m^{\top}} V_{j_1,\ldots,j_m}^{-1} B_{j_1,\ldots,j_m}^m \Phi_{j_1,\ldots, j_m}^{\top} \right) \right.\\
& \left. \times \mbox{det} \left( \widehat{K}_{j_1,\ldots,j_m}^{m}  \right) \mbox{det} \left( V_{j_1,\ldots,j_m} \right) \right\} \\
=&-\log \left\{ \mbox{det} \left( \widetilde{K}_{j_1,\ldots,j_m}^m \right) \right\} + \log \left\{ \mbox{det} \left( \widehat{K}_{j_1,\ldots,j_m}^m \right) \right\}+ \sum_{j_{m+1}=1}^{J_{m+1}} d_{j_1,\ldots,j_m,j_{m+1}}, \\
u_{j_1,\ldots,j_m} =& \bm{Z}(S_{j_1,\ldots,j_m})^{\top} V_{j_1,\ldots,j_m}^{-1} \bm{Z}(S_{j_1,\ldots,j_m}) - \bm{Z}(S_{j_1,\ldots,j_m})^{\top} V_{j_1,\ldots,j_m}^{-1} B_{j_1,\ldots,j_m}^m \Phi_{j_1,\ldots, j_m}^{\top} \\
&\times \left( \widehat{K}_{j_1,\ldots,j_m}^{m^{-1}} 
+ \Phi_{j_1,\ldots, j_m} B_{j_1,\ldots,j_m}^{m^{\top}} V_{j_1,\ldots,j_m}^{-1} B_{j_1,\ldots,j_m}^m \Phi_{j_1,\ldots, j_m}^{\top} \right)^{-1}  \Phi_{j_1,\ldots, j_m} \\
&\times B_{j_1,\ldots,j_m}^{m^{\top}} V_{j_1,\ldots,j_m}^{-1}\bm{Z}(S_{j_1,\ldots,j_m})\\
=& - \omega_{j_1,\ldots,j_m}^{m^{\top}} \Phi_{j_1,\ldots,j_m}^{\top} \widetilde{K}_{j_1,\ldots,j_m}^m \Phi_{j_1,\ldots,j_m} \omega_{j_1,\ldots,j_m}^m + \sum_{j_{m+1}=1}^{J_{m+1}} u_{j_1,\ldots,j_m,j_{m+1}}.
\end{align*}

\subsection{Derivation of Algorithm \ref{alg:pred_dist_m-ra-lp}}
\label{subappend:algorithm_pred_dist} 

We will derive Algorithm \ref{alg:pred_dist_m-ra-lp} by an argument similar to that used in the proof of Proposition 2 in \cite{Katzfuss_2017}. Let $Y_{M \text{-RA-lp}}(\bm{s}) \sim \mbox{GP}(0,C_{M \text{-RA-lp}})$ be a zero-mean Gaussian process with the degenerate covariance function $C_{M \text{-RA-lp}}$. Then, we can write
\begin{align*}
Y_{M \text{-RA-lp}}(\bm{s}) = \sum_{m=0}^{M-1} C_m(\bm{s}, Q^{(m)}) \Phi^{{(m)}^{\top}} \bm{\eta}^{(m)} + \delta_M (\bm{s}),
\end{align*}
where $\bm{\eta}^{(m)} = (\bm{\eta}_{1,\ldots,1}^{\top}, \ldots, \bm{\eta}_{J_1,\ldots,J_m}^{\top})^{\top}$, $\bm{\eta}^{(0)} = \bm{\eta}_0$, $\bm{\eta}_{j_1,\ldots,j_m} \sim \mathcal{N} \left(\bm{0}, \widehat{K}_{j_1,\ldots,j_m}^m \right)$ ($m=0,\ldots,M-1$), $\delta_M (\bm{s}) \sim \mbox{GP}(0,C_M \times T_{\gamma})$, and $\bm{\eta}_{j_1,\ldots,j_m}$'s are independent of each other and of $\{ \delta_M (\bm{s}) \}$. 

We define $\mathcal{E}_l = \left\{ \bm{\eta}_{j_1,\ldots,j_a} \middle| 1 \le j_1 \le J_1, \ldots, 1 \le j_a \le J_a, \; a = 0,\ldots,l \right\}$ ($l=0,\ldots,M-1$), $\mathcal{E}_{-1} = \emptyset$, and $L_{j_1,\ldots,j_M}^l = \mbox{Cov} \left( \bm{Y}_{M \text{-RA-lp}}(S_{j_1,\ldots,j_M}^P), \bm{Y}_{M \text{-RA-lp}}(S_{j_1,\ldots,j_l}) \middle| \mathcal{E}_{l-1} \right)$ ($0 \le l \le M$). 
Note that the index $(j_1,\ldots,j_M)$ is fixed. Then, for $0 \le l < M$,
\begin{align}
L_{j_1,\ldots,j_M}^l 
=& C_l (S_{j_1,\ldots,j_M}^P, Q^{(l)}) \Phi^{{(l)}^{\top}} 
\begin{pmatrix}
\widehat{K}_{1,\ldots,1}^l & & O \\
 & \ddots & \\
 O & & \widehat{K}_{J_1,\ldots,J_l}^l
\end{pmatrix}
\Phi^{(l)} C_l(S_{j_1,\ldots,j_l}, Q^{(l)})^{\top} \notag \\
&+ \mbox{Cov} \left( \bm{Y}_{M \text{-RA-lp}}(S_{j_1,\ldots,j_M}^P), \bm{Y}_{M \text{-RA-lp}}(S_{j_1,\ldots,j_l}) \middle| \mathcal{E}_l \right). \label{eq:L_cov_pred}
\end{align}
When we define $B_{j_1,\ldots,j_M}^{l,P} = C_l (S_{j_1,\ldots,j_M}^P, Q_{j_1,\ldots,j_l})$ for $0 \le l \le M$, the first term of \eqref{eq:L_cov_pred} is
\begin{align*}
C_l (S_{j_1,\ldots,j_M}^P, Q_{j_1,\ldots,j_l}) \Phi_{j_1,\ldots,j_l}^{\top} \widehat{K}_{j_1,\ldots,j_l}^l \Phi_{j_1,\ldots,j_l} C_l(S_{j_1,\ldots,j_l}, Q_{j_1,\ldots,j_l})^{\top} \\
= B_{j_1,\ldots,j_M}^{l,P} \Phi_{j_1,\ldots,j_l}^{\top} \widehat{K}_{j_1,\ldots,j_l}^l \Phi_{j_1,\ldots,j_l} B_{j_1,\ldots,j_l}^{l^{\top}}.
\end{align*}
By noting that $\mbox{Cov} \left( \bm{Y}_{M \text{-RA-lp}}(S_{j_1,\ldots,j_M}^P), \bm{Y}_{M \text{-RA-lp}}(S_{j_1,\ldots,j_l, j^{\prime}_{l+1}}) \middle| \mathcal{E}_l \right) = O$ for $j^{\prime}_{l+1} \neq j_{l+1}$ and $\mbox{Cov} \left( \bm{Y}_{M \text{-RA-lp}}(S_{j_1,\ldots,j_M}^P), \bm{Y}_{M \text{-RA-lp}}(S_{j_1,\ldots,j_l, j_{l+1}}) \middle| \mathcal{E}_l \right) = L_{j_1,\ldots,j_M}^{l+1}$,
the second term of \eqref{eq:L_cov_pred} is expressed as
\begin{align*}
\mbox{Cov} \left( \bm{Y}_{M \text{-RA-lp}}(S_{j_1,\ldots,j_M}^P), \bm{Y}_{M \text{-RA-lp}}(S_{j_1,\ldots,j_l}) \middle| \mathcal{E}_l \right) =& 
\overbrace{
\begin{pmatrix}
O & \smash[b]{\underbrace{L_{j_1,\ldots,j_M}^{l+1}}_{\left| S_{j_1,\ldots,j_l,j_{l+1}} \right|}} & O
\end{pmatrix}
}^{\left|S_{j_1,\ldots,j_l} \right|}
\\ \\ \\
=&\widetilde{L}_{j_1,\ldots,j_M}^l, \quad (\text{say}).
\end{align*}
Therefore, for $0 \le l \le M$, we have
\begin{align}
\label{eq:L_def}
L_{j_1,\ldots,j_M}^l =
\begin{cases}
B_{j_1,\ldots,j_M}^{l,P} \Phi_{j_1,\ldots,j_l}^{\top} \widehat{K}_{j_1,\ldots,j_l}^l \Phi_{j_1,\ldots,j_l} B_{j_1,\ldots,j_l}^{l^{\top}} + \widetilde{L}_{j_1,\ldots,j_M}^l, \quad 0 \le l < M,\\
C_M(S_{j_1,\ldots,j_M}^P, S_{j_1,\ldots,j_M}) \circ T_{\gamma}(S_{j_1,\ldots,j_M}^P, S_{j_1,\ldots,j_M}), \quad l=M.
\end{cases}
\end{align}
From the definition of $\widetilde{L}_{j_1,\ldots,j_M}^l$,
\begin{align}
\widetilde{L}_{j_1,\ldots,j_M}^l V_{j_1,\ldots,j_l}^{-1} B_{j_1,\ldots,j_l}^k =& L_{j_1,\ldots,j_M}^{l+1} \Sigma_{j_1,\ldots,j_l,j_{l+1}}^{-1} B_{j_1,\ldots,j_l,j_{l+1}}^k, \quad 0 \le k \le l, \label{eq:relation_L_til_L_B} \\
\widetilde{L}_{j_1,\ldots,j_M}^l V_{j_1,\ldots,j_l}^{-1} \bm{Z}(S_{j_1,\ldots,j_l}) =& L_{j_1,\ldots,j_M}^{l+1} \Sigma_{j_1,\ldots,j_l,j_{l+1}}^{-1} \bm{Z}(S_{j_1,\ldots,j_l,j_{l+1}}), \label{eq:relation_L_til_L_Z} \\
\widetilde{L}_{j_1,\ldots,j_M}^l V_{j_1,\ldots,j_l}^{-1} \widetilde{L}_{j_1,\ldots,j_M}^{l^{\top}} =& L_{j_1,\ldots,j_M}^{l+1} \Sigma_{j_1,\ldots,j_l,j_{l+1}}^{-1}  L_{j_1,\ldots,j_M}^{{l+1}^{\top}}. \label{eq:relation_L_til_L_L}
\end{align}

Now, it follows from Lemma \ref{lem:inv_formula_variant} that
\begin{align}
\label{eq:relation_K_til_K_hat}
\widetilde{K}_{j_1,\ldots,j_l}^l \Phi_{j_1,\ldots,j_l} B_{j_1,\ldots,j_l}^{l^{\top}} V_{j_1,\ldots,j_l}^{-1} = \widehat{K}_{j_1,\ldots,j_l}^l \Phi_{j_1,\ldots,j_l} B_{j_1,\ldots,j_l}^{l^{\top}} \Sigma_{j_1,\ldots,j_l}^{-1}.
\end{align}
Moreover, from Lemma \ref{lem:log-lik_m-ra-lp} (b) and the definition of $A_{j_1,\ldots,j_m}^{k,l}$, $\Sigma_{j_1,\ldots,j_l}^{-1}$ is expressed as
\begin{align}
\label{eq:Sigma_inv_another_expression}
\Sigma_{j_1,\ldots,j_l}^{-1} = V_{j_1,\ldots,j_l}^{-1} - V_{j_1,\ldots,j_l}^{-1} B_{j_1,\ldots,j_l}^l \Phi_{j_1,\ldots, j_l}^{\top} \widetilde{K}_{j_1,\ldots,j_l}^l \Phi_{j_1,\ldots, j_l} B_{j_1,\ldots,j_l}^{l^{\top}} V_{j_1,\ldots,j_l}^{-1}.
\end{align}

Define $\bm{\mu}_{j_1,\ldots,j_M}^l = L_{j_1,\ldots,j_M}^l \Sigma_{j_1,\ldots,j_l}^{-1} \bm{Z}(S_{j_1,\ldots,j_l})$ for $0 \le l \le M$. From the definition of $L_{j_1,\ldots,j_M}^l$, $\bm{\mu}_{j_1,\ldots,j_M}^0 = C_{M \text{-RA-lp}}(S_{j_1,\ldots,j_M}^P,S_0) \left\{ C_{M \text{-RA-lp}}(S_0,S_0) + \tau^2 \bm{\mbox{I}}_n \right\}^{-1} \bm{Z}(S_0)$. For $0 \le l < M$, it follows from \eqref{eq:L_def}, \eqref{eq:relation_L_til_L_B}, \eqref{eq:relation_L_til_L_Z}, \eqref{eq:relation_K_til_K_hat}, and \eqref{eq:Sigma_inv_another_expression} that
\begin{align}
\bm{\mu}_{j_1,\ldots,j_M}^l =& B_{j_1,\ldots,j_M}^{l,P} \Phi_{j_1,\ldots,j_l}^{\top} \widehat{K}_{j_1,\ldots,j_l}^l \Phi_{j_1,\ldots,j_l} B_{j_1,\ldots,j_l}^{l^{\top}} \Sigma_{j_1,\ldots,j_l}^{-1} \bm{Z}(S_{j_1,\ldots,j_l}) \notag\\
&+ \widetilde{L}_{j_1,\ldots,j_M}^l \Sigma_{j_1,\ldots,j_l}^{-1} \bm{Z}(S_{j_1,\ldots,j_l}) \notag \\
=& B_{j_1,\ldots,j_M}^{l,P} \Phi_{j_1,\ldots,j_l}^{\top} \widetilde{K}_{j_1,\ldots,j_l}^l \Phi_{j_1,\ldots,j_l} B_{j_1,\ldots,j_l}^{l^{\top}} V_{j_1,\ldots,j_l}^{-1} \bm{Z}(S_{j_1,\ldots,j_l}) \notag\\
&+ \widetilde{L}_{j_1,\ldots,j_M}^l V_{j_1,\ldots,j_l}^{-1} \bm{Z}(S_{j_1,\ldots,j_l}) \notag \\
&-  \widetilde{L}_{j_1,\ldots,j_M}^l V_{j_1,\ldots,j_l}^{-1} B_{j_1,\ldots,j_l}^l \Phi_{j_1,\ldots, j_l}^{\top} \widetilde{K}_{j_1,\ldots,j_l}^l \Phi_{j_1,\ldots, j_l} B_{j_1,\ldots,j_l}^{l^{\top}} V_{j_1,\ldots,j_l}^{-1} \bm{Z}(S_{j_1,\ldots,j_l}) \notag \\
=& B_{j_1,\ldots,j_M}^{l,P} \Phi_{j_1,\ldots,j_l}^{\top} \widetilde{K}_{j_1,\ldots,j_l}^l \Phi_{j_1,\ldots,j_l} \omega_{j_1,\ldots,j_l}^l + L_{j_1,\ldots,j_M}^{l+1} \Sigma_{j_1,\ldots,j_{l+1}}^{-1} \bm{Z}(S_{j_1,\ldots,j_{l+1}}) \notag \\
&- L_{j_1,\ldots,j_M}^{l+1} \Sigma_{j_1,\ldots,j_{l+1}}^{-1} B_{j_1,\ldots,j_{l+1}}^l \Phi_{j_1,\ldots, j_l}^{\top} \widetilde{K}_{j_1,\ldots,j_l}^l \Phi_{j_1,\ldots, j_l} \omega_{j_1,\ldots,j_l}^l \notag \\
=& \bm{\mu}_{j_1,\ldots,j_M}^{l+1} +  \widetilde{B}_{j_1,\ldots,j_M}^{l+1, l} \Phi_{j_1,\ldots, j_l}^{\top} \widetilde{K}_{j_1,\ldots,j_l}^l \Phi_{j_1,\ldots, j_l} \omega_{j_1,\ldots,j_l}^l, \label{eq:mu_recursive_l}
\end{align}
where $\widetilde{B}_{j_1,\ldots,j_M}^{l+1, l} = B_{j_1,\ldots,j_M}^{l, P} - L_{j_1,\ldots,j_M}^{l+1} \Sigma_{j_1,\ldots,j_{l+1}}^{-1} B_{j_1,\ldots,j_{l+1}}^{l}$. By applying \eqref{eq:mu_recursive_l} to $\bm{\mu}_{j_1,\ldots,j_M}^0$ recursively, we can obtain
\begin{align*}
\bm{\mu}_{j_1,\ldots,j_M}^0 = & L_{j_1,\ldots,j_M}^M \Sigma_{j_1,\ldots,j_M}^{-1} \bm{Z}(S_{j_1,\ldots,j_M}) \\
&+ \sum_{k=0}^{M-1} \widetilde{B}_{j_1,\ldots,j_M}^{k+1, k} \Phi_{j_1,\ldots,j_k}^{\top} \widetilde{K}_{j_1,\ldots,j_k}^k \Phi_{j_1,\ldots,j_k} \omega_{j_1,\ldots,j_k}^k.
\end{align*}

Next, define $V_{j_1,\ldots,j_M}^{l,P} = \mbox{Var} \left( \bm{Y}_{M \text{-RA-lp}}(S_{j_1,\ldots,j_M}^P) \middle| \mathcal{E}_{l-1} \right)$ for $0 \le l \le M$. By a derivation 
similar to that of \eqref{eq:L_def},
\begin{align}
\label{eq:V_def}
V_{j_1,\ldots,j_M}^{l,P} =
\begin{cases}
B_{j_1,\ldots,j_M}^{l,P} \Phi_{j_1,\ldots,j_l}^{\top} \widehat{K}_{j_1,\ldots,j_l}^l \Phi_{j_1,\ldots,j_l} B_{j_1,\ldots,j_M}^{{l,P}^{\top}} + V_{j_1,\ldots,j_M}^{l+1,P}, \quad 0 \le l < M,\\
C_M(S_{j_1,\ldots,j_M}^P, S_{j_1,\ldots,j_M}^P) \circ T_{\gamma}(S_{j_1,\ldots,j_M}^P, S_{j_1,\ldots,j_M}^P), \quad l=M.
\end{cases}
\end{align}

For $0 \le l < M$, it follows from \eqref{eq:L_def}, \eqref{eq:relation_L_til_L_B}, \eqref{eq:relation_L_til_L_L}, \eqref{eq:relation_K_til_K_hat}, and \eqref{eq:Sigma_inv_another_expression} that
\begin{align}
L_{j_1,\ldots,j_M}^l \Sigma_{j_1,\ldots,j_l}^{-1} L_{j_1,\ldots,j_M}^{l^{\top}} =& B_{j_1,\ldots,j_M}^{l,P} \Phi_{j_1,\ldots,j_l}^{\top} \widehat{K}_{j_1,\ldots,j_l}^l \Phi_{j_1,\ldots,j_l} B_{j_1,\ldots,j_l}^{l^{\top}} \Sigma_{j_1,\ldots,j_l}^{-1} B_{j_1,\ldots,j_l}^l \notag\\
&\times \Phi_{j_1,\ldots,j_l}^{\top} \widehat{K}_{j_1,\ldots,j_l}^l \Phi_{j_1,\ldots,j_l} B_{j_1,\ldots,j_M}^{{l,P}^{\top}} \notag \\
&+ \widetilde{L}_{j_1,\ldots,j_M}^l \Sigma_{j_1,\ldots,j_l}^{-1} B_{j_1,\ldots,j_l}^l \Phi_{j_1,\ldots,j_l}^{\top} \widehat{K}_{j_1,\ldots,j_l}^l \Phi_{j_1,\ldots,j_l} B_{j_1,\ldots,j_M}^{{l,P}^{\top}} \notag \\
&+ B_{j_1,\ldots,j_M}^{l,P} \Phi_{j_1,\ldots,j_l}^{\top} \widehat{K}_{j_1,\ldots,j_l}^l \Phi_{j_1,\ldots,j_l} B_{j_1,\ldots,j_l}^{l^{\top}} \Sigma_{j_1,\ldots,j_l}^{-1} \widetilde{L}_{j_1,\ldots,j_M}^{l^{\top}} \notag \\
&+ \widetilde{L}_{j_1,\ldots,j_M}^l \Sigma_{j_1,\ldots,j_l}^{-1} \widetilde{L}_{j_1,\ldots,j_M}^{l^{\top}} \notag \\
=& B_{j_1,\ldots,j_M}^{l,P} \Phi_{j_1,\ldots,j_l}^{\top} \widetilde{K}_{j_1,\ldots,j_l}^l \Phi_{j_1,\ldots,j_l} A_{j_1,\ldots,j_l}^{l,l} \Phi_{j_1,\ldots,j_l}^{\top} \widehat{K}_{j_1,\ldots,j_l}^l \notag\\
&\times \Phi_{j_1,\ldots,j_l} B_{j_1,\ldots,j_M}^{{l,P}^{\top}} \notag \\
&+ L_{j_1,\ldots,j_M}^{l+1} \Sigma_{j_1,\ldots,j_{l+1}}^{-1} B_{j_1,\ldots,j_{l+1}}^l \Phi_{j_1,\ldots,j_l}^{\top} \widetilde{K}_{j_1,\ldots,j_l}^l \Phi_{j_1,\ldots,j_l} B_{j_1,\ldots,j_M}^{{l,P}^{\top}} \notag \\
&+ B_{j_1,\ldots,j_M}^{l,P} \Phi_{j_1,\ldots,j_l}^{\top} \widetilde{K}_{j_1,\ldots,j_l}^l \Phi_{j_1,\ldots,j_l} B_{j_1,\ldots,j_{l+1}}^{l^{\top}} \Sigma_{j_1,\ldots,j_{l+1}}^{-1} L_{j_1,\ldots,j_M}^{{l+1}^{\top}} \notag \\
&+ L_{j_1,\ldots,j_M}^{l+1} \Sigma_{j_1,\ldots,j_{l+1}}^{-1} L_{j_1,\ldots,j_M}^{{l+1}^{\top}} \notag \\
&- L_{j_1,\ldots,j_M}^{l+1} \Sigma_{j_1,\ldots,j_{l+1}}^{-1} B_{j_1,\ldots,j_{l+1}}^l \Phi_{j_1,\ldots, j_l}^{\top} \widetilde{K}_{j_1,\ldots,j_l}^l \Phi_{j_1,\ldots, j_l} \notag \\
&\times B_{j_1,\ldots,j_{l+1}}^{l^{\top}} \Sigma_{j_1,\ldots,j_{l+1}}^{-1} L_{j_1,\ldots,j_M}^{{l+1}^{\top}} \notag \\
=& L_{j_1,\ldots,j_M}^{l+1} \Sigma_{j_1,\ldots,j_{l+1}}^{-1} L_{j_1,\ldots,j_M}^{{l+1}^{\top}} \notag \\
&+ B_{j_1,\ldots,j_M}^{l,P} \Phi_{j_1,\ldots,j_l}^{\top} \widetilde{K}_{j_1,\ldots,j_l}^l \Phi_{j_1,\ldots,j_l} A_{j_1,\ldots,j_l}^{l,l} \Phi_{j_1,\ldots,j_l}^{\top} \widehat{K}_{j_1,\ldots,j_l}^l \notag\\
&\times \Phi_{j_1,\ldots,j_l} B_{j_1,\ldots,j_M}^{{l,P}^{\top}} \notag \\
&- \widetilde{B}_{j_1,\ldots,j_M}^{l+1, l} \Phi_{j_1,\ldots,j_l}^{\top} \widetilde{K}_{j_1,\ldots,j_l}^l \Phi_{j_1,\ldots,j_l} \widetilde{B}_{j_1,\ldots,j_M}^{{l+1, l}^{\top}} \notag \\
&+ B_{j_1,\ldots,j_M}^{l,P} \Phi_{j_1,\ldots,j_l}^{\top} \widetilde{K}_{j_1,\ldots,j_l}^l \Phi_{j_1,\ldots,j_l} B_{j_1,\ldots,j_M}^{{l,P}^{\top}}. \label{eq:L_Sigma_L}
\end{align}
Also, since $\Phi_{j_1,\ldots,j_l} A_{j_1,\ldots,j_l}^{l,l} \Phi_{j_1,\ldots,j_l}^{\top} = \widetilde{K}_{j_1,\ldots,j_l}^{l^{-1}} - \widehat{K}_{j_1,\ldots,j_l}^{l^{-1}}$ for $0 \le l \le M-1$, we can show that
\begin{align}
&B_{j_1,\ldots,j_M}^{l,P} \Phi_{j_1,\ldots,j_l}^{\top} \widetilde{K}_{j_1,\ldots,j_l}^l \Phi_{j_1,\ldots,j_l} A_{j_1,\ldots,j_l}^{l,l} \Phi_{j_1,\ldots,j_l}^{\top} \widehat{K}_{j_1,\ldots,j_l}^l \Phi_{j_1,\ldots,j_l} B_{j_1,\ldots,j_M}^{{l,P}^{\top}} \notag \\
&+ B_{j_1,\ldots,j_M}^{l,P} \Phi_{j_1,\ldots,j_l}^{\top} \widetilde{K}_{j_1,\ldots,j_l}^l \Phi_{j_1,\ldots,j_l} B_{j_1,\ldots,j_M}^{{l,P}^{\top}} \notag \\
=& B_{j_1,\ldots,j_M}^{l,P} \Phi_{j_1,\ldots,j_l}^{\top} \widehat{K}_{j_1,\ldots,j_l}^l \Phi_{j_1,\ldots,j_l} B_{j_1,\ldots,j_M}^{{l,P}^{\top}}. \label{eq:K_til_K_hat_minus}
\end{align}

Now, define $\Psi_{j_1,\ldots,j_M}^l = V_{j_1,\ldots,j_M}^{l, P} - L_{j_1,\ldots,j_M}^l \Sigma_{j_1,\ldots,j_l}^{-1} L_{j_1,\ldots,j_M}^{l^{\top}}$ for $0 \le l \le M$. From the definition of $V_{j_1,\ldots,j_M}^{l, P}$, we have 
\begin{align*}
\Psi_{j_1,\ldots,j_M}^0 =& C_{M \text{-RA-lp}}(S_{j_1,\ldots,j_M}^P,S_{j_1,\ldots,j_M}^P)\\
&-C_{M \text{-RA-lp}}(S_{j_1,\ldots,j_M}^P,S_0) \left\{ C_{M \text{-RA-lp}}(S_0,S_0) + \tau^2 \bm{\mbox{I}}_n \right\}^{-1} C_{M \text{-RA-lp}}(S_{j_1,\ldots,j_M}^P,S_0)^{\top}.
\end{align*}
Furthermore, from \eqref{eq:V_def}, \eqref{eq:L_Sigma_L}, and \eqref{eq:K_til_K_hat_minus},  
\begin{align}
\Psi_{j_1,\ldots,j_M}^l 
= \Psi_{j_1,\ldots,j_M}^{l+1}+ \widetilde{B}_{j_1,\ldots,j_M}^{l+1, l} \Phi_{j_1,\ldots,j_l}^{\top} \widetilde{K}_{j_1,\ldots,j_l}^l \Phi_{j_1,\ldots,j_l} \widetilde{B}_{j_1,\ldots,j_M}^{{l+1, l}^{\top}}, \label{eq:Psi_recursive}
\end{align}
for $0 \le l < M$. Thus, by applying \eqref{eq:Psi_recursive} to $\Psi_{j_1,\ldots,j_M}^0$ recursively, it follows that
\begin{align*}
\Psi_{j_1,\ldots,j_M}^0 =& V_{j_1,\ldots,j_M}^{M, P} 
- L_{j_1,\ldots,j_M}^M \Sigma_{j_1,\ldots,j_M}^{-1} L_{j_1,\ldots,j_M}^{M^{\top}}\\
& + \sum_{k=0}^{M-1} \widetilde{B}_{j_1,\ldots,j_M}^{k+1, k} \Phi_{j_1,\ldots,j_k}^{\top} \widetilde{K}_{j_1,\ldots,j_k}^k \Phi_{j_1,\ldots,j_k} \widetilde{B}_{j_1,\ldots,j_M}^{{k+1, k}^{\top}}.
\end{align*}

Lastly, we will derive the calculation method of the matrices required to obtain $\bm{\mu}_{j_1,\ldots,j_M}^0$ and $\Psi_{j_1,\ldots,j_M}^0$. We define $\widetilde{B}_{j_1,\ldots,j_M}^{l, k} = B_{j_1,\ldots,j_M}^{k, P} - L_{j_1,\ldots,j_M}^{l} \Sigma_{j_1,\ldots,j_l}^{-1} B_{j_1,\ldots,j_l}^k$ for $0 \le k < l \le M$. For $0 \le k<l < M$, by using \eqref{eq:L_def}, \eqref{eq:relation_L_til_L_B},  \eqref{eq:relation_K_til_K_hat}, and \eqref{eq:Sigma_inv_another_expression},
\begin{align*}
\widetilde{B}_{j_1,\ldots,j_M}^{l, k} =& B_{j_1,\ldots,j_M}^{k, P} 
- B_{j_1,\ldots,j_M}^{l,P} \Phi_{j_1,\ldots,j_l}^{\top} \widehat{K}_{j_1,\ldots,j_l}^l \Phi_{j_1,\ldots,j_l} B_{j_1,\ldots,j_l}^{l^{\top}} \Sigma_{j_1,\ldots,j_l}^{-1} B_{j_1,\ldots,j_l}^k \\
&-\widetilde{L}_{j_1,\ldots,j_M}^{l} \Sigma_{j_1,\ldots,j_l}^{-1} B_{j_1,\ldots,j_l}^k \\
=& B_{j_1,\ldots,j_M}^{k,P} 
- B_{j_1,\ldots,j_M}^{l,P} \Phi_{j_1,\ldots,j_l}^{\top} \widetilde{K}_{j_1,\ldots,j_l}^l \Phi_{j_1,\ldots,j_l} B_{j_1,\ldots,j_l}^{l^{\top}} V_{j_1,\ldots,j_l}^{-1} B_{j_1,\ldots,j_l}^k \\
&-\widetilde{L}_{j_1,\ldots,j_M}^{l} V_{j_1,\ldots,j_l}^{-1} B_{j_1,\ldots,j_l}^k \\
&+ \widetilde{L}_{j_1,\ldots,j_M}^{l} V_{j_1,\ldots,j_l}^{-1} B_{j_1,\ldots,j_l}^l \Phi_{j_1,\ldots,j_l}^{\top} \widetilde{K}_{j_1,\ldots,j_l}^l \Phi_{j_1,\ldots,j_l} B_{j_1,\ldots,j_l}^{l^{\top}} V_{j_1,\ldots,j_l}^{-1} B_{j_1,\ldots,j_l}^k \\
=& B_{j_1,\ldots,j_M}^{k,P} 
- B_{j_1,\ldots,j_M}^{l,P} \Phi_{j_1,\ldots,j_l}^{\top} \widetilde{K}_{j_1,\ldots,j_l}^l \Phi_{j_1,\ldots,j_l} A_{j_1,\ldots,j_l}^{l,k} \\
&-L_{j_1,\ldots,j_M}^{l+1} \Sigma_{j_1,\ldots,j_{l+1}}^{-1} B_{j_1,\ldots,j_{l+1}}^k \\
&+ L_{j_1,\ldots,j_M}^{l+1} \Sigma_{j_1,\ldots,j_{l+1}}^{-1} B_{j_1,\ldots,j_{l+1}}^l \Phi_{j_1,\ldots,j_l}^{\top} \widetilde{K}_{j_1,\ldots,j_l}^l \Phi_{j_1,\ldots,j_l} A_{j_1,\ldots,j_l}^{l,k} \\
=& \widetilde{B}_{j_1,\ldots,j_M}^{l+1, k} - \widetilde{B}_{j_1,\ldots,j_M}^{l+1, l} \Phi_{j_1,\ldots,j_l}^{\top} \widetilde{K}_{j_1,\ldots,j_l}^l \Phi_{j_1,\ldots,j_l} A_{j_1,\ldots,j_l}^{l,k}.
\end{align*}
Next, for $1 \le l \le M$, it follows from \eqref{eq:cov_resolution_m}, \eqref{eq:lp_resolution_m}, and \eqref{eq:cov_resolution_M} that 
\begin{align}
\label{eq:B_P_recursive}
B_{j_1,\ldots,j_M}^{l,P} =& C_l (S_{j_1,\ldots,j_M}^P, Q_{j_1,\ldots,j_l}) \notag \\
=& C_{l-1} (S_{j_1,\ldots,j_M}^P, Q_{j_1,\ldots,j_l}) - B_{j_1,\ldots,j_M}^{l-1,P} \Phi_{j_1,\ldots,j_{l-1}}^{\top} \widehat{K}_{j_1,\ldots,j_{l-1}}^{l-1} \Phi_{j_1,\ldots,j_{l-1}} K_{j_1,\ldots,j_l}^{l-1}.
\end{align}
By using \eqref{eq:B_P_recursive} for $B_{j_1,\ldots,j_M}^{l,P}$ recursively, 
\begin{align*}
B_{j_1,\ldots,j_M}^{l,P} = C_0(S_{j_1,\ldots,j_M}^P,Q_{j_1,\ldots,j_l})-\sum_{k=0}^{l-1} B_{j_1,\ldots,j_M}^{k, P} \Phi_{j_1,\ldots,j_k}^{\top}
\widehat{K}_{j_1,\ldots,j_k}^k \Phi_{j_1,\ldots,j_k} K_{j_1,\ldots,j_l}^k.
\end{align*}
Note that $B_{j_1,\ldots,j_M}^{M,P} = C_M (S_{j_1,\ldots,j_M}^P, S_{j_1,\ldots,j_M})$ in Step 2 of Algorithm \ref{alg:pred_dist_m-ra-lp} because $Q_{j_1,\ldots,j_M}=S_{j_1,\ldots,j_M}$. Similarly, since
\begin{align*}
C_M (S_{j_1,\ldots,j_M}^P, S_{j_1,\ldots,j_M}^P) =& C_{M-1} (S_{j_1,\ldots,j_M}^P, S_{j_1,\ldots,j_M}^P)\\ 
&- B_{j_1,\ldots,j_M}^{M-1,P} \Phi_{j_1,\ldots,j_{M-1}}^{\top} \widehat{K}_{j_1,\ldots,j_{M-1}}^{M-1} \Phi_{j_1,\ldots,j_{M-1}} B_{j_1,\ldots,j_M}^{{M-1,P}^{\top}}
\end{align*}
from \eqref{eq:lp_resolution_m} and \eqref{eq:cov_resolution_M}, we obtain
\begin{align*}
C_M(S_{j_1,\ldots,j_M}^P, S_{j_1,\ldots,j_M}^P) =& C_0(S_{j_1,\ldots,j_M}^P,S_{j_1,\ldots,j_M}^P)\\
&-\sum_{k=0}^{M-1} B_{j_1,\ldots,j_M}^{k, P} \Phi_{j_1,\ldots,j_k}^{\top} \widehat{K}_{j_1,\ldots,j_k}^k \Phi_{j_1,\ldots,j_k} B_{j_1,\ldots,j_M}^{{k, P}^{\top}}.
\end{align*}

\def\thesection{Appendix \Alph{section}}
\section{Distributed Computing}
\def\thesection{\Alph{section}}
\label{append:distributed_computing}

It is assumed that we have nodes $\mathcal{N}_{j_1, \ldots,j_m}$ ($m = 0,\ldots, M$, $1 \le j_1 \le J_1, \ldots, 1 \le j_{M} \le J_{M}$) with a tree-like structure where $\mathcal{N}_0$ represents a root node, and children of $\mathcal{N}_{j^{\prime}_1, \ldots,j^{\prime}_m}$ are $\mathcal{N}_{j^{\prime}_1, \ldots,j^{\prime}_m,i}$ $(i=1,\ldots, J_{m+1})$ for a fixed index $(j^{\prime}_1, \ldots,j^{\prime}_m)$ ($m=0,\ldots,M-1$). It is left to a future study to investigate how much the parallelization reduces the computational time beyond the cost of the communication. 

\subsection{A parallel version of Algorithm \ref{alg:log-lik_m-ra-lp}}
\label{subappend:parallel_version_alg_log-lik}

The following algorithm enables us to calculate some quantities of Algorithm \ref{alg:log-lik_m-ra-lp} in parallel.

\begin{algorithm}[A parallel version of Algorithm \ref{alg:log-lik_m-ra-lp}]
\label{alg:log-lik_m-ra-lp_parallel}
Given $M > 1$, $D_{j_1,\ldots,j_m}$ ($m = 1,\ldots, M$, $1 \le j_1 \le J_1, \ldots, 1 \le j_M \le J_M$), $Q_{j_1,\ldots,j_m}$ ($m = 0,\ldots, M-1$, $1 \le j_1 \le J_1, \ldots, 1 \le j_{M-1} \le J_{M-1}$),  and $\gamma > 0$, find $d_0 = \log \left[ \mbox{det} \left\{ C_{M \text{-RA-lp}}(S_0,S_0) + \tau^2 \bm{\mbox{I}}_n \right\} \right]$ and $u_0 = \bm{Z}(S_0)^{\top} \left\{ C_{M \text{-RA-lp}}(S_0,S_0) + \tau^2 \bm{\mbox{I}}_n \right\}^{-1} \bm{Z}(S_0)$.

\bigskip
\noindent
\textit{Step} 1. Conduct Step 1 in Algorithm \ref{alg:log-lik_m-ra-lp}.

\noindent
\textit{Step} 2. In each node $\mathcal{N}_{j_1, \ldots,j_M}$, calculate 
\begin{align*}
\widetilde{A}_{j_1,\ldots,j_M}^{k,l} =& B_{j_1,\ldots,j_M}^{k^{\top}} \Sigma_{j_1,\ldots,j_M}^{-1} B_{j_1,\ldots,j_M}^l, \quad 0 \le k \le l < M,\\
\widetilde{\omega}_{j_1,\ldots,j_M}^{k} =& B_{j_1,\ldots,j_M}^{k^{\top}} \Sigma_{j_1,\ldots,j_M}^{-1} \bm{Z}(S_{j_1,\ldots,j_M}), \quad 0 \le k < M,\\
d_{j_1,\ldots,j_M} =& \log \left\{ \mbox{det} \left( \Sigma_{j_1,\ldots,j_M} \right) \right\},\\
u_{j_1,\ldots,j_M} =& \bm{Z}(S_{j_1,\ldots,j_M})^{\top}\Sigma_{j_1,\ldots,j_M}^{-1} \bm{Z}(S_{j_1,\ldots,j_M}).
\end{align*}
Send $\widetilde{A}_{j_1,\ldots,j_M}^{k,l}$, $\widetilde{\omega}_{j_1,\ldots,j_M}^{k}$, $d_{j_1,\ldots,j_M}$, and $u_{j_1,\ldots,j_M}$ to its parent, that is, $\mathcal{N}_{j_1, \ldots,j_{M-1}}$.

\noindent
\textit{Step} 3. In each node $\mathcal{N}_{j_1, \ldots,j_m}$ ($m=1,\ldots,M-1$), calculate
\begin{align*}
A_{j_1,\ldots,j_m}^{k,l} =& \sum_{j_{m+1}=1}^{J_{m+1}} \widetilde{A}_{j_1,\ldots,j_m,j_{m+1}}^{k,l}, \quad 0 \le k \le l \le m,\\
\widetilde{K}_{j_1,\ldots,j_m}^m =& \left( \widehat{K}_{j_1,\ldots,j_m}^{m^{-1}} + \Phi_{j_1,\ldots,j_m} A_{j_1,\ldots,j_m}^{m,m} \Phi_{j_1,\ldots,j_m}^{\top} \right)^{-1},\\
\widetilde{A}_{j_1,\ldots,j_m}^{k,l} =& A_{j_1,\ldots,j_m}^{k,l} - A_{j_1,\ldots,j_m}^{k,m} \Phi_{j_1,\ldots,j_m}^{\top} \widetilde{K}_{j_1,\ldots,j_m}^m \Phi_{j_1,\ldots,j_m} A_{j_1,\ldots,j_m}^{m,l}, \quad 0 \le k \le l < m,\\
\omega_{j_1,\ldots,j_m}^k =& \sum_{j_{m+1}=1}^{J_{m+1}} \widetilde{\omega}_{j_1,\ldots,j_m,j_{m+1}}^k, \quad 0 \le k \le m,\\
\widetilde{\omega}_{j_1,\ldots,j_m}^k =& \omega_{j_1,\ldots,j_m}^k - A_{j_1,\ldots,j_m}^{k,m} \Phi_{j_1,\ldots,j_m}^{\top} \widetilde{K}_{j_1,\ldots,j_m}^m \Phi_{j_1,\ldots,j_m} \omega_{j_1,\ldots,j_m}^m, \quad 0 \le k < m,\\
d_{j_1,\ldots,j_m} =& -\log \left\{ \mbox{det} \left( \widetilde{K}_{j_1,\ldots,j_m}^m \right) \right\} + \log \left\{ \mbox{det} \left( \widehat{K}_{j_1,\ldots,j_m}^m \right) \right\} + \sum_{j_{m+1}=1}^{J_{m+1}} d_{j_1,\ldots,j_m,j_{m+1}},\\
u_{j_1,\ldots,j_m} =& - \omega_{j_1,\ldots,j_m}^{m^{\top}} \Phi_{j_1,\ldots,j_m}^{\top} \widetilde{K}_{j_1,\ldots,j_m}^m \Phi_{j_1,\ldots,j_m} \omega_{j_1,\ldots,j_m}^m + \sum_{j_{m+1}=1}^{J_{m+1}} u_{j_1,\ldots,j_m,j_{m+1}}.
\end{align*}
Send $\widetilde{A}_{j_1,\ldots,j_m}^{k,l}$, $\widetilde{\omega}_{j_1,\ldots,j_m}^{k}$, $d_{j_1,\ldots,j_m}$, and $u_{j_1,\ldots,j_m}$ to its parent, that is, $\mathcal{N}_{j_1, \ldots,j_{m-1}}$.

\noindent
\textit{Step} 4. In $\mathcal{N}_0$, calculate
\begin{align*}
A_{0}^{0,0} =& \sum_{j_{1}=1}^{J_{1}} \widetilde{A}_{j_1}^{0,0}, \\
\widetilde{K}_{0}^0 =& \left( \widehat{K}_{0}^{0^{-1}} + \Phi_{0} A_{0}^{0,0} \Phi_{0}^{\top} \right)^{-1},\\
\omega_{0}^0 =& \sum_{j_{1}=1}^{J_{1}} \widetilde{\omega}_{j_1}^0, \\
d_{0} =& -\log \left\{ \mbox{det} \left( \widetilde{K}_{0}^0 \right) \right\} + \log \left\{ \mbox{det} \left( \widehat{K}_{0}^0 \right) \right\} + \sum_{j_{1}=1}^{J_{1}} d_{j_1},\\
u_{0} =& - \omega_{0}^{0^{\top}} \Phi_{0}^{\top} \widetilde{K}_{0}^0 \Phi_{0} \omega_{0}^0 + \sum_{j_{1}=1}^{J_{1}} u_{j_1}.
\end{align*}

\noindent
\textit{Step} 5. Output $d_0$ and $u_0$.
\end{algorithm}

In Steps 2 and 3, the calculations at the nodes for each resolution can be conducted in parallel. Furthermore, if each node $\mathcal{N}_{j_1, \ldots,j_M}$ has multiple cores, $\widetilde{A}_{j_1,\ldots,j_M}^{k,l}$, $\widetilde{\omega}_{j_1,\ldots,j_M}^{k}$, $d_{j_1,\ldots,j_M}$, and $u_{j_1,\ldots,j_M}$ can also be calculated in parallel. Thus, we can conduct the efficient computation, but 
the communication of sending the matrices to the parent is required in Steps 2 and 3.

\subsection{A parallel version of Algorithm \ref{alg:pred_dist_m-ra-lp}}
\label{subappend:parallel_version_alg_pred_dist}
The following algorithm allows us to calculate $\bm{\mu}_{j_1,\ldots,j_M}^0$ and $\Psi_{j_1,\ldots,j_M}^0$ ($1 \le j_1 \le J_1, \ldots, 1 \le j_{M} \le J_{M}$) in parallel.

\begin{algorithm}[A parallel version of Algorithm \ref{alg:pred_dist_m-ra-lp}]
\label{alg:pred_dist_m-ra-lp_parallel}
Given $M > 1$, $D_{j_1,\ldots,j_m}$ ($m = 1,\ldots, M$, $1 \le j_1 \le J_1, \ldots, 1 \le j_M \le J_M$), $Q_{j_1,\ldots,j_m}$ ($m = 0,\ldots, M-1$, $1 \le j_1 \le J_1, \ldots, 1 \le j_{M-1} \le J_{M-1}$), $\gamma > 0$, and $S_{j_1,\ldots,j_M}^P$ ($1 \le j_1 \le J_1, \ldots, 1 \le j_{M} \le J_{M}$), find $\bm{\mu}_{j_1,\ldots,j_M}^0 = C_{M \text{-RA-lp}}(S_{j_1,\ldots,j_M}^P,S_0) \left\{ C_{M \text{-RA-lp}}(S_0,S_0) + \tau^2 \bm{\mbox{I}}_n \right\}^{-1} \bm{Z}(S_0)$ and $\Psi_{j_1,\ldots,j_M}^0 =\\ 
C_{M \text{-RA-lp}}(S_{j_1,\ldots,j_M}^P,S_{j_1,\ldots,j_M}^P)-C_{M \text{-RA-lp}}(S_{j_1,\ldots,j_M}^P,S_0) \left\{ C_{M \text{-RA-lp}}(S_0,S_0) + \tau^2 \bm{\mbox{I}}_n \right\}^{-1} \\
\times C_{M \text{-RA-lp}}(S_{j_1,\ldots,j_M}^P,S_0)^{\top}$ ($1 \le j_1 \le J_1, \ldots, 1 \le j_{M} \le J_{M}$).

\bigskip
\noindent
\textit{Step} 1. Conduct Step 1 in Algorithm \ref{alg:log-lik_m-ra-lp}. 

\par
\noindent
\textit{Step} 2. For the indices $(j_1,\ldots,j_M)$ ($1 \le j_1 \le J_1, \ldots, 1 \le j_{M} \le J_{M}$), calculate $K_{j_1,\ldots,j_l}^k$ ($0 \le k \le l-1$, $1 \le l \le M-1$) and conduct Step 2 in Algorithm \ref{alg:pred_dist_m-ra-lp}.

\noindent
\textit{Step} 3. In each node $\mathcal{N}_{j_1, \ldots,j_M}$, calculate 
\begin{align*}
\widetilde{A}_{j_1,\ldots,j_M}^{k,l} =& B_{j_1,\ldots,j_M}^{k^{\top}} \Sigma_{j_1,\ldots,j_M}^{-1} B_{j_1,\ldots,j_M}^l, \quad 0 \le k \le l < M,\\
\widetilde{\omega}_{j_1,\ldots,j_M}^{k} =& B_{j_1,\ldots,j_M}^{k^{\top}} \Sigma_{j_1,\ldots,j_M}^{-1} \bm{Z}(S_{j_1,\ldots,j_M}), \quad 0 \le k < M,\\
\widetilde{B}_{j_1,\ldots,j_M}^{M, k} =& B_{j_1,\ldots,j_M}^{k, P} - L_{j_1,\ldots,j_M}^M \Sigma_{j_1,\ldots,j_M}^{-1} B_{j_1,\ldots,j_M}^{k}, \quad 0 \le k < M,\\
\Psi_{j_1,\ldots,j_M}^0 =& V_{j_1,\ldots,j_M}^{M, P} - L_{j_1,\ldots,j_M}^M \Sigma_{j_1,\ldots,j_M}^{-1} L_{j_1,\ldots,j_M}^{M^{\top}}, \\ 
\bm{\mu}_{j_1,\ldots,j_M}^0 =& L_{j_1,\ldots,j_M}^M \Sigma_{j_1,\ldots,j_M}^{-1} \bm{Z}(S_{j_1,\ldots,j_M}). 
\end{align*}
Send $\widetilde{A}_{j_1,\ldots,j_M}^{k,l}$, $\widetilde{\omega}_{j_1,\ldots,j_M}^{k}$, $\widetilde{B}_{j_1,\ldots,j_M}^{M, k}$, $\Psi_{j_1,\ldots,j_M}^0$, and $\bm{\mu}_{j_1,\ldots,j_M}^0$ to its parent, that is, $\mathcal{N}_{j_1, \ldots,j_{M-1}}$.

\noindent
\textit{Step} 4. In each node $\mathcal{N}_{j_1, \ldots,j_m}$ ($m=1,\ldots,M-1$), calculate
\begin{align*}
A_{j_1,\ldots,j_m}^{k,l} =& \sum_{j_{m+1}=1}^{J_{m+1}} \widetilde{A}_{j_1,\ldots,j_m,j_{m+1}}^{k,l}, \quad 0 \le k \le l \le m,\\
\widetilde{K}_{j_1,\ldots,j_m}^m =& \left( \widehat{K}_{j_1,\ldots,j_m}^{m^{-1}} + \Phi_{j_1,\ldots,j_m} A_{j_1,\ldots,j_m}^{m,m} \Phi_{j_1,\ldots,j_m}^{\top} \right)^{-1},\\
\widetilde{A}_{j_1,\ldots,j_m}^{k,l} =& A_{j_1,\ldots,j_m}^{k,l} - A_{j_1,\ldots,j_m}^{k,m} \Phi_{j_1,\ldots,j_m}^{\top} \widetilde{K}_{j_1,\ldots,j_m}^m \Phi_{j_1,\ldots,j_m} A_{j_1,\ldots,j_m}^{m,l}, \quad 0 \le k \le l < m,\\
\omega_{j_1,\ldots,j_m}^k =& \sum_{j_{m+1}=1}^{J_{m+1}} \widetilde{\omega}_{j_1,\ldots,j_m,j_{m+1}}^k, \quad 0 \le k \le m,\\
\widetilde{\omega}_{j_1,\ldots,j_m}^k =& \omega_{j_1,\ldots,j_m}^k - A_{j_1,\ldots,j_m}^{k,m} \Phi_{j_1,\ldots,j_m}^{\top} \widetilde{K}_{j_1,\ldots,j_m}^m \Phi_{j_1,\ldots,j_m} \omega_{j_1,\ldots,j_m}^m, \quad 0 \le k < m,\\
\widetilde{B}_{j_1,\ldots,j_M}^{m, k} =& \widetilde{B}_{j_1,\ldots,j_M}^{m+1, k} - \widetilde{B}_{j_1,\ldots,j_M}^{m+1, m} \Phi_{j_1,\ldots,j_m}^{\top} \widetilde{K}_{j_1,\ldots,j_m}^m \Phi_{j_1,\ldots,j_m} A_{j_1,\ldots,j_m}^{m,k}, \quad 0 \le k < m,\\
& \hspace{5.55cm} 1 \le j_i \le J_i,\; i=m+1,\ldots,M,\\
\Psi_{j_1,\ldots,j_M}^0 =& \Psi_{j_1,\ldots,j_M}^0 + \widetilde{B}_{j_1,\ldots,j_M}^{m+1, m} \Phi_{j_1,\ldots,j_m}^{\top} \widetilde{K}_{j_1,\ldots,j_m}^m \Phi_{j_1,\ldots,j_m} \widetilde{B}_{j_1,\ldots,j_M}^{{m+1, m}^{\top}}, \quad 1 \le j_i \le J_i,\\
& \hspace{7.48cm} i=m+1,\ldots,M,\\
\end{align*}
\begin{align*}
\bm{\mu}_{j_1,\ldots,j_M}^0 =& \bm{\mu}_{j_1,\ldots,j_M}^0 + \widetilde{B}_{j_1,\ldots,j_M}^{m+1, m} \Phi_{j_1,\ldots,j_m}^{\top} \widetilde{K}_{j_1,\ldots,j_m}^m \Phi_{j_1,\ldots,j_m} \omega_{j_1,\ldots,j_m}^m, \quad 1 \le j_i \le J_i,\\
& \hspace{7.38cm} i=m+1,\ldots,M.
\end{align*}
Send $\widetilde{A}_{j_1,\ldots,j_m}^{k,l}$, $\widetilde{\omega}_{j_1,\ldots,j_m}^{k}$, $\widetilde{B}_{j_1,\ldots,j_M}^{m, k}$, $\Psi_{j_1,\ldots,j_M}^0$, and $\bm{\mu}_{j_1,\ldots,j_M}^0$ to its parent, that is, $\mathcal{N}_{j_1, \ldots,j_{m-1}}$.

\noindent
\textit{Step} 5. In $\mathcal{N}_0$, calculate
\begin{align*}
A_{0}^{0,0} =& \sum_{j_{1}=1}^{J_{1}} \widetilde{A}_{j_1}^{0,0}, \\
\widetilde{K}_{0}^0 =& \left( \widehat{K}_{0}^{0^{-1}} + \Phi_{0} A_{0}^{0,0} \Phi_{0}^{\top} \right)^{-1},\\
\omega_{0}^0 =& \sum_{j_{1}=1}^{J_{1}} \widetilde{\omega}_{j_1}^0, \\
\Psi_{j_1,\ldots,j_M}^0 =& \Psi_{j_1,\ldots,j_M}^0 + \widetilde{B}_{j_1,\ldots,j_M}^{1, 0} \Phi_{0}^{\top} \widetilde{K}_{0}^0 \Phi_{0} \widetilde{B}_{j_1,\ldots,j_M}^{{1, 0}^{\top}}, \quad 1 \le j_i \le J_i,\; i=1,\ldots,M,\\
\bm{\mu}_{j_1,\ldots,j_M}^0 =& \bm{\mu}_{j_1,\ldots,j_M}^0 + \widetilde{B}_{j_1,\ldots,j_M}^{1, 0} \Phi_{0}^{\top} \widetilde{K}_{0}^0 \Phi_{0} \omega_{0}^0, \quad 1 \le j_i \le J_i,\; i=1,\ldots,M.
\end{align*}

\noindent
\textit{Step} 6. Output $\bm{\mu}_{j_1,\ldots,j_M}^0$ and $\Psi_{j_1,\ldots,j_M}^0$ ($1 \le j_i \le J_i$, $i=1,\ldots,M$).
\end{algorithm}

Similar to Algorithm \ref{alg:log-lik_m-ra-lp_parallel}, in addition to Steps 3 and 4, quantities in each node $\mathcal{N}_{j_1, \ldots,j_M}$ of Step 3 can also be parallelized. 
Additionally, unlike Algorithm \ref{alg:pred_dist_m-ra-lp}, Algorithm \ref{alg:pred_dist_m-ra-lp_parallel} can calculate $\bm{\mu}_{j_1,\ldots,j_M}^0$ and $\Psi_{j_1,\ldots,j_M}^0$ ($1 \le j_1 \le J_1, \ldots, 1 \le j_{M} \le J_{M}$) in parallel. However, if the size of the prediction locations is large, the communication of sending the matrices to the parent in Steps 3 and 4 of Algorithm \ref{alg:pred_dist_m-ra-lp_parallel} is likely to cause a nonnegligible  computational burden.

%
%
\section*{Acknowledgments}
The author gratefully acknowledges the helpful comments and suggestions from the 
editor and two anonymous referees that refined the manuscript. 
This work is supported financially by JSPS KAKENHI Grant Number 18K12755.

%
%

%
%

\bibliographystyle{apalike}
\bibliography{m-ra-lp_paper}   


\end{document}